\newtheorem{thm}{Theorem}[section]
\newtheorem{cor}[thm]{Corollary}
\newtheorem{prop}[thm]{Proposition}
\newtheorem{prob}[thm]{Problem}
\theoremstyle{definition}
\newtheorem{defn}[thm]{Definition}
\newtheorem{remark}[thm]{Remark}
\DeclareMathOperator{\re}{Re}
\DeclareMathOperator{\rk}{rank}
\DeclareMathOperator{\Perm}{Perm}
\DeclareMathOperator{\Tr}{Tr}
\DeclareMathOperator{\Cut}{Cut}
\DeclareMathOperator{\DCut}{DCut}
\DeclareMathOperator{\conv}{conv}
\newcommand{\bb}[1]{\mathbb{#1}}
\newcommand{\cl}[1]{\mathcal{#1}}
\begin{document}

\title[]{Unique Games and Games Based on Groups}

\author[V.~I.~Paulsen]{Vern I.~Paulsen}
\address{Institute for Quantum Computing and Department of Pure Mathematics, University of Waterloo,
Waterloo, ON, Canada  N2L 3G1}
\email{vpaulsen@uwaterloo.ca}
\author[R.~H.~Levene]{Rupert H. Levene}
\address{School of Mathematics \&\ Statistics and Centre for Quantum Engineering, Science \&\ Technology, University College Dublin, Dublin 4, Ireland}
\email{rupert.levene@ucd.ie}
\thanks{The author(s) would like to thank the Isaac Newton Institute for Mathematical Sciences, Cambridge, for support and hospitality during the programme {\it Quantum information, quantum groups and operator algebras} where work on this paper was undertaken. This work was supported by EPSRC grant no EP/K032208/1.}

\begin{abstract} We study unique games and estimate some of their values. We prove that if a unique game has a quantum-assisted value close to 1, then it must have a perfect deterministic strategy. We introduce a family of unique games based on groups that generalize XOR games, and show that when the group is the cyclic group of order 3, then these games correspond to a 3-labelling problem for directed graphs.
\end{abstract}
\maketitle

\section{Introduction}

 This paper is concerned with a certain family of two person games, called {\it unique games}, and a special subclass of unique games, which we call {\it games based on groups}. When the group is the cyclic group of order two, then these group based games correspond to the well studied family of XOR games. We show that a certain labelling problem for directed graphs corresponds to a family of group based games for the cyclic group of order 3.
 
 We are especially interested in the case that there is a probability density on the inputs of these games, and the players are allowed to use probabilistic strategies. In these cases the question always arises of whether or not a quantum-assisted strategy can yield a higher expected value of winning the game than any classical strategy. In the case of the labelling problem for directed graphs, we show that the classical value of winning can be interpreted as a directed analogue of the cut value for undirected graphs and that often the quantum-assisted value of this directed cut number is higher.
 
 Intuitively, in these games the two players, who are often called Alice and Bob, are playing cooperatively and attempting to give correct pairs of answers to pairs of questions posed by a third party often called the {\it Referee} or {\it Verifier}.
 Whether the pair of answers returned by the players is satisfactory or not depends not just on the individual answers but on the 4-tuple consisting of the question-answer pair.

 Such a game is described by two {\it input} sets $X, Y$, often thought of as sets of questions for Alice and Bob, respectively, two {\it output} sets $A, B$, often thought of as Alice and Bob's respective answer sets,
 and a function, often called the {\it rules} or {\it verification function},
 \[ V: X \times Y \times A \times B \to \{ 0, 1\},\]
 where
 \[W:= \{ (x,y,a,b) : V(x,y,a,b) = 1 \},\]
 is the set of {\it correct} or {\it winning} 4-tuples and
 \[ L:= \{ (x,y,a,b): V(x,y,a,b) =0 \},\]
 is the set of {\it incorrect} or {\it losing} 4-tuples.

 Thus, a game $\cl G$ is specified by $(X, Y, A, B, V)$.
 Before the game begins Alice and Bob know the four sets and the function $V$.
 Even though Alice and Bob are not allowed to communicate during the game they are allowed to communicate before the
 game and decide on some type of strategy.

 A two person game is called {\it synchronous} provided that $X=Y, \, A=B$ and the rule function satisfies the condition that $V(x,x,a,b) =0$ whenever $a \ne b$. Thus, if both players are ever asked the same question, then they must give the same reply in order to win.  A synchronous game is called {\it symmetric} provided that $V(x,y,a,b) = V(y,x,b,a)$.

A  game is called a {\it unique game} if for each triple $x \in X, y \in Y, a \in A$ there is a unique $b \in B$ such that $V(x,y,a,b) =1$ and, in addition, for each triple $x \in X, y \in Y, b \in B$ there is a unique $a \in A$ such that $V(x,y,a,b) =1$. Alternatively, since these hypotheses imply that $|A|=|B|$, we may assume that $A=B$ and that there is a map $f: X \times Y \to \Perm(A),$ where $\Perm(A)$ denotes the group of permutations of $A$, such that
\[ V(x,y,a,b) = 1 \iff a = f(x,y)b.\]
It is this characterization that we shall most often use for describing unique games and, consequently we will often describe a unique game as $\cl G=(X, Y, A, f)$, with the understanding that $A=B$,  $f: X \times Y \to \Perm(A)$ and that the verifier is the function given as above.

It is not hard to see that a unique game described in this manner is synchronous if and only if $X=Y$ and $f(x,x)$ is the identity permutation for all $x$, and is symmetric if and only if, in addition, $f(y,x) = f(x,y)^{-1}$.

A special type of unique game that we shall be interested in is when the set $A= \Omega$, with $\Omega$ a finite group and the only permutations that we allow are the ones that come from the left action of the group upon itself. That is, the verifier has the form $f:X \times Y \to \Omega$ and
\[ V(x,y,a,b) =1 \iff ab^{-1} = f(x,y).\] We will call such a game a {\bf group based game}.

One famous example of a group based game is the CHSH game which is the case that $X=Y= \Omega = \bb Z_2$, the binary field,
and
\[ V(x,y,a,b) =1 \iff a -b = xy:= f(x,y).\]

We will see later that when $\Omega$ is the cyclic group of order 3, then synchronous, symmetric group based games correspond to a certain type of labelling game for directed graphs.

\section{Correlations and Values of Games}

 When we want to talk about the probability of winning  a game $\cl G= (X, Y, A, B, V)$ we also need to specify a probability density on input pairs, i.e., a function $\pi: I_A \times I_B \to [0,1]$ such that
 \[ \sum_{x \in I_A, y \in I_B} \pi(x,y) =1.\]
 For games with densities, Alice and Bob also know the density before the start of the game.
 We write $(\cl G, \pi)$ to denote a game with density.

 A {\it deterministic strategy} for a game is a pair of functions $h_A: X \to A, \,\, h_B:Y \to B$ such that whenever Alice and Bob receive input pair $(x,y)$ they respond with output pair $(f(x), g(y))$.

 Given a game with a density $(\cl G, \pi)$ and a deterministic strategy as above, the probability of winning the game is
 \[ \sum_{x,y} \pi(x,y) V(x, y, h_A(x), h_B(y))= \sum_{x,y\colon(x,y,h_A(x),h_B(y)) \in W} \pi(x,y).\]

 The {\it classical value} or {\it deterministic value} of $(\cl G, \pi)$ is defined to be the supremum of this winning probability over all possible deterministic strategies, i.e., pairs of functions $h_A, h_B$. Computing this value often requires searching over all pairs of functions $h_A, h_B$ and so can be computationally difficult. %

 The classical value of a game is generally denoted $\omega(\cl G, \pi)$, but we shall write $\omega_{det}(\cl G, \pi)$, since we will be examining several different versions of the value of a game and want a notation that distinguishes them.

Suppose now that instead of a deterministic strategy, Alice and Bob return random answers to their question pairs. If we let
$p(a,b|x,y)$ denote the probability that they return the answer pair $(a,b)$ when given the question pair $(x,y)$, then over many rounds of the game the expected probability that they will win $(\cl G, \pi)$ is given by
\[ \bb E_{\cl G, \pi}(p) := \sum_{(x,y,a,b) \in W} \pi(x,y) p(a,b|x,y).\]

Given a game $\cl G$ a density $p(a,b|x,y)$ is called {\it perfect} provided the probability that it returns a wrong answer is 0, that is, provided that
\[ V(x,y,a,b) =0 \implies p(a,b|x,y) =0.\]
Note that if a density is perfect, then for any density $\pi$ on inputs, one has $\bb E_{\cl G, \pi}(p) =1.$

The various values of a game that are studied are defined by taking the supremum of $\bb E_{\cl G, \pi}(p)$ over certain sets of conditional probabilities.  These sets of conditional probabilities are called {\it local, quantum, quantum spatial, quantum approximate, quantum commuting,} and {\it nonsignalling} and are denoted,
\[ C_t(n_A, n_B, k_A, k_B),\]
with $t = loc, q, qs, qa, qc, ns$, respectively, where $n_A=|X|$ and $n_B=|Y|$ indicate the number of Alice's and Bob's inputs and $k_A=|A|$ and $k_B=|B|$ indicate the number of their outputs.
Usually, for the sake of convenience we will omit $n_A, n_B, k_A, k_B$ in our notations.

These sets are all convex and they satisfy,
\[ C_{loc} \subsetneq C_q \subsetneq C_{qs} \subsetneq C_q^-= C_{qs}^- = C_{qa} \subsetneq C_{qc} \subsetneq C_{ns},\]
where the notation $C_q^-$ and $C_{qs}^-$ indicates the closure of the set viewed as a subset of $\bb R^{n_An_Bk_Ak_B}$.

We refer the reader to \cite{HMNPR, HP} for the precise definitions of these sets and for references to the papers that show that the containments are all strict.

The corresponding values of a game with density $(\cl G, \pi)$ are obtained by taking the supremum over each of these sets and these suprema are denoted by, respectively, $\omega_t(\cl G, \pi)$ with $t= loc, q, qs, qa, qc, ns$.
Because of the closure properties,  $\omega_q(\cl G, \pi) = \omega_{qs}(\cl G, \pi) = \omega_{qa}(\cl G, \pi)$. Moreover because of the containments,
\[ \omega_{loc}(\cl G, \pi) \le \omega_q(\cl G, \pi) \le \omega_{qc}(\cl G, \pi) \le \omega_{ns}(\cl G, \pi).\]
It is also known that $\omega_{loc}(\cl G, \pi) = \omega_{det}(\cl G, \pi)$, since the extreme points of the set $C_{loc}$ correspond to densities of the form
\[ p(a,b|x,y) = \begin{cases} 1 & a=h_A(x), b= h_B(y).\\ 0 & \text{else}, \end{cases} \]
for some pair of functions.

In the case that $X=Y$ and $A=B$, we call a density {\it synchronous} provided that
\[ p(a,b| x,x) =0, \,\, \forall a \ne b, \, \forall x,\]
and we call a density {\it symmetric} if
\[ p(a,b|x,y)=p(b,a|y,x),\,\,\forall x,y,a,b.\]
For $t= loc, q, qs, qa, qc, ns$ we let $C_t^s(n,n,k,k) \subseteq C_t(n,n,k,k)$ denote the subset of synchronous densities, and often abbreviate this as $C_t^s$.
Moreover, if $\cl G$ is a game with $X=Y, \, A=B$ we let $\omega_t^s(\cl G, \pi)$ denote the supremum over all synchronous densities of the various types.

It was shown in \cite{KPS} that $(C_q^s)^- = C_{qa}^s$, so for any such game, we also have that
\[ \omega^s_q(\cl G, \pi) = \omega_{qs}^s(\cl G, \pi) = \omega_{qa}^s(\cl G, \pi).\]
In addition, it is readily seen that
\[ \omega^s_{loc}(\cl G, \pi) = \sup_h \{ \sum_{x,y} \pi(x,y) V(x, y,h(x), h(y)) \mid h:X \to A \}.\]

 Synchronous densities are known to arise from various types of traces, see \cite{KPS, HMNPR, HP} for precise statements.

\begin{remark} Every unique game has a perfect nonsignalling strategy. If $\cl G=(X,Y,A,f)$ is a unique game and we set
 \[ p(a,b|x,y) = \begin{cases} \frac{1}{|A|}, & a=f(x,y)b \\ 0 & else. \end{cases}\]
 Then this is a non-signalling density with $p_A(a|x) = p_B(b|y) = \frac{1}{|A|}$ and is a perfect strategy for $\cl G$.

 Moreover, if $\cl G$ is a synchronous unique game, then this density is synchronous, and if $\cl G$ is also symmetric, then this density is symmetric and synchronous.
 \end{remark}

There are two other sets of conditional probability densities that we wish to consider.

The notation for the first set that we shall use was introduced in \cite{PT}. This set of densities also appears elsewhere with different names and is related to the NPA-hierarchy. Depending on how the NPA-hierarchy is numbered, these are sometimes said to arise from the 1/2-level.

A density $p(a,b|x,y)$ is called a {\it vect-density}, if there is a Hilbert space $\cl H$, and sets of vectors $\{ v_{x,a}: x \in X, a \in A \}, \, \{ w_{y,b}: y \in Y, b \in B \}$ in $\cl H$ and a unit vector $\eta \in \cl H$ such that:
\begin{itemize}
\item $v_{x,a} \perp v_{x,a'}, \forall x, \forall a \ne a'$,
\item $w_{y,b} \perp w_{y, b'}, \forall y, \forall b \ne b'$,
\item $\sum_a v_{x,a} = \sum_b w_{y,b} = \eta, \forall x,y$,
\item $p(a,b|x,y) = \langle v_{x,a} | w_{y,b} \rangle \ge 0, \forall x,y,a,b$.\end{itemize}
The set of all such densities is denoted $C_{vect}$ and it is known to be a convex set satisfying
\[ C_{qc} \subset C_{vect} \subset C_{ns}, \] see \cite{PT}.
 Note that the condition that the inner products are all non-negative implies that without loss of generality, $\cl H$ can be assumed to be a real Hilbert space with a dimension that is bounded by a function of $n_A, k_A, n_B, k_B$.

In the case that $X=Y, \, A=B$, we show below that $p \in C^s_{vect}$ if and only if $v_{x,a} = w_{x,a}, \, \forall x,a$.

 We let $\omega_{vect}(\cl G, \pi)$ denote the supremum of $\bb E_{\cl G, \pi}(p)$ over all $p \in C_{vect}$, so that we have
 \[ \omega_{qc}(\cl G, \pi) \le \omega_{vect}(\cl G, \pi) \le \omega_{ns}(\cl G, \pi).\]
 The quantity $\omega_{vect}(\cl G, \pi)$ is often useful for giving a computable upper bound on $\omega_{qc}(\cl G, \pi)$.
 The quantity $\omega_{vect}^s(\cl G, \pi)$ is defined similarly.

 The next set of densities that we wish to define is useful for giving lower bounds on $\omega_q(\cl G, \pi)$.
 Recall that a density $p(a,b|x,y)$ is in $C_q$ if and only if there are finite dimensional Hilbert spaces $\cl H_A, \cl H_B$, projections $\{ E_{x,a} : x \in X, a \in A \} \subseteq B(\cl H_A)$ and $\{ F_{y,b}: y \in Y, b \in B \} \subseteq B(\cl H_B)$ with $ \sum_a E_{x,a} = I_{\cl H_A}, \forall x$ and $\sum_b F_{y,b} = I_{\cl H_B}, \forall y$ and a unit vector $\eta \in \cl H_A \otimes \cl H_B$ such that
 \[ p(a,b| x,y) = \langle E_{x,a} \otimes F_{y,b} \eta | \eta \rangle.\]
 We say that a density $p \in C_q$ is in $C_{q_1}$ if it is of this form with the additional requirement that
 \[ \rk(E_{x,a}) \le 1 \text{ and } \rk(F_{y,b}) \le 1,  \forall x,y,a,b.\]
 We let $\omega_{q_1}( \cl G, \pi)$(respectively, $\omega_{q_1}^s(\cl G, \pi)$), denote the supremum of $\bb E_{\cl G, \pi}(p)$ over all $p \in C_{q_1}$(respectively, $p \in C_{q_1}^s$).

 Note that the rank condition puts an upper bound on the dimensions of the Hilbert spaces for $q_1$-densities, since
 \[ \dim(\cl H_A) \le |A|, \, \dim(\cl H_B) \le |B|.\]
 Moreover, each of Alice's projections must  be of the form $E_{x,a} = |v_{x,a}\rangle \langle v_{x,a} | = v_{x,a} v_{x,a}^*$ where each of the vectors is either a unit vector or 0, $v_{x,a} \perp v_{x,a'}, \forall a \ne a'$ and the number of non-zero vectors is equal to $\dim(\cl H_A)$, with similar statements for $F_{y,b}$ and finally,
 \[ p(a,b|x,y) = | \langle \eta | v_{x,a} \otimes w_{y,b} \rangle |^2.\]

 Below we record a few facts about the elements of $C_{vect}^s$ and $C_{q_1}^s$.
 
 \begin{thm} Let $p(a,b|x,y) \ge 0, \forall a,b,x,y$.  Then
 \begin{enumerate}
 \item $p \in C_{vect}^s(n,n,k,k)$ if and only if there is a real Hilbert space $\cl H$, a unit vector $\eta \in \cl H$ and for each $1 \le x \le n$ a set of orthogonal vectors $\{ v_{x,a}: 1 \le a \le k \}$ such that $\sum_a v_{x,a} = \eta$ and $p(a,b|x,y) = \langle v_{x,a} | v_{y,b} \rangle$,
 \item  if $p(a,b|x,y)$ is an extreme point of the convex hull of $C_{q_1}^s(n,n,k,k)$, then there is a complex Hilbert space $\cl H$ of dimension $d \le k$ and sets of orthogonal vectors $\{ v_{x,a} : 1 \le a \le k \}$ in $\cl H$ such that $d$ of the vectors are unit norm and the remaining $k-d$ vectors are 0
 with $p(a,b|x,y) =\frac{1}{d} | \langle v_{x,a} | v_{y,b} \rangle |^2.$
 \end{enumerate}
 \end{thm}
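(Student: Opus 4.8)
\emph{Part (1).} The reverse implication is immediate: given $\eta$ and orthogonal families $\{v_{x,a}\}_a$ with $\sum_a v_{x,a}=\eta$, set $w_{y,b}:=v_{y,b}$ and check the four defining conditions of $C_{vect}$ (positivity of the inner products being the standing hypothesis $p\ge 0$); synchronicity holds because $p(a,b\mid x,x)=\langle v_{x,a}\mid v_{x,b}\rangle=0$ for $a\ne b$. For the forward implication, realize $p$ over a real Hilbert space (permitted by the remark preceding the theorem) with vectors $\{v_{x,a}\}$, $\{w_{y,b}\}$ and unit vector $\eta$. Fixing $x$, synchronicity gives $\langle v_{x,a}\mid w_{x,b}\rangle=p(a,b\mid x,x)=0$ for $a\ne b$, so
\[
\sum_a\langle v_{x,a}\mid w_{x,a}\rangle=\sum_{a,b}\langle v_{x,a}\mid w_{x,b}\rangle=\Big\langle\sum_a v_{x,a}\ \Big|\ \sum_b w_{x,b}\Big\rangle=\langle\eta\mid\eta\rangle=1,
\]
while $\sum_a\|v_{x,a}\|^2=\|\eta\|^2=1=\sum_a\|w_{x,a}\|^2$ by orthogonality. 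Hence $\sum_a\|v_{x,a}-w_{x,a}\|^2=0$, so $v_{x,a}=w_{x,a}$ for every $a$, and as $x$ was arbitrary the claim follows.

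\emph{Part (2): normal form.} Since the Hilbert spaces in a $q_1$-representation have dimension at most $k$, the set $C_{q_1}^s$ is compact, hence $\conv(C_{q_1}^s)$ is compact and each of its extreme points lies in $C_{q_1}^s$. Fix such an extreme point $p$, realized by orthonormal families $\{v_{x,a}\}_a\subseteq\cl H_A$ and $\{w_{y,b}\}_b\subseteq\cl H_B$ (each padded with zero vectors) and a unit $\eta\in\cl H_A\otimes\cl H_B$, so $p(a,b\mid x,y)=|\langle v_{x,a}\otimes w_{y,b}\mid\eta\rangle|^2$. Synchronicity annihilates the off-diagonal coordinates of $\eta$ in each product basis $\{v_{x,a}\otimes w_{x,b}\}$, so $\eta=\sum_a c^{(x)}_a\,v_{x,a}\otimes w_{x,a}$ for every $x$; this being a Schmidt decomposition, the vectors $v_{x,a}$ (resp.\ $w_{x,a}$) with $c^{(x)}_a\ne 0$ form an orthonormal basis of the left (resp.\ right) Schmidt support of $\eta$, and all other nonzero $v_{x,a},w_{x,a}$ are orthogonal to those supports. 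Compressing to the Schmidt supports, we may assume $\dim\cl H_A=\dim\cl H_B=r$ (the Schmidt rank), $\eta$ of full rank, and exactly $r$ of the $v_{x,a}$ (equivalently, of the $w_{x,a}$) nonzero for each $x$. Identifying $\eta$ with the conjugate-linear bijection $M\colon\cl H_B\to\cl H_A$ determined by $\langle v\mid Mw\rangle=\langle v\otimes w\mid\eta\rangle$, and polar-decomposing $M=\Theta P$ with $\Theta$ a conjugate-linear unitary and $P>0$ on $\cl H_B$ with $\tr(P^2)=1$, we absorb $\Theta$ into the vectors $v_{x,a}$ (which preserves orthonormality) and reduce to a single complex Hilbert space $\cl H$ with $\dim\cl H=r\le k$, orthonormal bases $\{v_{x,a}\}$ and $\{w_{y,b}\}$, and $p(a,b\mid x,y)=|\langle v_{x,a}\mid Pw_{y,b}\rangle|^2$.

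\emph{Part (2): extremality.} Synchronicity now says $Pw_{x,b}\in\bb C\,v_{x,b}$ for all $x,b$; writing $Pw_{x,b}=\mu^{(x)}_b v_{x,b}$ and using orthonormality of $\{v_{x,b}\}_b$ one gets $\langle w_{x,b}\mid P^2w_{x,b'}\rangle=0$ for $b\ne b'$, so each $w_{x,b}$ is an eigenvector of $P$ and $v_{x,b}$ is a unimodular multiple of it; absorbing phases, $v_{x,a}=w_{x,a}$, so $p(a,b\mid x,y)=|\langle w_{x,a}\mid Pw_{y,b}\rangle|^2$ with every $\{w_{x,a}\}_a$ an eigenbasis of $P$. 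Let $s_1>\dots>s_\ell>0$ be the distinct eigenvalues of $P$, with eigenspaces $\cl H_j$ of dimension $m_j$, so $\sum_j m_j s_j^2=\tr(P^2)=1$. Using $\cl H_j\perp\cl H_{j'}$ one checks that
\[
p(a,b\mid x,y)=\sum_{j=1}^\ell m_j s_j^2\, q^{(j)}(a,b\mid x,y),\qquad q^{(j)}(a,b\mid x,y):=\tfrac1{m_j}\,|\langle w_{x,a}\mid w_{y,b}\rangle|^2
\]
when $w_{x,a},w_{y,b}\in\cl H_j$ and $q^{(j)}(a,b\mid x,y):=0$ otherwise, and that each $q^{(j)}$ lies in $C_{q_1}^s$ (it is of ``maximally entangled'' type in $\cl H_j$ — conjugate Hilbert space, maximally entangled state, rank-one projections — with $m_j$ unit and $k-m_j$ zero vectors per input). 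The $q^{(j)}$ are pairwise distinct, since $q^{(j)}(a,a\mid x,x)$ is supported on the disjoint index set $\{a:w_{x,a}\in\cl H_j\}$; as the weights $m_js_j^2$ are strictly positive and sum to $1$, extremality of $p$ forces $\ell=1$. Then $P=r^{-1/2}I$ and $p(a,b\mid x,y)=\tfrac1r|\langle w_{x,a}\mid w_{y,b}\rangle|^2$, which is the asserted form with $d=r\le k$.

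\emph{Anticipated difficulty.} The crux is the normal form in Part (2): converting the synchronicity constraint into the facts that (i) $\eta$ is simultaneously Schmidt-diagonal in all the product bases $\{v_{x,a}\otimes w_{x,b}\}$, so the compression to the Schmidt supports is clean, and (ii) a single positive operator $P$ is diagonalized by every basis $\{w_{x,a}\}_a$, so that a member of $C^s_{q_1}$ is, up to a conjugate-linear unitary, determined by one family of orthonormal bases together with $P$. Once this is in place, extremality is squeezed out by the eigenspace decomposition of $P$ above; the only point requiring care there is verifying that the pieces $q^{(j)}$ are distinct elements of $C_{q_1}^s$, so that $p$ is genuinely a nontrivial convex combination inside $C_{q_1}^s$ whenever $P$ fails to be scalar.
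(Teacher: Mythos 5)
Your proof is correct. Part~(1) is essentially the paper's argument (synchronicity kills the off-diagonal inner products, the diagonal ones sum to $1$, and comparing with $\sum_a\|v_{x,a}\|^2=\sum_a\|w_{x,a}\|^2=1$ forces $v_{x,a}=w_{x,a}$), just written out in full. Part~(2), however, takes a genuinely different route. The paper cites the trace characterization of synchronous correlations from Paulsen--Severini--Stahlke--Todorov--Winter to write $p(a,b|x,y)=\tau(E_{x,a}E_{y,b})$ for a trace $\tau$ on Alice's algebra, then uses the Wedderburn decomposition of that finite-dimensional algebra: $\tau$ is a convex combination of normalized traces on matrix blocks, which splits $p$ inside $C^s_{q_1}$, and extremality isolates a single block of size $d\le k$ where the rank-one hypothesis yields $\frac1d|\langle v_{x,a}|v_{y,b}\rangle|^2$. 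You instead argue directly at the level of the state: synchronicity makes $\eta$ Schmidt-diagonal in every product basis $\{v_{x,a}\otimes w_{x,b}\}$, compression to the Schmidt supports plus the antilinear polar decomposition reduce everything to one space with a positive operator $P$, synchronicity then forces every Alice basis to diagonalize $P$, and the eigenspace decomposition of $P$ furnishes the convex splitting; extremality forces $P$ scalar, i.e.\ the maximally entangled form with $d$ the Schmidt rank. Your version is self-contained (no appeal to the tracial characterization or to C*-algebra structure theory), and it is more careful on two points the paper glosses over: that extreme points of $\conv(C^s_{q_1})$ actually lie in $C^s_{q_1}$ (via compactness and Milman), and that the convex pieces $q^{(j)}$ are pairwise distinct, so extremality applies cleanly; the paper's route is shorter and exhibits the statement as a special case of the general trace picture for synchronous correlations. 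Both decompositions are compatible: your eigenspaces of $P$ are invariant subspaces for Alice's algebra, refining to the paper's central summands, but the coarser splitting already suffices.
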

 \begin{proof} The first statement follows from the observation that $0=p(a,b|x,x) = \langle v_{x,a} | w_{x,b} \rangle, a \ne b$ and $1= \sum_{a,b} p(a,b|x,x) = \sum_a p(a,a|x,x)$ implies that $v_{x,a} = w_{x,a}$.

 To see the second statement let $p(a,b|x,y)= \langle E_{x,a} \otimes F_{y,b} \eta | \eta \rangle$ be any element of $C^s_{q_1}$, so that the rank of each $E_{x,a}$ is at most 1. By the results of \cite{PSSTW}, if we let $\cl A \subseteq B(\cl H_A)$ denote the algebra generated by by Alice's operators then the functional $\tau: \cl A \to \bb C$ defined by
 \[ \tau(X) = \langle X \otimes I \eta | \eta \rangle,\]
 is a trace on that algebra and $p(a,b|x,y) = \tau(E_{x,a}E_{y,b})$.  Since $\cl H_A$ is finite dimensional, $\cl A$ must be *-isomorphic to a direct sum of matrix algebras and the image of $E_{x,a}$ in each component of this direct sum decomposition is of rank at most 1. On each matrix algebra, there is a unique trace, namely, the ordinary trace, normalized so that the trace of the identity is 1. Thus, $\tau$ is a convex combination of traces on matrix algebras and this direct sum decomposition expresses $p$ as a convex combination of elements of $C^s_{q_1}$. Since $p$ is an extreme point, either there is only one summand or each summand gives rise to a multiple of $p$.
 The result now follows from the observation that since $p \in C_{q_1}$, the restriction of $E_{x,a}$ to each summand is still a projection of rank 0 or 1 and the fact that
 \[\Tr(v_{x,a}v_{x,a}^* v_{y,b} v_{y,b}^*) = |\langle v_{x,a} | v_{y,b} \rangle |^2.\qedhere\]
 \end{proof}

 \begin{cor}\label{cor:q1} Let $\cl G=(X, X, A, A, V)$ be a game and let $\pi$ be a density on inputs.  Then
 \[ \omega^s_{q_1}(\cl G, \pi) = \sup \left\{\frac{1}{d} \sum_{(x,y,a,b) \in W} \pi(x,y)| \langle v_{x,a} | v_{y,b} \rangle |^2\right\},\]
where the supremum is over all $0\le d \le k$ and sets of orthogonal vectors $\{ v_{x,a} : 1 \le a \le k \}$ in $\bb C^d$ such that $d$ of the vectors are unit norm and the remaining $k-d$ vectors are 0.
\end{cor}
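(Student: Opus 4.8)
The plan is to deduce this from the preceding theorem, exploiting that $p \mapsto \bb E_{\cl G, \pi}(p)$ is a linear functional of the coordinates $p(a,b|x,y)$, so that its supremum over any family of densities equals its supremum over the convex hull of that family.

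Write $S$ for the family of densities $p(a,b|x,y) = \frac{1}{d}|\langle v_{x,a} | v_{y,b}\rangle|^2$ appearing in the statement, that is, with $1 \le d \le k$ and, for each $x$, an orthogonal family $\{v_{x,a} : 1 \le a \le k\}$ in $\bb C^d$ of which $d$ vectors have unit norm and $k - d$ are zero; then the right-hand side of the corollary is exactly $\sup\{\bb E_{\cl G, \pi}(p) : p \in S\}$. The first step is to verify $S \subseteq C_{q_1}^s(n,n,k,k)$. Given such vectors, set $E_{x,a} = v_{x,a}v_{x,a}^*$, a projection on $\bb C^d$ of rank $\le 1$ with $\sum_a E_{x,a} = I_{\bb C^d}$; let $F_{y,b}$ be the corresponding complex-conjugate projections on $\bb C^d$ and let $\eta \in \bb C^d \otimes \bb C^d$ be maximally entangled. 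A routine computation gives $\langle E_{x,a} \otimes F_{y,b} \eta | \eta \rangle = \frac{1}{d}|\langle v_{x,a} | v_{y,b}\rangle|^2$; the projections have rank $\le 1$, and orthogonality of the $v_{x,a}$ for fixed $x$ makes this density synchronous, so it lies in $C_{q_1}^s$. Consequently $\sup\{\bb E_{\cl G, \pi}(p) : p \in S\} \le \omega_{q_1}^s(\cl G, \pi)$.

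For the reverse inequality I would reuse the argument in the proof of the preceding theorem. There an arbitrary $p \in C_{q_1}^s$ is written as $p(a,b|x,y) = \tau(E_{x,a}E_{y,b})$, where $\tau$ is a trace on the finite-dimensional $*$-algebra $\cl A \subseteq B(\cl H_A)$ generated by Alice's rank $\le 1$ projections. Decomposing $\cl A \cong \bigoplus_i M_{d_i}$ and writing the corresponding convex decomposition $\tau = \sum_i \lambda_i \tau_i$ of $\tau$ into the normalized traces $\tau_i$ on the matrix blocks expresses $p = \sum_i \lambda_i p_i$, where $p_i$ is obtained by restricting Alice's projections to the $i$-th block. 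As noted at the end of that proof, the restricted projections are again of rank $\le 1$ and sum to the identity of $M_{d_i}$, one has $d_i \le \dim \cl H_A \le k$, and the normalized trace of a product of two rank-one projections is the squared modulus of the inner product of the underlying unit vectors; hence each $p_i \in S$. Linearity of $\bb E_{\cl G, \pi}$ then gives $\bb E_{\cl G, \pi}(p) = \sum_i \lambda_i \bb E_{\cl G, \pi}(p_i) \le \sup\{\bb E_{\cl G, \pi}(p') : p' \in S\}$, and taking the supremum over $p \in C_{q_1}^s$ yields $\omega_{q_1}^s(\cl G, \pi) \le \sup\{\bb E_{\cl G, \pi}(p') : p' \in S\}$. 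Combining the two inequalities proves the corollary.

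I expect the only mildly delicate point to be checking that the block decomposition really returns densities $p_i$ that again lie in $S$, with ambient dimension $d_i$ bounded by $k$; but this is exactly the content of the final lines of the proof of the preceding theorem, so no new argument is required. Alternatively, one can bypass the explicit decomposition: $C_{q_1}^s(n,n,k,k)$ is a continuous image of a compact parameter space and hence compact, so $\conv(C_{q_1}^s)$ is compact and the linear functional $\bb E_{\cl G, \pi}$ attains its maximum over it at an extreme point, which by the preceding theorem belongs to $S$.
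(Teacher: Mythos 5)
Your proposal is correct and takes essentially the same route as the paper, whose entire proof is the one-line remark that $\bb E_{\cl G,\pi}(p)$ is convex (in fact linear) in $p$ combined with the preceding theorem; your block-trace decomposition and the alternative extreme-point/compactness argument are just that remark spelled out in detail. The only substantive detail you add is the verification that every density of the stated form actually lies in $C^s_{q_1}$ (via the conjugate rank-one projections and a maximally entangled vector), a step the paper needs for the reverse inequality but leaves implicit.
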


The corollary follows from the fact that $\bb E_{\cl G, \pi}(p)$ is a convex function of $p$.

Next we turn our attention to a new characterization of the synchronous qc-value of a game that does not appear in \cite{HMNPR} but will be useful for our calculations.   To this end assume that $\cl G$ is a game with $X=Y,\,\, A = B$, with a verifier $V$, that is not necessarily a synchronous game. Let $n=|X|, \, k = |A|$ and let $\bb F(n,k)$ denote the free product of $n$ copies of the cyclic group of order $k$. This group has generators $\{ u_x: x \in X \}$ with $u_x^k = 1$, where $1$ denotes the identity element.  We let $\bb C(\bb F(n,k))$ denote the group algebra of this group, which we regard as consisting of finite linear combinations of elements of the group (rather than finitely supported functions on the group) and let $C^*(\bb F(n,k))$ denote the full group C*-algebra.

If we let $\xi := e^{2 \pi i/k}$, then the elements of $\bb C(\bb F(n,k))$ defined by
\[ e_{x,a} := \frac{1}{k}\sum_{j=0}^{k-1} (\xi^{a-1}u_x)^j,\]
are the spectral projections of the unitary $u_x$ and these elements also generate $\bb C(\bb F(n,k))$.

Assume that $\cl G$ also has a prior density $\pi$ on inputs.  Following \cite{HMNPR} we set
\[ R_{\cl G, \pi} = \sum_{(x,y,a,b) \in W} \pi(x,y) e_{x,a} e_{y,b} e_{x,a}.\]

We recall the following theorem from \cite{HMNPR}.

\begin{thm}\cite{HMNPR} Let $\cl G= (X, X, A, A, V)$ be a game and let $\pi$ be a density on inputs.  Then
\[ \omega_{qc}^s(\cl G, \pi) = \sup \{ \tau(R_{\cl G, \pi}) \vert \,\, \tau: C^*(\bb F(n,k)) \to \bb C \text{ is a tracial state } \}.\]
\end{thm}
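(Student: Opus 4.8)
The plan is to establish the equality via two inequalities, using the known description of synchronous $qc$-correlations as precisely those $p$ of the form $p(a,b|x,y)=\tau(E_{x,a}E_{y,b})$ for some unital C*-algebra $\cl A$ carrying a tracial state $\tau$ and, for each $x$, a projection-valued measure $\{E_{x,a}:a\in A\}\subseteq\cl A$ (that is, projections with $\sum_a E_{x,a}=I$); see \cite{KPS,PSSTW,HMNPR}. The bridge between the two sides of the asserted identity is the elementary dictionary between a size-$k$ PVM and an order-$k$ unitary given by spectral decomposition: given a PVM $\{E_{x,a}\}_{a=1}^{k}$ one forms $U_x=\sum_a\xi^{1-a}E_{x,a}$, a unitary with $U_x^k=I$; conversely the projections $e_{x,a}$ are, by construction, exactly the spectral projections of $u_x$.

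First I would prove $\omega_{qc}^s(\cl G,\pi)\le\sup_\tau\tau(R_{\cl G,\pi})$. Take an arbitrary $p\in C_{qc}^s$ and write $p(a,b|x,y)=\tau(E_{x,a}E_{y,b})$ as above. Setting $U_x=\sum_a\xi^{1-a}E_{x,a}$, the universal property of $C^*(\bb F(n,k))$ extends $u_x\mapsto U_x$ to a unital $*$-homomorphism $\rho\colon C^*(\bb F(n,k))\to\cl A$; since $e_{x,a}$ is a fixed polynomial in $u_x$, the same geometric-sum computation that defines $e_{x,a}$ shows $\rho(e_{x,a})=E_{x,a}$. Then $\tilde\tau:=\tau\circ\rho$ is a tracial state on $C^*(\bb F(n,k))$, and using that each $e_{x,a}$ is a projection together with the trace identity $\tau(STS)=\tau(S^2T)=\tau(ST)$ for a projection $S$, one gets
\[\tilde\tau(R_{\cl G,\pi})=\sum_{(x,y,a,b)\in W}\pi(x,y)\,\tau(E_{x,a}E_{y,b}E_{x,a})=\sum_{(x,y,a,b)\in W}\pi(x,y)\,p(a,b|x,y)=\bb E_{\cl G,\pi}(p).\]
Taking the supremum over all such $p$ gives the inequality.

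For the reverse inequality, let $\tau$ be any tracial state on $C^*(\bb F(n,k))$ and set $p(a,b|x,y):=\tau(e_{x,a}e_{y,b})$. The trace identity gives $p(a,b|x,y)=\tau(e_{x,a}e_{y,b}e_{x,a})=\tau\big((e_{y,b}e_{x,a})^*(e_{y,b}e_{x,a})\big)\ge 0$; also $\sum_{a,b}p(a,b|x,y)=\tau(I)=1$ and $p(a,b|x,x)=\tau(e_{x,a}e_{x,b})=0$ for $a\ne b$, so $p$ is a synchronous density. That $p\in C_{qc}^s$ follows by passing to the GNS triple $(\cl H_\tau,\pi_\tau,\Omega)$: the $\pi_\tau(e_{x,a})$ form PVMs in $\cl M:=\pi_\tau(C^*(\bb F(n,k)))''$, the vector state $\omega_\Omega$ restricts to a (normal) tracial state on $\cl M$ since traciality transfers from the weak*-dense subalgebra $\pi_\tau(C^*(\bb F(n,k)))$ to $\cl M$ by a standard normality-and-density argument, and $p(a,b|x,y)=\omega_\Omega(\pi_\tau(e_{x,a})\pi_\tau(e_{y,b}))$. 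Exactly as before $\bb E_{\cl G,\pi}(p)=\tau(R_{\cl G,\pi})$, whence $\tau(R_{\cl G,\pi})\le\omega_{qc}^s(\cl G,\pi)$; taking the supremum over $\tau$ completes the proof.

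The work is essentially all imported: the main obstacle, and the step I expect to be hardest, is the characterization of $C_{qc}^s$ in terms of tracial states and PVMs — that is where genuine von Neumann algebra theory (direct-integral decompositions, or finite-dimensional approximations when needed) enters — and I would simply quote it from \cite{KPS,PSSTW,HMNPR}. A secondary routine point is the GNS bookkeeping confirming that the vector trace on $\cl M$ is honestly tracial. Everything else is formal: the PVM/unitary dictionary, the universal property of $C^*(\bb F(n,k))$, and the identity $\tau(STS)=\tau(ST)$ for a projection $S$ — which incidentally is also what exhibits $R_{\cl G,\pi}$ as a positive element of the group algebra. One could additionally remark that both suprema are attained, since the tracial state space of $C^*(\bb F(n,k))$ is weak*-compact and $\tau\mapsto\tau(R_{\cl G,\pi})$ is weak*-continuous.
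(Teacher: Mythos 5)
Your argument is correct, and it is essentially the standard proof from the cited literature: the paper itself gives no proof of this theorem (it is quoted from \cite{HMNPR}), and your route --- the tracial characterization of synchronous correlations from \cite{PSSTW}, the PVM/order-$k$-unitary dictionary, the universal property of $C^*(\bb F(n,k))$, and the identity $\tau(e_{x,a}e_{y,b}e_{x,a})=\tau(e_{x,a}e_{y,b})$ --- is exactly how the result is obtained there. One small remark: your GNS paragraph, as written, does not actually exhibit the correlation as a commuting-operator (qc) strategy, since you never produce Bob's PVM; one would take Bob's projections in the commutant, e.g.\ $J\pi_\tau(e_{y,b})J$ with $J$ the modular conjugation for the trace, so that $\langle \pi_\tau(e_{x,a})J\pi_\tau(e_{y,b})J\,\Omega\mid\Omega\rangle=\tau(e_{x,a}e_{y,b})$. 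Since you invoke the characterization of $C^s_{qc}$ as an ``if and only if,'' that paragraph is redundant anyway and the proof stands.
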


We let $SOS \subseteq \bb C(\bb F(n,k))$ denote the cone {\it sums of squares}, i.e., the set of all elements that can be written as a sum of elements of the form $w^*w$ with $w \in \bb C(\bb F(n,k))$.  We let
$\cl C \subseteq \bb C(\bb F(n,k))$ denote the space of {\it commutators}, i.e., the linear span of all elements of the form $wz- zw, \,\, w,z \in \bb C(\bb F(n,k))$.

While the above theorem gives lower bounds on the synchronous value, the following theorem is useful for giving upper bounds on the synchronous value. It is closely related to the results on trace positive elements found in \cite{JP}.

\begin{thm}\label{thm:SOS+C} Let $\cl G= (X, X, A, A, V)$ be a game and let $\pi$ be a density on inputs. Then
\[ \omega^s_{qc}(\cl G, \pi) = \inf \{ t \vert \,\, t1 - R_{\cl G, \pi} \in SOS + \cl C \}.\]
\end{thm}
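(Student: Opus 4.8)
The plan is to deduce this from the trace-state formula for $\omega^s_{qc}(\cl G,\pi)$ in the preceding theorem by a Hahn--Banach argument carried out in the real vector space $V:=\bb C(\bb F(n,k))_{sa}$ of self-adjoint elements of the group algebra. First note that $R:=R_{\cl G,\pi}$ is self-adjoint, since each $e_{x,a}$ is a self-adjoint idempotent and $e_{x,a}e_{y,b}e_{x,a}=(e_{y,b}e_{x,a})^*(e_{y,b}e_{x,a})$; moreover, because $\cl C$ is closed under the adjoint, one has $t1-R\in SOS+\cl C$ if and only if $t1-R\in SOS+\cl C_{sa}$, where $\cl C_{sa}:=\cl C\cap V$. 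Write $\cl K:=SOS+\cl C_{sa}$, a convex cone in $V$. For the inequality $\omega^s_{qc}(\cl G,\pi)\le\inf\{t\mid t1-R\in SOS+\cl C\}$: if $t1-R=\sum_i w_i^*w_i+c$ with $c\in\cl C$ and $\tau$ is any tracial state on $C^*(\bb F(n,k))$, then applying $\tau$ gives $t-\tau(R)=\sum_i\tau(w_i^*w_i)+\tau(c)\ge 0$, since $\tau$ is positive and annihilates commutators; taking the supremum over $\tau$ and invoking the preceding theorem yields $\omega^s_{qc}(\cl G,\pi)\le t$, and hence $\omega^s_{qc}(\cl G,\pi)\le\inf\{t\mid t1-R\in SOS+\cl C\}$.

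For the reverse inequality it suffices to show that $t1-R\in\cl K$ whenever $t>\omega^s_{qc}(\cl G,\pi)$. Here I would first record that $1$ is an order unit for $(V,\cl K)$: every $v\in V$ is a finite real-linear combination of elements $g+g^{-1}$ and $i(g-g^{-1})$ with $g\in\bb F(n,k)$, and the identities $2\cdot1\pm(g+g^{-1})=(1\mp g)^*(1\mp g)$ and $2\cdot1\pm i(g-g^{-1})=(1\pm ig)^*(1\pm ig)$ exhibit these as sums of squares, so $\lambda1-v\in SOS\subseteq\cl K$ for all large $\lambda$. Thus $q(v):=\inf\{\lambda\ge0\mid \lambda1\pm v\in\cl K\}$ is a finite seminorm on $V$, and a short computation shows that if $h$ lies in the $q$-closure of $\cl K$ then $h+\delta1\in\cl K$ for every $\delta>0$: choosing $w\in\cl K$ with $q(h-w)<\delta$ gives $\delta1+h-w\in\cl K$, whence $\delta1+h=(\delta1+h-w)+w\in\cl K$. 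Consequently it is enough to prove that $t_01-R$ lies in the $q$-closure of $\cl K$ for every $t_0>\omega^s_{qc}(\cl G,\pi)$; given $t>\omega^s_{qc}(\cl G,\pi)$, one then picks $t_0\in(\omega^s_{qc}(\cl G,\pi),t)$ and absorbs the slack $\delta=t-t_0$. This is precisely the step that allows the statement to be phrased with $SOS+\cl C$ rather than its closure.

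To prove that claim, suppose $t_01-R$ is not in the $q$-closure $\overline{\cl K}$. By Hahn--Banach there is a linear functional $\phi$ on $V$ with $\phi(t_01-R)<0\le\phi(\overline{\cl K})$. Since $\cl C_{sa}\subseteq\cl K$ is a linear subspace, $\phi$ vanishes on it, so the complex-linear extension of $\phi$ to $\bb C(\bb F(n,k))$ is a tracial functional; and since $SOS\subseteq\cl K$, this extension satisfies $\phi(w^*w)\ge0$ for all $w$, i.e.\ it is positive on the $*$-algebra $\bb C(\bb F(n,k))$. A positive functional on a group algebra is given by a positive-definite function on the group, so it extends to a bounded positive, hence (by continuity) tracial, functional on $C^*(\bb F(n,k))$; its value at $1$ is $\phi(1)\ge0$, and it is nonzero because $\phi(t_01-R)<0$, so $\phi(1)>0$ and $\tau:=\phi/\phi(1)$ is a tracial state on $C^*(\bb F(n,k))$ with $\tau(R)>t_0>\omega^s_{qc}(\cl G,\pi)=\sup_{\tau'}\tau'(R)$, a contradiction. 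Hence $t_01-R\in\overline{\cl K}$, and combining with the previous paragraph completes the proof.

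The main obstacle, and really the only delicate point, is the gap between the cone $\cl K$ and its closure: Hahn--Banach only places $t_01-R$ in the $q$-closure of $\cl K$, and it is the order-unit/slack observation that upgrades this to genuine membership in $SOS+\cl C$ for every $t>\omega^s_{qc}(\cl G,\pi)$. The one external fact used along the way is the standard extension of a positive (tracial) functional on $\bb C(\bb F(n,k))$ to $C^*(\bb F(n,k))$.
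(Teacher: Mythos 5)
Your proof is correct, but it takes a genuinely different route from the paper's. The paper deduces the theorem from two cited results: the Cuntz--Pedersen theorem \cite{CP}, which converts the hypothesis that $\tau(t1-R_{\cl G,\pi})$ is bounded below away from $0$ over all tracial states into membership of $(t-\epsilon)1-R_{\cl G,\pi}$ in the closure of $C^*(\bb F(n,k))^++\cl C$, followed by an $\epsilon$-absorption to get exact positivity in $C^*(\bb F(n,k))$ after subtracting a commutator, and then Ozawa's theorem \cite{Oz14} to upgrade positivity of a group-algebra element in the full group C*-algebra to a sum-of-squares decomposition inside $\bb C(\bb F(n,k))$. You instead run the duality entirely inside the group algebra: the order-unit seminorm attached to the cone $SOS+\cl C_{sa}$, the slack-absorption (archimedean) observation, Hahn--Banach separation, and the standard GNS extension of a positive tracial functional on $\bb C(\bb F(n,k))$ to a tracial state on $C^*(\bb F(n,k))$, which contradicts the trace-supremum formula of the preceding theorem. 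Both arguments hinge on the same $\epsilon$/$\delta$ absorption, which is exactly what allows the statement to involve $SOS+\cl C$ rather than its closure. The paper's route buys brevity at the cost of invoking two nontrivial external theorems; yours buys a self-contained argument using nothing beyond the trace formula and standard functional analysis, and it visibly works with any finitely generated group in place of $\bb F(n,k)$ --- it is essentially the tracial Positivstellensatz argument in the spirit of \cite{JP}. Two small points worth making explicit: the separation takes place in a space that is only seminormed, so either quotient by the null space of $q$ or note that the geometric Hahn--Banach theorem needs only local convexity; and the inference ``$\phi$ positive and nonzero implies $\phi(1)>0$'' uses the Cauchy--Schwarz inequality for positive functionals.
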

\begin{proof} Let $t > \omega_{qc}^s(\cl G, \pi)$ and choose $\epsilon >0$ so that  $\tau(t1 - R_{\cl G, \pi}) - \epsilon >0$ for every tracial state on $C^*(\bb F(n,k))$. By a theorem of Cuntz and Pedersen~\cite[Theorem~2.9]{CP}, this implies that $(t- \epsilon)1 - R_{\cl G, \pi}$ is in the closure of $C^*(\bb F(n,k))^+ + \cl C$. From this it follows that we may choose some element $w \in \cl C$ so that $t1 - R_{\cl G, \pi} - w \in C^*(\bb F(n,k))^+ \cap \bb C(\bb F(n,k))$.  By a theorem of Ozawa~\cite[Theorem~1]{Oz14}, this guarantees that $t1 - R_{\cl G, \pi} - w \in SOS$.  Thus, $t1 - R_{\cl G, \pi} \in SOS + \cl C$.

Similarly, if $t < \omega_{qc}^s(\cl G, \pi)$, then by the above theorem, there is a tracial state $\tau$ so that
$\tau(t1 - R_{\cl G, \pi}) <0$ and this guarantees that $t1 - R_{\cl G, \pi} \notin SOS + \cl C$.
\end{proof}

\begin{defn} Given a game $\cl G=(X, X, A, A, V)$ with density $\pi$, we define the {\bf cost matrix} of the game to be the $(X \times A) \times (X \times A)$ matrix, denoted $C_{\cl G, \pi}$, with entries,
  \[ c_{(x,a), (y,b)} =  \frac{1}{2}\big( \pi(x,y)V(x,y,a,b) + \pi(y,x) V(y,x,b,a) \big).\]
Let $|x,a\rangle\langle y,b|$ denote a typical $(X\times A)\times (X\times A)$ matrix unit.  The set of {\bf cost-free matrices} $\cl Z(\cl G)$ is the real linear subspace spanned by the symmetric matrices $|x,a\rangle\langle x,b|+|x,b\rangle\langle x,a|$ for $x\in X$, $a\ne b\in A$, and the diagonal matrices $D_{x,y}=\sum_{a\in A}|x,a\rangle\langle x,a|-|y,a\rangle\langle y,a|$ for $x\ne y\in X$.
We set
\[ H_{\cl G, \pi} = \sum_{a,b,x,y} c_{(x,a),(y,b)}(e_{x,a}e_{y,b} + e_{y,b} e_{x,a}),\]
viewed as an element of the corresponding group algebra.
Note that $C_{\cl G, \pi}^t = C_{\cl G, \pi}$ and that $H_{\cl G, \pi} = H_{\cl G, \pi}^*$.
\end{defn}

 \begin{cor}\label{cor:cost-matrix} Given a game $\cl G=(X, X, A, A, V)$ with density $\pi$,
 \[ \omega^s_{qc}(\cl G, \pi) = \sup \{ \tau( H_{\cl G, \pi}) \} \le |X|\inf_{Z\in \cl Z(\cl G)} \lambda_{max}(C_{\cl G, \pi}+Z),\]
 where $\tau$ is an arbitrary tracial state and $\lambda_{max}$ denotes the maximal eigenvalue.%
 \end{cor}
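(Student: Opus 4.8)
The plan is to route everything through the $(X\times A)\times(X\times A)$ correlation matrix of a tracial state. Fix a tracial state $\tau$ on $C^*(\bb F(n,k))$, where $n=|X|$ and $k=|A|$, and let $P_\tau$ be the matrix with entries $(P_\tau)_{(x,a),(y,b)}=\tau(e_{x,a}e_{y,b})$. The first things to establish are the basic properties of $P_\tau$: it is real and symmetric because the $e_{x,a}$ are self-adjoint and $\tau$ is tracial; it is positive semidefinite because it is precisely the Gram matrix of the family $\{e_{x,a}\}_{x,a}$ in the GNS Hilbert space of $\tau$; and $\tr(P_\tau)=\sum_{x,a}\tau(e_{x,a})=\sum_x\tau(1)=n$, using that $\sum_a e_{x,a}=1$ for each $x$.

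For the first equality I would invoke the theorem of \cite{HMNPR}, which gives $\omega^s_{qc}(\cl G,\pi)=\sup_\tau\tau(R_{\cl G,\pi})$ over tracial states. Since each $e_{x,a}$ is a projection and $\tau$ is tracial, $\tau(e_{x,a}e_{y,b}e_{x,a})=\tau(e_{x,a}^2e_{y,b})=\tau(e_{x,a}e_{y,b})$, so $\tau(R_{\cl G,\pi})=\sum_{x,y,a,b}\pi(x,y)V(x,y,a,b)(P_\tau)_{(x,a),(y,b)}$. Symmetrizing the summand under the relabeling $(x,a)\leftrightarrow(y,b)$ and using that $P_\tau$ is symmetric replaces the coefficient $\pi(x,y)V(x,y,a,b)$ by $c_{(x,a),(y,b)}$, so $\tau(R_{\cl G,\pi})=\tr(C_{\cl G,\pi}P_\tau)$. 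The identical computation applied to $H_{\cl G,\pi}$ (now using $\tau(e_{x,a}e_{y,b})=\tau(e_{y,b}e_{x,a})$ to collapse the two orderings) gives $\tau(H_{\cl G,\pi})=\tr(C_{\cl G,\pi}P_\tau)$ as well, whence $\sup_\tau\tau(H_{\cl G,\pi})=\sup_\tau\tau(R_{\cl G,\pi})=\omega^s_{qc}(\cl G,\pi)$.

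For the inequality, extend the assignment $M\mapsto H_M:=\sum_{(x,a),(y,b)}m_{(x,a),(y,b)}(e_{x,a}e_{y,b}+e_{y,b}e_{x,a})$ to all symmetric matrices $M$, so that $H_{\cl G,\pi}=H_{C_{\cl G,\pi}}$ and $\tau(H_M)=\tr(MP_\tau)$ by the computation above. The crucial point is that $H_Z=0$ for every $Z\in\cl Z(\cl G)$: the spanning matrix $|x,a\rangle\langle x,b|+|x,b\rangle\langle x,a|$ with $a\ne b$ is sent to a multiple of $e_{x,a}e_{x,b}+e_{x,b}e_{x,a}$, which vanishes because $e_{x,a}$ and $e_{x,b}$ are orthogonal spectral projections of $u_x$; and $D_{x,y}$ is sent to a multiple of $\sum_a e_{x,a}-\sum_a e_{y,a}=1-1=0$. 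Hence for every $Z\in\cl Z(\cl G)$ we have $\tau(H_{\cl G,\pi})=\tau(H_{C_{\cl G,\pi}+Z})=\tr\big((C_{\cl G,\pi}+Z)P_\tau\big)$. Finally, since $P_\tau\succeq 0$ one has the standard bound $\tr(MP_\tau)\le\lambda_{max}(M)\tr(P_\tau)$ (for instance via $\tr(MP_\tau)=\tr(P_\tau^{1/2}MP_\tau^{1/2})\le\lambda_{max}(M)\tr(P_\tau)$), so $\tau(H_{\cl G,\pi})\le n\,\lambda_{max}(C_{\cl G,\pi}+Z)$ for every tracial state $\tau$ and every $Z\in\cl Z(\cl G)$; taking the supremum over $\tau$ and then the infimum over $Z$ yields the claimed inequality.

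\textbf{Main obstacle.}
I expect there is no genuine analytic difficulty here; the delicate point is the bookkeeping that identifies the group-algebra element $H_{\cl G,\pi}$ with the matrix $C_{\cl G,\pi}$ modulo $\cl Z(\cl G)$ — in particular, verifying that the rather specific list of generators of $\cl Z(\cl G)$ really does lie in the kernel of $M\mapsto H_M$, which is exactly what makes ``cost-free'' the right notion and what licenses replacing $C_{\cl G,\pi}$ by $C_{\cl G,\pi}+Z$ before taking the infimum. Once that dictionary (and the $\tr(R)=\tr(CP_\tau)$ symmetrization) is in place, the positivity and trace of $P_\tau$ and the eigenvalue estimate are routine.
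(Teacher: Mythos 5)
Your proposal is correct, and the first half is identical to the paper's: both identify $\sup_\tau \tau(H_{\cl G,\pi})$ with $\sup_\tau\tau(R_{\cl G,\pi})=\omega^s_{qc}(\cl G,\pi)$ via the cited theorem of \cite{HMNPR}, the tracial identity $\tau(e_{x,a}e_{y,b}e_{x,a})=\tau(e_{x,a}e_{y,b})$, and symmetrization. For the inequality you take the dual route. The paper works on the operator side: with $E$ the column of the projections $e_{x,a}$ it uses $E^*ZE=0$ for $Z\in\cl Z(\cl G)$ and $E^*E=|X|1$ to write $t1-H_{\cl G,\pi}=E^*\big(t|X|^{-1}I-C_{\cl G,\pi}-Z\big)E$, which for $t>|X|\lambda_{max}(C_{\cl G,\pi}+Z)$ is an explicit sum of squares, hence positive, so every tracial state gives $\tau(H_{\cl G,\pi})\le t$; this has the added benefit of producing an SOS certificate that meshes with Theorem~\ref{thm:SOS+C}. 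You work on the state side: the correlation (Gram) matrix $P_\tau=(\tau(e_{x,a}e_{y,b}))$ is positive semidefinite with $\tr(P_\tau)=|X|$, the generators of $\cl Z(\cl G)$ are killed by the quadratic form in the $e_{x,a}$ (your $H_Z=0$ is exactly the paper's $E^*ZE=0$), and then $\tau(H_{\cl G,\pi})=\tr\big((C_{\cl G,\pi}+Z)P_\tau\big)\le\lambda_{max}(C_{\cl G,\pi}+Z)\tr(P_\tau)$. So the two proofs hinge on the same two facts — cost-free matrices are annihilated, and $\sum_a e_{x,a}=1$ yields the factor $|X|$ — with yours being marginally more elementary (no factorization of $t|X|^{-1}I-C_{\cl G,\pi}-Z$, no appeal to positivity in the C*-algebra) and the paper's yielding the stronger operator inequality. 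One shared wrinkle worth noting: with the displayed definition of $H_{\cl G,\pi}$ (a sum over all ordered pairs of both orderings of the product) one literally gets $\tau(H_{\cl G,\pi})=2\tau(R_{\cl G,\pi})=2\tr(C_{\cl G,\pi}P_\tau)$; both your computation and the paper's proof implicitly read $H_{\cl G,\pi}$ as $E^*C_{\cl G,\pi}E$, i.e.\ with a factor $\tfrac12$ (or a sum over unordered pairs). This is a normalization issue in the definition, not a gap in your argument.
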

 \begin{proof} The first equality follows from the fact that $\tau(H_{\cl G, \pi}) = \tau(R_{\cl G, \pi})$ for any trace.
 If we choose some ordering on the set $X \times A$ and let $E$ denote the column vector with entries the elements $e_{x,a}$ then for $Z\in \cl Z(\cl G)$, we have that $E^*ZE=0$, so
 \[ H_{\cl G, \pi} = E^*C_{\cl G,\pi}E=E^* (C_{\cl G, \pi}+Z) E.\]
 We also have $E^*E=|X|1$, so
 \[t1-H_{\cl G,\pi}=E^*(t|X|^{-1}I-C_{\cl G,\pi}-Z))E.\]
It follows that $tI - H_{\cl G, \pi} \ge 0$ in the corresponding group C*-algebra for
 $t > |X|\lambda_{max}(C_{\cl G, \pi}+Z).$
 \end{proof}

\section{Unique games}

For a unique game $\cl G$, it is known~\cite{KRT} that if $\cl G$ has a perfect $vect$-density, then $\cl G$ has a perfect $q$-density.  We strengthen this by showing that the same applies if we replace $q$-densities with deterministic strategies.

\begin{thm} Let $\cl G=(X,Y,A,f)$ be a unique game.
  If $\cl G$ has a perfect vect-density, then $\cl G$ has a
  perfect deterministic strategy. If $X=Y$ and $\cl G$ has a perfect synchronous vect-density, then $\cl G$ has  a perfect synchronous deterministic strategy.
 \end{thm}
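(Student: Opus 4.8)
The plan is to squeeze enough rigidity out of a perfect $vect$-density to recover a deterministic winning strategy, and the engine for this is an equality case of the Cauchy--Schwarz inequality. So the first step is a rigidity lemma. Suppose $p\in C_{vect}$ is perfect for $\cl G=(X,Y,A,f)$, witnessed by orthogonal families $\{v_{x,a}\}$, $\{w_{y,b}\}$ and a unit vector $\eta$ with $\sum_a v_{x,a}=\sum_b w_{y,b}=\eta$. Orthogonality gives the marginals $\|v_{x,a}\|^2=\langle v_{x,a}|\eta\rangle=\sum_b p(a,b|x,y)$ and $\|w_{y,b}\|^2=\sum_a p(a,b|x,y)$, and because $\cl G$ is a unique game, perfection collapses each of these sums to a single term: $\|v_{x,a}\|^2=p(a,f(x,y)^{-1}a|x,y)=\langle v_{x,a}|w_{y,f(x,y)^{-1}a}\rangle$ and $\|w_{y,b}\|^2=\langle v_{x,f(x,y)b}|w_{y,b}\rangle$. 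Hence whenever $a=f(x,y)b$ we have $\|v_{x,a}\|^2=\langle v_{x,a}|w_{y,b}\rangle=\|w_{y,b}\|^2$, and since this inner product is $\ge0$, the equality case of Cauchy--Schwarz forces $v_{x,a}=w_{y,b}$ (the cases where a side vanishes being covered too).

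The second step uses this to see that all the vectors come from one orthogonal frame. Fix $x_0\in X$ and $y_0\in Y$ (the theorem is vacuous if $X$ or $Y$ is empty). Applying the lemma to the question pairs $(x_0,y)$ and $(x,y_0)$ gives $w_{y,b}=v_{x_0,f(x_0,y)b}$ and $v_{x,a}=w_{y_0,f(x,y_0)^{-1}a}$, hence $v_{x,a}=v_{x_0,\sigma_x(a)}$ where $\sigma_x:=f(x_0,y_0)f(x,y_0)^{-1}\in\Perm(A)$ and $\sigma_{x_0}=\mathrm{id}$. Put $S=\{c:v_{x_0,c}\ne0\}$ (nonempty since $\|\eta\|=1$) and $T_x=\sigma_x^{-1}(S)=\{a:v_{x,a}\ne0\}$. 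Orthogonality of the frame $\{v_{x_0,c}\}$ now gives two descriptions of the density: on the one hand $p(a,b|x,y)=\|v_{x,a}\|^2$ when $b=f(x,y)^{-1}a$ and $0$ otherwise (by perfection), and on the other hand $p(a,b|x,y)=\|v_{x_0,\sigma_x a}\|^2$ when $\sigma_x(a)=f(x_0,y)b$ and $0$ otherwise. Matching the supports of these two formulas yields the key identity: for every $a\in T_x$ and every $y$, $\sigma_x(a)=f(x_0,y)\,f(x,y)^{-1}(a)$.

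The third step builds the strategy. Choose $a^*\in S$ and set $h_A(x)=\sigma_x^{-1}(a^*)$ and $h_B(y)=f(x_0,y)^{-1}(a^*)$. Since $a^*\in S$ we have $h_A(x)\in T_x$, so the key identity applies at $a=h_A(x)$, giving $f(x_0,y)\,f(x,y)^{-1}(h_A(x))=\sigma_x(h_A(x))=a^*$; rearranging, $h_A(x)=f(x,y)\,f(x_0,y)^{-1}(a^*)=f(x,y)\,h_B(y)$, i.e.\ $V(x,y,h_A(x),h_B(y))=1$ for all $x,y$, which is exactly a perfect deterministic strategy. For the synchronous statement, if $X=Y$ and the density is also synchronous, then part~(1) of the theorem characterizing $C_{vect}^s$ gives $w_{x,a}=v_{x,a}$; taking $x_0=y_0$ and using $f(x,x)=\mathrm{id}$ gives $\sigma_x=f(x,x_0)^{-1}$, so $h_A(x)=f(x,x_0)(a^*)$. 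Applying the rigidity lemma to the pair $(x_0,x)$ with $w=v$ gives $v_{x,b}=v_{x_0,f(x_0,x)b}$; comparing this with $v_{x,a}=v_{x_0,f(x,x_0)^{-1}a}$ at $a=h_A(x)$ (whose image under $f(x,x_0)^{-1}$ is $a^*\in S$) and using that distinct nonzero frame vectors carry distinct labels forces $f(x_0,x)(h_A(x))=a^*$, i.e.\ $h_A(x)=f(x_0,x)^{-1}(a^*)=h_B(x)$. So $h:=h_A=h_B$ is the desired perfect synchronous deterministic strategy.

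The point I expect to require the most care is that a $vect$-density may be supported on a proper ``partial permutation'' ($S\subsetneq A$), so that $\sigma_x$ need not agree with $f(x_0,y)f(x,y)^{-1}$ as permutations of $A$, only on $T_x$. The construction above is arranged precisely so that the only labels that enter the final calculation, namely $h_A(x)\in T_x$ and $\sigma_x(h_A(x))=a^*\in S$, lie inside these distinguished sets, which is what makes the last identity valid.
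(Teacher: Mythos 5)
Your proof of the first claim is correct and is essentially the paper's argument: after setting $b_0=f(x_0,y_0)^{-1}a^*$, your functions $h_A(x)=\sigma_x^{-1}(a^*)=f(x,y_0)b_0$ and $h_B(y)=f(x_0,y)^{-1}(a^*)$ are exactly the paper's strategy, and your rigidity lemma ($v_{x,a}=w_{y,b}$ whenever $a=f(x,y)b$, from the two marginal identities $\|v_{x,a}\|^2=\langle v_{x,a}|\eta\rangle$ and $\|w_{y,b}\|^2=\langle w_{y,b}|\eta\rangle$ collapsed by perfection) is the same computation the paper performs locally for the pairs $(x,y_0)$ and $(x_0,y)$. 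You finish by matching supports of the two descriptions of $p$, whereas the paper finishes by observing $p(h_A(x),h_B(y)|x,y)=\|w_{y_0,b_0}\|^2>0$ and invoking perfection; these are equivalent, and your globally stated lemma plus the "single frame" picture is a clean way to organize the bookkeeping.

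There is one unjustified step, in the synchronous case: you invoke ``$f(x,x)=\mathrm{id}$'' to simplify $\sigma_x$ to $f(x,x_0)^{-1}$. The theorem does not assume the \emph{game} is synchronous, only the density, and a perfect synchronous vect-density does not force $f(x_0,x_0)=\mathrm{id}$: for instance $X=\{x_0\}$, $A=\{1,2,3\}$, $f(x_0,x_0)$ the transposition $(2\,3)$ admits the perfect synchronous vect-density with $v_{x_0,1}=\eta$, $v_{x_0,2}=v_{x_0,3}=0$. Fortunately the assumption is inessential to your own argument: leave $\sigma_x=f(x_0,y_0)f(x,y_0)^{-1}$ unsimplified; then $v_{x,h_A(x)}=v_{x_0,\sigma_x(h_A(x))}=v_{x_0,a^*}\ne0$, while the rigidity lemma at $(x_0,x)$ with $w=v$ gives $v_{x,h_A(x)}=v_{x_0,f(x_0,x)h_A(x)}$, so orthogonality of the frame $\{v_{x_0,c}\}$ forces $f(x_0,x)h_A(x)=a^*$, i.e.\ $h_A(x)=f(x_0,x)^{-1}(a^*)=h_B(x)$, with no condition on $f(x_0,x_0)$. (Alternatively, argue as the paper does: $p(h_A(x),h_B(x)|x,x)=\|v_{x,h_A(x)}\|^2>0$, and synchronicity of the density immediately gives $h_A(x)=h_B(x)$.) With that one-line repair the proposal is complete.
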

 \begin{proof} Let $\{ v_{x,a}: x \in X, a \in A \}$ and $\{ w_{y,b} : y \in Y, b \in A \}$ be sets of vectors in a Hilbert space $\cl H$ satisfying:
 \begin{itemize}
 \item $v_{x,a} \perp v_{x,a'}, \forall x \in X, a \ne a',$
 \item $w_{y,b} \perp w_{y,b'}, \forall y \in Y, b \ne b'$,
 \item $\exists \| \eta \|=1$ such that $\sum_a v_{x,a} = \sum_b w_{y,b} = \eta,  \forall x,y \in X$
 \end{itemize}
 such that the density given by
 \[ p(a,b | x,y) := \langle v_{x,a} \vert w_{y,b} \rangle,\]
 is a perfect density for $\cl G$.

 This implies that whenever $\langle v_{x,a} \vert w_{y,b} \rangle \ne0$, we have $a = f(x,y)b$.

 Fix any $x_0\in X$ and $y_0\in Y$. Choose $b_0\in A$
 such that $w_{y_0,b_0} \ne 0$ and define $h_A:X \to A$ by
 \[ h_A(x) = f(x,y_0)b_0.\]
 Now set $a_0 = h_A(x_0)$ and define $h_B: Y \to A$ by
 \[ h_B(y) = f(x_0,y)^{-1}a_0.\]
 Note that $h_B(y_0)= b_0$.
 We claim that this pair of functions defines a perfect deterministic strategy.

  Since
 \[ 0 \ne \|w_{y_0,b_0}\|^2 = \langle w_{y_0,b_0} \vert \eta \rangle = \sum_a \langle w_{y_0,b_0} \vert v_{x,a} \rangle = \langle w_{y_0, b_0} \vert v_{x, h_A(x)} \rangle,\]
 we see that $v_{x, h_A(x)} \ne 0, \forall x$ and so in particular $v_{x_0, a_0} \ne 0.$
 We also have that
 \[ \|v_{x, h_A(x)} \|^2 = \sum_b \langle w_{y_0,b} \vert v_{x, h_A(x)} \rangle = \langle w_{y_0,b_0} \vert v_{x, h_A(x)} \rangle.\]
 These two equations together imply that $v_{x,h_A(x)} = w_{y_0, b_0}$.

 Similarly, we have that $w_{y, h_B(y)} = v_{x_0, a_0}$.  Hence,
 \[ p(h_A(x),h_B(y) | x,y) = \|w_{y_0, b_0} \|^2 \ne 0,\]
 and since $p$ is a perfect density for $\cl G$, we have $h_A(x)=f(x,y)h_B(y)$. So the functions $h_A,h_B$ define a perfect deterministic strategy. Moreover, if $X=Y$ and $p$ is synchronous, then it follows immediately that $h_A=h_B$, and we have a perfect deterministic strategy.
 \end{proof}

We now give an approximate version of the previous result.

 \begin{thm}\label{near1}
   Let $\mathcal{G}=(X,Y,A,f)$ be a unique game with $k=|A|\ge2$. Let
   $\pi$ be a density on inputs such that
   $\pi_0:=\min_{x,y}\pi(x,y)>0$ and let
   $\epsilon>0$ be sufficiently small that
   $\epsilon+2k(\epsilon/\pi_0+\sqrt{\epsilon/\pi_0})<1$. %
     If $\omega_{vect}(\mathcal{G},\pi)\ge 1- \epsilon$, then $\omega_{vect}(\cl G, \pi) = \omega(\cl G, \pi) =1$.
   If $(\mathcal{G},\pi)$ has no perfect deterministic strategy, then
   we have $\omega_{vect}(\cl G, \pi) <1-\epsilon$.
 \end{thm}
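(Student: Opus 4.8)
The plan is to reduce both assertions to a single implication: \emph{if $p\in C_{vect}$ satisfies $\bb E_{\cl G,\pi}(p)\ge 1-\epsilon$, then $\cl G$ has a perfect deterministic strategy.} Granting this, the first assertion follows because a perfect deterministic strategy yields a perfect density in $C_{loc}$, so $\omega_{loc}(\cl G,\pi)=1$ and hence, by the chain $\omega_{loc}\le\omega_q\le\omega_{qc}\le\omega_{vect}\le\omega_{ns}\le 1$, every value of the game (in particular $\omega_{vect}$ and $\omega=\omega_{det}$) equals $1$. For the second assertion, the real Hilbert space in the definition of a vect-density may be taken of dimension bounded in terms of $n_A,n_B,k_A,k_B$, so $C_{vect}$ is a compact subset of $\bb R^{n_An_Bk_Ak_B}$ and the supremum defining $\omega_{vect}(\cl G,\pi)$ is attained; hence if there is no perfect deterministic strategy then, by the implication, no $p\in C_{vect}$ satisfies $\bb E_{\cl G,\pi}(p)\ge 1-\epsilon$, so $\omega_{vect}(\cl G,\pi)<1-\epsilon$.

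To prove the implication, fix a vect-density $p$ with vectors $\{v_{x,a}\}$, $\{w_{y,b}\}$ and unit vector $\eta$, and set $\delta:=\epsilon/\pi_0$. Since $\sum_{a,b}p(a,b|x,y)=\langle\eta|\eta\rangle=1$ for every $(x,y)$, the hypothesis rearranges to $\sum_{x,y}\pi(x,y)\,\ell(x,y)\le\epsilon$, where $\ell(x,y):=\sum_{(x,y,a,b)\in L}p(a,b|x,y)$; as every summand is nonnegative and $\pi(x,y)\ge\pi_0$, this forces $\ell(x,y)\le\delta$ for all $(x,y)$. I will also use the identities $\|w_{y,b}\|^2=\sum_a p(a,b|x,y)$ and $\|v_{x,a}\|^2=\sum_b p(a,b|x,y)$, each independent of the other input, obtained by pairing $\sum_{b'}w_{y,b'}=\eta=\sum_{a'}v_{x,a'}$ with $w_{y,b}$ respectively $v_{x,a}$ and using orthogonality.

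Now I build the strategy, mimicking the perfect-case theorem. Fix $y_0\in Y$ and pick $b_0\in A$ with $\|w_{y_0,b_0}\|^2=\max_b\|w_{y_0,b}\|^2$; since these squared norms sum to $1$ over $k$ values of $b$, we have $c:=\|w_{y_0,b_0}\|^2\ge 1/k$. Set $h_A(x):=f(x,y_0)b_0$, so that $(h_A(x),b_0)$ is a winning output at $(x,y_0)$ for every $x$. All non-winning entries in row $h_A(x)$ or column $b_0$ of $p(\cdot,\cdot|x,y_0)$ are losing entries, so their total weight is at most $\ell(x,y_0)\le\delta$; together with the two identities this gives
\[ \|v_{x,h_A(x)}\|^2\le \langle v_{x,h_A(x)}|w_{y_0,b_0}\rangle+\delta \quad\text{and}\quad \|w_{y_0,b_0}\|^2\le \langle v_{x,h_A(x)}|w_{y_0,b_0}\rangle+\delta, \]
whence $\|v_{x,h_A(x)}-w_{y_0,b_0}\|^2\le 2\delta$. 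Symmetrically, fix $x_0\in X$, put $a_0:=h_A(x_0)$ and $h_B(y):=f(x_0,y)^{-1}a_0$, and run the same estimate with the roles of the two players interchanged; this yields $\|w_{y,h_B(y)}-v_{x_0,a_0}\|^2\le 2\delta$ for every $y$. Taking $x=x_0$ above gives $\|v_{x_0,a_0}-w_{y_0,b_0}\|\le\sqrt{2\delta}$, so the triangle inequality yields $\|w_{y,h_B(y)}-w_{y_0,b_0}\|\le 2\sqrt{2\delta}$ for all $y$.

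It remains to check that $(h_A,h_B)$ is perfect. For any $(x,y)$ we have $p(h_A(x),h_B(y)|x,y)=\langle v_{x,h_A(x)}|w_{y,h_B(y)}\rangle$; expanding both vectors about the fixed vector $w_{y_0,b_0}$ and using $\|w_{y_0,b_0}\|=\sqrt c\le 1$ together with the two closeness estimates shows this inner product is at least $c-c_1\sqrt\delta-c_2\delta$ for absolute constants $c_1,c_2$, and the hypothesis on $\epsilon$ (through $c\ge 1/k$) is precisely what forces this quantity to exceed $\delta$. Since $\ell(x,y)\le\delta$, the pair $(h_A(x),h_B(y))$ cannot be a losing output at $(x,y)$; hence $h_A(x)=f(x,y)h_B(y)$ for all $x,y$, so $(h_A,h_B)$ is a perfect deterministic strategy. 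I expect the main obstacle to be exactly this final accounting: the functions $h_A,h_B$ and the reference vector must be chosen so that every $v_{x,h_A(x)}$ and every $w_{y,h_B(y)}$ is provably $O(\sqrt\delta)$-close to the single vector $w_{y_0,b_0}$ — the quantitative analogue of the equalities $v_{x,h_A(x)}=w_{y_0,b_0}=w_{y,h_B(y)}$ from the perfect case — and then the accumulated errors must be controlled tightly enough that $c-O(\sqrt\delta)$ still beats the per-input losing budget $\delta$ under the stated smallness condition on $\epsilon$.
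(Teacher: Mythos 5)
Your overall strategy is the paper's: build $h_A(x)=f(x,y_0)b_0$, $h_B(y)=f(x_0,y)^{-1}a_0$ from an anchor, prove the quantitative rigidity estimates $\|v_{x,h_A(x)}-(\text{anchor})\|^2=O(\delta)$ with $\delta=\epsilon/\pi_0$, expand the inner product $p(h_A(x),h_B(y)|x,y)=\langle v_{x,h_A(x)}|w_{y,h_B(y)}\rangle$, and conclude perfection from the pointwise comparison with the per-input losing mass $\ell(x,y)\le\delta$ (your pointwise endgame is fine, and is if anything cleaner than the paper's averaging step). The gap is in the final accounting, which you defer to ``absolute constants $c_1,c_2$'' and then assert that the stated hypothesis ``is precisely what forces'' the conclusion. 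With your choices this is false: anchoring at $c=\|w_{y_0,b_0}\|^2\ge 1/k$, using the bound $2\delta$ for each rigidity estimate, and routing $w_{y,h_B(y)}$ to $w_{y_0,b_0}$ by a triangle inequality ($\|w_{y,h_B(y)}-w_{y_0,b_0}\|\le 2\sqrt{2\delta}$) gives $p(h_A(x),h_B(y)|x,y)\ge c-3\sqrt{2c\delta}-4\delta$, so you need roughly $3\sqrt2\,k\sqrt\delta+5k\delta<1$. Even after sharpening the rigidity bounds to $\delta$ (using that the ``row'' and ``column'' losing masses at a single input are disjoint, so their sum is at most $\ell(x,y_0)\le\delta$), the triangle-inequality route still needs about $3k\sqrt\delta+3k\delta<1$. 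Neither follows from the theorem's hypothesis $\epsilon+2k(\delta+\sqrt\delta)<1$: e.g.\ for large $k$ one can have $2k\sqrt\delta$ close to $1$ with $\epsilon,2k\delta$ negligible, and then your bound gives nothing.

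The paper closes exactly this constant-counting in two ways you should adopt. First, the anchor is a \emph{maximal winning cell}: $r=\max\{p(a,b|x,y):(x,y,a,b)\in W\}=p(a_0,b_0|x_0,y_0)$, and since a unique game has exactly $k$ winning outputs per input, $1-\epsilon\le \mathbb{E}_{\mathcal G,\pi}(p)\le kr$, so the main term is at least $(1-\epsilon)/k$ with no extra $\delta$-loss (your $c\ge1/k$ bound is not the issue; the issue is what you pay to reach the main term). Second, the expansion uses \emph{both} anchors at once: write $v_{x,h_A(x)}=w_{y_0,b_0}+(\text{small})$ and $w_{y,h_B(y)}=v_{x_0,a_0}+(\text{small})$, so the main term of $\langle v_{x,h_A(x)}|w_{y,h_B(y)}\rangle$ is exactly $r$ and the error is at most $2\sqrt\delta+\delta$ (each rigidity estimate being $\le\delta$ by the disjointness remark above). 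Then $p(h_A(x),h_B(y)|x,y)\ge (1-\epsilon)/k-2\sqrt\delta-\delta>\delta$ is literally equivalent to the stated condition $\epsilon+2k(\delta+\sqrt\delta)<1$, and your pointwise argument finishes the proof. As written, your proof establishes the theorem only under a strictly stronger smallness assumption on $\epsilon$, so the key quantitative step is missing rather than merely unpolished.
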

 \begin{proof}
   Suppose that $\omega_{vect}(\mathcal{G},\pi)=\bb E_{\cl G,\pi}(p)\ge 1-\epsilon$ where $p(a,b|x,y)=\langle v_{x,a}|w_{y,b}\rangle$ is a vect-density. We will show that $\omega_{det}(\mathcal{G},\pi)=1$.

   Let $W$ be the winning set and define
\[ r= \max \{ p(a,b|x,y) :  (x,y,a,b) \in W \},\]
and let $(x_0,y_0,a_0,b_0) \in W$ satisfy
\[r= p(a_0,b_0 | x_0, y_0) = \langle v_{x_0,a_0} | w_{y_0,b_0} \rangle.\]
Since $\mathcal{G}$ is a unique game, we have that
\[ 1 - \epsilon \le \bb E_{\cl G,\pi}(p) \le \sum_{(x,y,a,b) \in W} \pi(x,y) r = k r.\]
Define $h_A:X\to A$, $h_B:Y\to A$ using $(x_0,y_0,a_0,b_0)$, exactly as in the previous proof. Then for $\eta=\sum_b w_{y,b}=\sum_a v_{x,a}$, we have
 \[0\le \|w_{y_0,b_0}\|^2=\langle w_{y_0,b_0}|\eta\rangle=\langle w_{y_0,b_0}|v_{x,h_A(x)}\rangle+ \delta_{b_0}^A(x)\]
where $\delta_{b_0}^A(x)=\sum_{a\ne h_A(x)}p(a,b_0|x,y_0)$. And
\[ \|v_{x,h_A(x)}\|^2=\langle v_{x,h_A(x)}|\eta\rangle=\langle v_{x,h_A(x)}|w_{y_0,b_0}\rangle+\delta_{b_0}^B(x)\]
where
$\delta_{b_0}^B(x)=\sum_{b\ne b_0}p(h_A(x),b|x,y_0)$.

If vectors $v,w$ in a Hilbert space satisfy $|\|w\|^2-\langle v,w\rangle|\leq \delta_1$ and $|\|v\|^2-\langle w,v\rangle|\leq \delta_2$, then $\|w-v\|^2\leq \delta_1+\delta_2$.
So
\[ \|v_{x,h_A(x)}-w_{y_0,b_0}\|^2\leq \delta_{b_0}(x):= \delta_{b_0}^A(x)+\delta_{b_0}^B(x).\]
Similarly,
\[ \|v_{x_0,a_0}-w_{y,h_B(y)}\|^2\leq \delta_{a_0}(y):=\delta_{a_0}^A(y)+\delta_{a_0}^B(y)\]
where
$\delta_{a_0}^A(y)=\sum_{a\ne a_0}p(a,h_B(y)|x_0,y)$
and
$\delta_{a_0}^B(y)=\sum_{b\ne h_B(y)}p(a_0,b|x_0,y)$.
So
\begin{align*}
  p(h_A(x),h_B(y)|x,y)
  &
  =\langle v_{x,h_A(x)}|w_{y,h_B(y)}\rangle
  \\&=\langle w_{y_0,b_0}+(v_{x,h_A(x)}-w_{y_0,b_0})|v_{x_0,a_0}+(w_{y,h_B(y)}-v_{x_0,a_0})\rangle
  \\&=r +\langle v_{x,h_A(x)}-w_{y_0,b_0}|v_{x_0,a_0}\rangle+\langle w_{y_0,b_0}|w_{y,h_B(y)}-v_{x_0,a_0}\rangle\\&\hphantom{{}=r{}}+\langle v_{x,h_A(x)}-w_{y_0,b_0}|w_{y,h_B(y)}-v_{x_0,a_0}\rangle
  \\
  &\ge  \frac{1-\epsilon}{k}-\gamma(x,y)
\end{align*}
where
\begin{align*}
  \gamma(x,y)
  &=|\langle v_{x,h_A(x)}-w_{y_0,b_0}|v_{x_0,a_0}\rangle|+|\langle w_{y_0,b_0}|w_{y,h_B(y)}-v_{x_0,a_0}\rangle|\\&\hphantom{{}={}}+|\langle v_{x,h_A(x)}-w_{y_0,b_0}|w_{y,h_B(y)}-v_{x_0,a_0}\rangle|\\
  &\le \sqrt{\delta_{b_0}(x)}\|v_{x_0,a_0}\|+\sqrt{\delta_{a_0}(y)}\|w_{y_0,b_0}\|+\sqrt{\delta_{b_0}(x)\delta_{a_0}(y)}
  \\
  &\leq 2\sqrt{\frac{\epsilon}{\pi_0}}+\frac{\epsilon}{\pi_0}%
\end{align*}
where we have made the fairly crude estimates
\begin{align*}\delta_{b_0}(x,y_0)&\le \sum_{a,b:V(x,y_0,a,b)=0}p(a,b|x,y_0)\\&\leq \frac1{\pi(x,y_0)}\sum_{x,y,a,b:V(x,y,a,b)=0}\pi(x,y)p(a,b|x,y)=\frac{\epsilon}{\pi(x,y_0)}\leq \frac{\epsilon}{\pi_0}
\end{align*}
and similarly, $\delta_{a_0}(y)\le \frac{\epsilon}{\pi_0}$. %
This all implies that
\begin{equation*}
  p(h_A(x),h_B(y)|x,y)-\frac\epsilon{\pi_0} \ge \frac1k\left(1-\left(\epsilon+2k\left(\frac\epsilon{\pi_0}+\sqrt{\frac{\epsilon}{\pi_0}}\right)\right)\right)>0
\end{equation*}
so $\pi_0 p(h_A(x),h_B(y)|x,y)>\epsilon$. Hence,
the expected value $\bb E_{\cl G,\pi}(h_A,h_B)$ of the deterministic strategy $(h_A,h_B)$ satisfies
\begin{align*}
  (1-\bb E_{\cl G,\pi}(h_A,h_B))\epsilon&=\sum_{x,y:V(x,y,h_A(x),h_B(y))=0}\pi(x,y)\epsilon
  \\&
  < \pi_0\sum_{x,y:V(x,y,h_A(x),h_B(y))=0}\pi(x,y) p(h_A(x),h_B(y)|x,y)
  \\&
  \leq \pi_0\sum_{x,y,a,b:V(x,y,a,b)=0}\pi(x,y) p(a,b|x,y)
  \\&=\pi_0(1-\bb E_{\cl G,\pi}(p))\leq\pi_0 \epsilon.
\end{align*}
So
\[ \bb E_{\cl G,\pi}(h_A,h_B)> 1-\pi_0.\]
Since $\bb E_{\cl G,\pi}(h_A,h_B)$ is a sum of values of $\pi$, this implies that $\bb E_{\cl G,\pi}(h_A,h_B)=1$, hence $\omega_{det}(\cl G,\pi)=1$.
\end{proof}
\begin{remark}\label{remark:constant}
  For $k\ge2$, any positive $\epsilon\le \frac{\pi_0}{(1+\sqrt{1+3/k})^2k^2}$ will be sufficiently small for the above hypothesis to hold. In particular, since $(1+\sqrt{1+3/k})^2\le (1+\sqrt{5/2})^2<7$, we have that $\epsilon=\frac{\pi_0}{7k^2}$ will do.
  Moreover, if we know the game's vect-value is achieved by a uniform vector strategy, meaning $\|v_{x,a}\|=\|w_{y,b}\|=\tfrac1{\sqrt k}$ for every $x,y,a,b$, then this gives us an improved upper bound $\gamma(x,y)\le 2\sqrt{\epsilon/\pi_0k}+\epsilon/\pi_0$ in the proof. This lets us relax the hypothesis to $\epsilon+ 2k\epsilon/\pi_0+2\sqrt{k\epsilon/\pi_0}<1$, which is satisfied by, for example, $\epsilon=\frac{\pi_0}{9k}$.
\end{remark}

\section{A groups analogue of XOR games}
 Given a finite group $\Omega$ and a set $X$, let $f:X \times Y \to \Omega$ be a function.
 We define a two person game, denoted by $\cl G= (X, Y, \Omega, f)$, by setting the value function to be
 \[ V(x,y,a,b) =1 \iff a b^{-1} = f(x,y).\] In the case that $X=Y$, we simplify this notation further to $\cl G= (X, \Omega, f)$.
 We refer to any game given as above as a {\bf group based game}.

 When $\Omega = (\bb Z_2, +)$, then since $a-b=a+b$, this becomes the usual definition of an XOR game.
 Note that a deterministic strategy for such a game is a pair of functions $h_A: X \to \Omega$ and $h_B:Y \to \Omega$. Such a strategy is perfect if and only if $h_A(x) h_B(y)^{-1} = f(x,y), \forall x,y$.

 In the case that $X=Y$, the function $f$ defines a {\it synchronous game} if and only if
 $f(x,x) =e, \forall x$, where $e$ denotes the identity of $\Omega$. We call the game {\it symmetric} provided that $f(y,x) = f(x,y)^{-1}, \forall x,y \in X$.

 Note that a pair of functions defines a perfect strategy for a synchronous game $(X,\Omega,f)$ if and only if $h_A = h_B =h$ with $h(x) h(y)^{-1} = f(x,y), \forall x,y \in X$. Thus, for a synchronous group based game to have a perfect strategy it must also be symmetric.

We also remark that since the groups that we are interested in are finite, $\bb C(\Omega) = C^*(\Omega)$.

\begin{prop}[Averaging]\label{averaging} Let $\cl G=(X, Y, \Omega, f)$ be a group based game and let $\pi$ be a density on inputs.  If $p \in C_t$, respectively, $p \in C^s_t$, for $t=loc, q, qs,qa,qc, vect, ns$ and we set
\[ \hat{p}(a, b | x, y) := \frac{1}{|\Omega|} \sum_{c \in \Omega} p(ac, bc | x, y),\]
then $\hat{p} \in C_t$, respectively, $\hat{p} \in C^s_t$ with $\hat{p}(a,b|x,y) = \hat{p}(ac,bc|x,y), \forall c\in \Omega$ and
\[ \bb E_{\cl G, \pi}(p) = \bb E_{\cl G, \pi}(\hat{p}).\]
\end{prop}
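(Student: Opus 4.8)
The plan is to reduce everything to one observation about the group translates of $p$. For $c\in\Omega$, set
\[ p_c(a,b|x,y):=p(ac,bc|x,y). \]
First I would check that $p_c$ lies in the same set as $p$. The point is that $p_c$ is obtained from $p$ by applying the single permutation $a\mapsto ac$ of the output alphabet $A=\Omega=B$ simultaneously to both players and uniformly over all inputs, and every model in the list is stable under such a relabeling of the representing data. For $C_{loc}$, whose extreme points are the deterministic densities, a vertex with $a=h_A(x),\,b=h_B(y)$ is sent to the vertex with $a=h_A(x)c^{-1},\,b=h_B(y)c^{-1}$; for $C_q,C_{qs},C_{qa},C_{qc}$ one replaces the operators $E_{x,a},F_{y,b}$ by $E_{x,ac},F_{y,bc}$, which still sum to the identity since $a\mapsto ac$ is a bijection of $\Omega$; for $C_{ns}$ the marginal conditions are obviously preserved; and for $C_{vect}$ one replaces $v_{x,a},w_{y,b}$ by $v_{x,ac},w_{y,bc}$, under which the orthogonality relations, the identities $\sum_a v_{x,a}=\sum_b w_{y,b}=\eta$, and the positivity $\langle v_{x,ac}|w_{y,bc}\rangle\ge0$ all survive. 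Since $a\ne b$ forces $ac\ne bc$, each $p_c$ is also synchronous whenever $p$ is, which handles the sets $C_t^s$.

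Granting this, $\hat p\in C_t$ (resp.\ $C_t^s$) is immediate, because $\hat p=\frac1{|\Omega|}\sum_{c\in\Omega}p_c$ is a uniform convex combination of elements of $C_t$ (resp.\ $C_t^s$) and these sets are convex. For the right-translation invariance I would reindex: for $c'\in\Omega$,
\[ \hat p(ac',bc'|x,y)=\frac1{|\Omega|}\sum_{c\in\Omega}p(ac'c,bc'c|x,y)=\frac1{|\Omega|}\sum_{c''\in\Omega}p(ac'',bc''|x,y)=\hat p(a,b|x,y), \]
using the substitution $c''=c'c$ and that $c\mapsto c'c$ permutes $\Omega$.

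For the equality of expected values I would use that the winning predicate of a group based game is translation invariant: $(x,y,a,b)\in W\iff ab^{-1}=f(x,y)\iff (ac)(bc)^{-1}=f(x,y)\iff(x,y,ac,bc)\in W$. Hence for each fixed $c$ and each pair $(x,y)$ the map $(a,b)\mapsto(ac,bc)$ permutes $\{(a,b):(x,y,a,b)\in W\}$, so
\[ \sum_{(x,y,a,b)\in W}\pi(x,y)\,p_c(a,b|x,y)=\sum_{(x,y,a,b)\in W}\pi(x,y)\,p(a,b|x,y)=\bb E_{\cl G,\pi}(p), \]
and averaging over $c\in\Omega$ yields $\bb E_{\cl G,\pi}(\hat p)=\bb E_{\cl G,\pi}(p)$.

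I do not expect a serious obstacle. The only step needing a little care is the first one: checking, model by model, that applying one fixed output permutation to both players (the same permutation, independent of the inputs) keeps the density inside $C_t$. This is routine once it is phrased as a relabeling of the representing data, but it must be carried out for each $t$ in the list; alternatively, one could simply invoke the general principle that each $C_t$ is closed under local classical post-processing with shared randomness, of which this averaging is a special case.
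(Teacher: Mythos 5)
Your proof is correct, and it is essentially the intended argument: the paper explicitly leaves this proposition to the reader, and the natural verification is exactly yours --- each translate $p_c$ stays in $C_t$ (resp.\ $C_t^s$) because a fixed, input-independent relabeling $a\mapsto ac$ of both players' outputs preserves the representing data for every model in the list (for $t=qa$ one just adds that the coordinate permutation $p\mapsto p_c$ is continuous, so it preserves the closure of $C_q$), then convexity gives $\hat p\in C_t$, reindexing gives the right-translation invariance, and the invariance of the winning predicate $ab^{-1}=f(x,y)=(ac)(bc)^{-1}$ gives $\bb E_{\cl G,\pi}(p_c)=\bb E_{\cl G,\pi}(p)$ and hence the equality of expected values.
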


We leave the proof to the reader.

Recall that given a C*-algebra $\cl A$ and a set $X$ a function
\[ g: X \times X \to \cl A\]
is called {\it positive definite} provided that for every finite set $\{ x_1,...,x_n \} \subseteq X$ we have that
\[ (g(x_i, x_j)) \in M_n(\cl A)^+.\]
Given a finite group $\Omega$ we let $\tau_{\Omega}: \bb C(\Omega) \to \bb C$ denote the canonical trace that gives the coefficient of the group identity.

\begin{prop} Let $X$ be a finite set, $\Omega$ a finite group and let $p \in C_{qc}^s(X, X, \Omega, \Omega)$. Then there is a C*-algebra $\cl A$ with tracial state $\tau$ and a positive definite function $h: X \times X \to C^*(\Omega) \otimes \cl A$
such that
\[ \hat{p}(a,b|x,y) = \tau_{\Omega} \otimes \tau((a^{-1} \otimes 1)h(x,y)(b \otimes 1)).\]
For a group based game $(X,\Omega, f)$ and an input density $\pi:X\times X\to [0,1]$, we have
\[ \bb E_{\cl G,\pi}(p)=|\Omega|\sum_{x,y}\pi(x,y)\tau_\Omega\otimes \tau\big((f(x,y)^{-1}\otimes 1)h(x,y)\big).\]
\end{prop}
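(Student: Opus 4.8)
The plan is to combine the known tracial description of synchronous $qc$ correlations with Proposition~\ref{averaging} (Averaging) and the group structure of $\Omega$. Since $p\in C_{qc}^s(X,X,\Omega,\Omega)$, there is a unital C*-algebra $\cl A$ with a tracial state $\tau$ and, for each $x\in X$, a projection-valued measure $\{E_{x,a}:a\in\Omega\}$ in $\cl A$ (so $\sum_a E_{x,a}=1$) with $p(a,b|x,y)=\tau(E_{x,a}E_{y,b})$; see \cite{KPS, HMNPR}. Concretely one may take $\cl A=C^*(\bb F(n,k))$ with $n=|X|$, $k=|\Omega|$, $E_{x,a}=e_{x,a}$ the canonical spectral projections and $\tau$ the tracial state representing $p$ via the theorem of \cite{HMNPR} quoted above (using $\tau(e_{x,a}e_{y,b}e_{x,a})=\tau(e_{x,a}e_{y,b})$). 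Because $\Omega$ is finite, $C^*(\Omega)$ is finite dimensional, so $C^*(\Omega)\otimes\cl A$ is unambiguous and $\tau_\Omega\otimes\tau$ is a well-defined tracial state on it.

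The key step is to package each PVM into a single element of $C^*(\Omega)\otimes\cl A$. I would set
\[ W_x:=\frac{1}{\sqrt{|\Omega|}}\sum_{a\in\Omega}a^{-1}\otimes E_{x,a}\in C^*(\Omega)\otimes\cl A,\]
where $a^{-1}$ denotes the corresponding group unitary in $C^*(\Omega)$, and define $h(x,y):=W_x^*W_y$. Positive definiteness of $h$ is then automatic: for $x_1,\dots,x_n\in X$ the matrix $\big(h(x_i,x_j)\big)_{i,j}$ equals $R^*R$ with $R=[\,W_{x_1}\ \cdots\ W_{x_n}\,]$ a row over $C^*(\Omega)\otimes\cl A$, so it lies in $M_n(C^*(\Omega)\otimes\cl A)^+$.

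Next I would compute directly. Expanding, $h(x,y)=\frac1{|\Omega|}\sum_{c,d\in\Omega}cd^{-1}\otimes E_{x,c}E_{y,d}$, so for $a,b\in\Omega$,
\[\tau_\Omega\otimes\tau\big((a^{-1}\otimes1)h(x,y)(b\otimes1)\big)=\frac1{|\Omega|}\sum_{c,d}\tau_\Omega(a^{-1}cd^{-1}b)\,\tau(E_{x,c}E_{y,d}).\]
Since $\tau_\Omega$ extracts the coefficient of the identity, only the terms with $cd^{-1}=ab^{-1}$, i.e.\ $c=(ab^{-1})d$, survive, giving $\frac1{|\Omega|}\sum_d p\big((ab^{-1})d,d\mid x,y\big)$; substituting $d\mapsto bd$ turns this into $\frac1{|\Omega|}\sum_d p(ad,bd\mid x,y)=\hat p(a,b|x,y)$, which proves the first displayed identity. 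For the game value, Proposition~\ref{averaging} gives $\bb E_{\cl G,\pi}(p)=\bb E_{\cl G,\pi}(\hat p)$, and since $(x,y,a,b)\in W$ exactly when $a=f(x,y)b$,
\[\bb E_{\cl G,\pi}(\hat p)=\sum_{x,y}\pi(x,y)\sum_{b\in\Omega}\hat p\big(f(x,y)b,b\mid x,y\big).\]
By the first part, $\hat p(f(x,y)b,b\mid x,y)=\tau_\Omega\otimes\tau\big((b^{-1}f(x,y)^{-1}\otimes1)h(x,y)(b\otimes1)\big)$, and since $\tau_\Omega\otimes\tau$ is tracial this equals $\tau_\Omega\otimes\tau\big((f(x,y)^{-1}\otimes1)h(x,y)\big)$, independent of $b$; hence the inner sum contributes a factor $|\Omega|$ and the second displayed formula follows.

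The only real difficulty I anticipate is bookkeeping: arranging the inverses and the left/right translations in the definition of $W_x$ (hence in $h$) so that $\tau_\Omega\otimes\tau\big((a^{-1}\otimes1)h(x,y)(b\otimes1)\big)$ lands on $\hat p(a,b|x,y)$ exactly rather than on some translated or reflected variant. This is also where one implicitly uses that $\hat p(a,b|x,y)$ depends only on $ab^{-1}$, which is precisely what makes extracting a $C^*(\Omega)$-coefficient meaningful; everything else is a routine trace manipulation.
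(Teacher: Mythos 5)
Your proof is correct and takes essentially the same route as the paper: the same tracial/PVM representation of a synchronous $qc$-density, the same element $h$ (your $W_x^*W_y$ coincides with the paper's $w(x)w(y)^*$), the same extraction of the identity coefficient to recover $\hat p$, and the same averaging-plus-trace-invariance step for the game value.
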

\begin{proof} We have a tracial C*-algebra $(\cl A, \tau)$ and an $(n,k)$-PVM $\{ e_{x,a} \}$ in $\cl A$ such that
\[ p(a,b|x,y) = \tau(e_{x,a} e_{y,b}).\]
Set
\[ w(x) = \frac{1}{\sqrt{|\Omega|}}\sum_{a \in \Omega} a \otimes e_{x,a} \in C^*(\Omega) \otimes \cl A.\]
Then $h(x,y) =w(x) w(y)^*$ is a positive definite function into $C^*(\Omega) \otimes \cl A$.

We compute
\begin{align*} \tau&_{\Omega} \otimes \tau ((a^{-1} \otimes 1) h(x,y) (b \otimes 1)) = \frac{1}{|\Omega|}(\tau_{\Omega} \otimes \tau) \big( \sum_{c,d \in \Omega} a^{-1}cd^{-1}b \otimes e_{x,c}e_{y,d} \big) \\ &= \frac{1}{|\Omega|}\sum_{c,d:a^{-1}c=b^{-1}d} p(c,d|x,y) =
\frac{1}{|\Omega|}\sum_{r\in \Omega} p(ar, br | x,y) = \hat{p}(a,b | x,y).
\end{align*}
If $b,c\in \Omega$ and $z\in C^*(\Omega)\otimes \cl A$, then \[\tau_\Omega\otimes \tau(((cb)^{-1}\otimes 1)z(b\otimes 1))=\tau_\Omega\otimes \tau((c^{-1}\otimes 1)z).\] Hence,
\begin{align*}
  \bb E_{\cl G,\pi}(p)&=\bb E_{\cl G,\pi}(\hat p)\\&=\sum_{x,y}\pi(x,y)\sum_{b\in \Omega,a=f(x,y)b} \tau_\Omega\otimes \tau((a^{-1}\otimes 1)h(x,y)(b\otimes 1))\\
  &=|\Omega|\sum_{x,y}\pi(x,y)\tau_\Omega\otimes\tau((f(x,y)^{-1}\otimes 1)h(x,y)).\qedhere
\end{align*}
\end{proof}
\begin{remark} In the case that $p \in C^s_q(X, X, \Omega, \Omega)$, one can assume that in addition $\cl A$ is a finite dimensional C*-algebra. In the case that $p \in C^s_{qa}(X, X, \Omega, \Omega)$, one can assume that $\cl A$ is an ultrapower of the hyperfinite $II_1$-factor and that $\tau$ is the canonical trace. This follows from results of \cite{KPS}, see \cite{HP} for further explanation. Also, by results of \cite{KPS}, $C^s_q= C^s_{qs}$.
\end{remark}

\subsection{Perfect strategies for group based games}
\subsubsection{Deterministic and vector strategies}

 \begin{prop} Let $\cl G= (X, \Omega, f)$ be a synchronous group based game.  Then $\cl G$ has a perfect vect-strategy if and only if the map $f:X \times X \to C^*(\Omega)$ is positive definite.
 \end{prop}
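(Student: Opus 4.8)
The plan is to route both implications through the left regular representation $\lambda\colon C^*(\Omega)=\bb C(\Omega)\to B(\ell^2(\Omega))$. Since $\Omega$ is finite, $\lambda$ is faithful (it contains every irreducible representation of $\Omega$), hence isometric, so its inflation $\mathrm{id}_{M_n}\otimes\lambda$ (with $n=|X|$) is a $*$-isomorphism onto its image and in particular reflects positivity; thus $f$ is positive definite if and only if the operator matrix $\bigl(\lambda(f(x,y))\bigr)_{x,y\in X}$ is positive in $M_n(B(\ell^2\Omega))$. Throughout I write $m=|\Omega|$, let $\{\delta_g\}_{g\in\Omega}$ be the standard basis of $\ell^2\Omega$ so that $\lambda(g)\delta_h=\delta_{gh}$, and record the elementary fact that $\mathbf 1:=\sum_{g\in\Omega}\delta_g$ is fixed by every $\lambda(g)$.

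For the converse implication ($f$ positive definite $\Rightarrow$ perfect vect-strategy) I would start from positivity of $M:=\bigl(\lambda(f(x,y))\bigr)_{x,y}$; being self-adjoint, $M$ already forces $f(y,x)=f(x,y)^{-1}$. Write $M$ as a Gram matrix of operators: taking $\cl L=(\ell^2\Omega)^{\oplus n}$ and $R_x=M^{1/2}\iota_x\colon\ell^2\Omega\to\cl L$, where $\iota_x$ is the $x$-th coordinate inclusion, one has $R_x^*R_y=\lambda(f(x,y))$. I then set $v_{x,a}=\tfrac1{\sqrt m}R_x\delta_a$ and $\eta=\tfrac1{\sqrt m}R_x\mathbf 1$, with $w_{y,b}=v_{y,b}$, and claim this is a perfect (synchronous) vect-strategy. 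The diagonal blocks give $R_x^*R_x=\lambda(f(x,x))=\lambda(e)=I$ (here synchronicity is used), yielding orthogonality of $\{v_{x,a}\}_a$ and $\|v_{x,a}\|^2=1/m$; from $\lambda(g)\mathbf 1=\mathbf 1$ one gets $\|R_x\mathbf 1\|^2=m=\langle R_x\mathbf 1,R_y\mathbf 1\rangle$, so the vectors $R_x\mathbf 1$ all coincide and $\eta$ is a well-defined unit vector with $\sum_a v_{x,a}=\eta$ for every $x$; and $\langle v_{x,a}|v_{y,b}\rangle=\tfrac1m\langle\lambda(f(y,x))\delta_a,\delta_b\rangle$ equals $\tfrac1m$ when $b=f(y,x)a$, i.e.\ (using $f(y,x)=f(x,y)^{-1}$) when $a=f(x,y)b$, and $0$ otherwise --- that is, the associated density vanishes off the winning set, so it is perfect.

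For the forward implication I would first observe that a perfect density for a synchronous game is automatically synchronous (if $a\ne b$ then $ab^{-1}\ne e=f(x,x)$, so $(x,x,a,b)$ is losing and $p(a,b|x,x)=0$), so a perfect vect-strategy lies in $C^s_{vect}$; moreover the winning set is invariant under $(a,b)\mapsto(ac,bc)$, so the Averaging proposition lets me assume the density is $\Omega$-invariant. By the structure theorem for synchronous vect-densities recorded above, $p(a,b|x,y)=\langle v_{x,a}|v_{y,b}\rangle$ for orthogonal families $\{v_{x,a}\}_a$ with $\sum_a v_{x,a}=\eta$ a unit vector; perfectness together with $\Omega$-invariance and $\sum_{a,b}p(a,b|x,y)=1$ pins down $p(a,b|x,y)=\tfrac1m$ when $a=f(x,y)b$ and $0$ otherwise, so $\|v_{x,a}\|^2=1/m$. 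Letting $T_x\colon\ell^2\Omega\to\cl H$ be the operator with $T_x\delta_a=v_{x,a}$, a short computation gives $T_x^*T_y=\tfrac1m\lambda(f(y,x))^*$. Since $\bigl(T_x^*T_y\bigr)_{x,y}=T^*T\ge0$ for the row operator $T=[T_x]_x$, and a positive matrix is self-adjoint, we get $\lambda(f(y,x))=\lambda(f(x,y))^*$, hence $f(y,x)=f(x,y)^{-1}$ by faithfulness, and therefore $\tfrac1m\bigl(\lambda(f(x,y))\bigr)_{x,y}=\bigl(T_x^*T_y\bigr)_{x,y}\ge0$; by the first paragraph, $f$ is positive definite.

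The only genuinely substantive content is the two identities $R_x^*R_y=\lambda(f(x,y))$ and $T_x^*T_y=\tfrac1m\lambda(f(y,x))^*$ linking the operator Gram structure to the regular representation, together with the remark $\lambda(g)\mathbf 1=\mathbf 1$ that makes the common vector $\eta$ well-defined; everything else is index bookkeeping and checking the four defining conditions of a vect-density. The step I expect to need the most care is the symmetry relation $f(y,x)=f(x,y)^{-1}$: it is not among the hypotheses and must be produced --- from positive-definiteness of $f$ in the converse direction, and from self-adjointness of the operator Gram matrix in the forward direction --- and it is exactly what is needed for the support of the density to coincide with the winning set.
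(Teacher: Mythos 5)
Your argument is correct, but it takes a genuinely different route from the paper. The paper disposes of the proposition in a few lines by first invoking its Section~3 theorem that a unique game with a perfect vect-density has a perfect deterministic strategy, thereby reducing everything to deterministic strategies: a perfect strategy $h$ gives the rank-one factorization $(f(i,j))=(h(i)h(j)^*)_{i,j}\ge 0$, and conversely a positive matrix of unitaries with identity diagonal satisfies $f(i,j)=f(1,i)^*f(1,j)$, so $h(j)=f(1,j)$ is an explicit perfect deterministic strategy. You never use that rigidity theorem and never produce a deterministic strategy; instead you work entirely at the level of vect-densities through the (faithful, hence positivity-reflecting) left regular representation: for the converse you Gram-factor $(\lambda(f(x,y)))_{x,y}\ge0$ and check that $v_{x,a}=\tfrac1{\sqrt{|\Omega|}}R_x\delta_a$ is a perfect synchronous vect-strategy, and for the forward direction you use automatic synchronicity of perfect densities, the Averaging proposition, and Theorem~2.1(a) to normalize any perfect vect-density to the canonical $\tfrac1{|\Omega|}$-uniform one supported on the winning set, whose Gram matrix is exactly $\tfrac1{|\Omega|}(\lambda(f(x,y)))_{x,y}\ge0$. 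I checked the details (well-definedness of $\eta$ via $\lambda(g)\mathbf 1=\mathbf 1$, the derivation of $f(y,x)=f(x,y)^{-1}$ from self-adjointness of the Gram matrix, and the identification $T_x^*T_y=\tfrac1{|\Omega|}\lambda(f(y,x))^*$, which is convention-dependent but consistent) and they go through. The trade-off: the paper's reduction is shorter and yields the stronger conclusion that positive definiteness is equivalent to a perfect \emph{deterministic} strategy, with the strategy written down explicitly, which is in the spirit of the rest of the section; your proof is self-contained modulo Theorem~2.1(a) and the Averaging proposition, avoids the unique-games rigidity machinery, and has the side benefit of showing that, after averaging, the only perfect vect-density is the canonical uniform one.
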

  \begin{proof} Since $\cl G$ is a unique game, it will have a perfect vect-strategy if and only if it has a perfect deterministic strategy.
 If we consider the case that $X= [n]$ then a perfect deterministic strategy for~$\cl G$ corresponds to factoring
 \[ (f(i,j)) = (h(1),....,h(n))^t(h(1)^*,...,h(n)^*),\]
 which implies that $(f(i,j)) \in M_n(C^*(\Omega))^+.$ Hence, $f$ is positive definite.

 Conversely, if $f$ is positive definite, then $(f(i,j)) \in M_n(C^*(\Omega))^+.$ But, since each $f(i,j)$ is a unitary in $C^*(\Omega)$, this positivity implies that
 $f(i,j) = f(1,i)^* f(1,j)$ and so $h(j) = f(1,j)$ is a perfect deterministic strategy.
 \end{proof}

\begin{thm}
   Let $\mathcal{G}=(X, Y, \Omega, f)$ be a group based game with input density $\pi$, assume
   $\pi_0=\min_{x,y}\pi(x,y)>0$ and set $\epsilon= \frac{\pi_0}{7|\Omega|}$. If $\omega_{vect}(\mathcal{G},\pi)\ge 1- \epsilon$, then $\omega_{vect}(\cl G, \pi) = \omega(\cl G, \pi) =1$. If $\mathcal{G}$ has no perfect deterministic strategy, then
   $\omega_{vect}(\cl G, \pi) <1-\epsilon$.
 \end{thm}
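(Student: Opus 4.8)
The plan is to use the Averaging Proposition~\ref{averaging} to pass to a vect-density with enough symmetry that the proof of Theorem~\ref{near1} goes through with a single factor $|\Omega|$ in place of $|\Omega|^2$; this is the group-theoretic strengthening of the uniform-strategy improvement recorded in Remark~\ref{remark:constant}. If $|\Omega|=1$ the game is trivial, so assume $k:=|\Omega|\ge 2$. A vect-density can be realised with Hilbert-space dimension bounded in terms of the parameters of $\mathcal G$, so $C_{vect}$ is compact and $\omega_{vect}(\mathcal G,\pi)$ is attained; if $\omega_{vect}(\mathcal G,\pi)\ge 1-\epsilon$, fix a vect-density $p$, realised by $\{v_{x,a}\},\{w_{y,b}\},\eta$, with $\bb E_{\mathcal G,\pi}(p)\ge 1-\epsilon$. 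Replacing $p$ by the averaged density $\hat p$ of Proposition~\ref{averaging} we still have a vect-density with $\bb E_{\mathcal G,\pi}(\hat p)\ge 1-\epsilon$, but now $\hat p(a,b|x,y)$ depends only on $ab^{-1}$ and $(x,y)$, say $\hat p(a,b|x,y)=\psi(ab^{-1};x,y)$. The key consequence is that any realisation $\{\hat v_{x,a}\},\{\hat w_{y,b}\}$ of $\hat p$ is \emph{uniform}: $\|\hat v_{x,a}\|^2=\sum_b\hat p(a,b|x,y)=\sum_{d\in\Omega}\psi(d;x,y)=\|\hat w_{y,b}\|^2$ is independent of $a$ and $b$, and summing over $a$ gives $1=k\sum_d\psi(d;x,y)$, so $\|\hat v_{x,a}\|=\|\hat w_{y,b}\|=k^{-1/2}$. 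Moreover $\hat p$ equals the constant $g(x,y):=\psi(f(x,y);x,y)$ on all $k$ winning tuples at $(x,y)$, so the losing mass there is $1-kg(x,y)\le\epsilon/\pi(x,y)\le\epsilon/\pi_0$.

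Now repeat the proof of Theorem~\ref{near1}: let $r=\max_{(x,y,a,b)\in W}\hat p(a,b|x,y)=\hat p(a_0,b_0|x_0,y_0)$, and set $h_A(x)=f(x,y_0)b_0$, $h_B(y)=f(x_0,y)^{-1}a_0$ with $a_0=h_A(x_0)$; as there $r\ge(1-\epsilon)/k$, and $(x,y_0,h_A(x),b_0)$, $(x_0,y,a_0,h_B(y))$, $(x_0,y_0,a_0,b_0)$ are winning tuples. Using $\|\hat w_{y_0,b_0}\|^2=\|\hat v_{x,h_A(x)}\|^2=k^{-1}$ and $\hat p(h_A(x),b_0|x,y_0)=g(x,y_0)$ gives $\delta_{b_0}(x)=2\big(\tfrac1k-g(x,y_0)\big)\le\tfrac{2\epsilon}{k\pi_0}$ and similarly $\delta_{a_0}(y)\le\tfrac{2\epsilon}{k\pi_0}$, a factor $k$ better than the general case. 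The new point is that the error terms are \emph{linear} in these $\delta$'s. Write $s=\hat v_{x,h_A(x)}-\hat w_{y_0,b_0}$, $t=\hat w_{y,h_B(y)}-\hat v_{x_0,a_0}$, $s_0=\hat v_{x_0,a_0}-\hat w_{y_0,b_0}$, so $\|s\|^2,\|t\|^2,\|s_0\|^2\le\tfrac{2\epsilon}{k\pi_0}$. Since $(x,y_0,h_A(x),b_0)$ and $(x_0,y,a_0,h_B(y))$ are winning, $\langle s|\hat w_{y_0,b_0}\rangle=g(x,y_0)-\tfrac1k$ and $\langle\hat v_{x_0,a_0}|t\rangle=g(x_0,y)-\tfrac1k$, both of modulus $\le\tfrac\epsilon{k\pi_0}$. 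Using $\hat v_{x_0,a_0}=\hat w_{y_0,b_0}+s_0$ (so $\hat w_{y_0,b_0}=\hat v_{x_0,a_0}-s_0$) and $\langle\hat w_{y_0,b_0}|\hat v_{x_0,a_0}\rangle=r$, the expansion
\[\hat p(h_A(x),h_B(y)|x,y)=r+\langle s|\hat v_{x_0,a_0}\rangle+\langle\hat w_{y_0,b_0}|t\rangle+\langle s|t\rangle\]
together with $|\langle s|\hat v_{x_0,a_0}\rangle|=|\langle s|\hat w_{y_0,b_0}\rangle+\langle s|s_0\rangle|\le\tfrac{3\epsilon}{k\pi_0}$, $|\langle\hat w_{y_0,b_0}|t\rangle|=|\langle\hat v_{x_0,a_0}|t\rangle-\langle s_0|t\rangle|\le\tfrac{3\epsilon}{k\pi_0}$, and $|\langle s|t\rangle|\le\tfrac{2\epsilon}{k\pi_0}$ yields
\[\hat p(h_A(x),h_B(y)|x,y)\ge r-\tfrac{8\epsilon}{k\pi_0}\ge\tfrac{1-\epsilon}{k}-\tfrac{8\epsilon}{k\pi_0}.\]

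It remains to check that this exceeds $\epsilon/\pi_0$. Multiplying by $k$, we need $\epsilon\big(1+\tfrac{k+8}{\pi_0}\big)<1$, i.e.\ $\epsilon<\tfrac{\pi_0}{\pi_0+k+8}$; since $\pi_0\le 1$ and $k\ge2$ we have $6k\ge12>9\ge\pi_0+8$, so $7k>\pi_0+k+8$ and $\epsilon=\tfrac{\pi_0}{7k}$ suffices. Thus $\pi_0\,\hat p(h_A(x),h_B(y)|x,y)>\epsilon$ for every $x,y$, and the final paragraph of the proof of Theorem~\ref{near1} applies verbatim to give $\bb E_{\mathcal G,\pi}(h_A,h_B)=1$; hence $\omega_{det}(\mathcal G,\pi)=1$, so $\omega_{vect}(\mathcal G,\pi)=\omega(\mathcal G,\pi)=1$. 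The last assertion is the contrapositive: if $\mathcal G$ has no perfect deterministic strategy then $\omega_{det}(\mathcal G,\pi)<1$, and what was just proved forces $\omega_{vect}(\mathcal G,\pi)<1-\epsilon$.

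I expect the main obstacle to be the linear cross-term estimate: identifying the right $4$-tuples as winning so that the relevant inner products are values of $\hat p$, exploiting that $\hat v_{x_0,a_0}$ lies within $O(\sqrt{\epsilon/(k\pi_0)})$ of $\hat w_{y_0,b_0}$, and assembling the bounds so that no residual $\sqrt{\epsilon/\pi_0}$-type term survives. Everything else is the averaging reduction of Proposition~\ref{averaging} together with a line-by-line rerun of Theorem~\ref{near1}.
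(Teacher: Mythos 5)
Your proof is correct, and while it begins exactly as the paper does, its second half takes a genuinely sharper route. Both arguments first symmetrise the optimal vect-density so that all vectors have norm $|\Omega|^{-1/2}$: the paper does this by an explicit direct-sum construction $\hat v_{x,a}=\tfrac1{\sqrt{|\Omega|}}\bigoplus_c v_{x,ac}$, you do it by invoking Proposition~\ref{averaging} and reading off the norms from the invariance $\hat p(a,b|x,y)=\hat p(ac,bc|x,y)$ — these are the same step. The paper then simply cites the uniform-norm improvement of Remark~\ref{remark:constant}, which still bounds the cross terms by Cauchy--Schwarz and so retains a $2\sqrt{k\epsilon/\pi_0}$ term; note that the resulting hypothesis $\epsilon+2k\epsilon/\pi_0+2\sqrt{k\epsilon/\pi_0}<1$ fails at $\epsilon=\pi_0/(7k)$ (since $\tfrac27+\tfrac2{\sqrt7}>1$), so the paper's own sketch literally delivers only the threshold $\pi_0/(9|\Omega|)$ mentioned in the remark, not the $\pi_0/(7|\Omega|)$ in the statement. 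You instead exploit the full group-invariance of $\hat p$ — it is constant, equal to $g(x,y)$, on the $k$ winning pairs at each input — to replace the Cauchy--Schwarz bounds on $\langle s|\hat v_{x_0,a_0}\rangle$ and $\langle \hat w_{y_0,b_0}|t\rangle$ by exact identities (their leading parts are themselves values of $\hat p$ minus $1/k$), so every error term is linear in $\epsilon/(k\pi_0)$ and the condition becomes $\epsilon<\pi_0/(\pi_0+k+8)$, which comfortably covers $\epsilon=\pi_0/(7|\Omega|)$ for $k\ge2$. In short, your analysis costs a somewhat longer computation but actually justifies (indeed improves) the constant as stated, whereas the paper's shorter citation of Remark~\ref{remark:constant} supports only a slightly smaller $\epsilon$; the remaining points you add (attainment of $\omega_{vect}$ by bounded-dimension compactness, triviality for $|\Omega|=1$) are harmless and match the paper's implicit assumptions.
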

\begin{proof} Let $p(a,b | x,y) = \langle v_{x,a} \vert w_{y,b} \rangle$ be the vect-correlation that attains the value.
If we set
\[ \hat{v}_{x,a} = \frac{1}{\sqrt{\Omega}} \bigoplus_{c \in \Omega} v_{x,ac}, \,\, \hat{w}_{y,b} = \frac{1}{\sqrt{\Omega}} \bigoplus_{c \in \Omega} w_{y, bc},\]
then
\[ \hat{p}(a, b | x, y) = \langle \hat{v}_{x,a} \vert \hat{w}_{y,b} \rangle,\]
is a new vect-correlation that also attains the value.  Moreover, these vectors have the property that
\[ \| \hat{v}_{x,a} \|^2 = \frac{1}{|\Omega|}\sum_{c \in \Omega} \|v_{x,c}\|^2 = \frac{1}{|\Omega|}= \|\hat{w}_{y,b} \|^2.\]
Repeating the proof of Theorem~\ref{near1} as outlined in Remark~\ref{remark:constant} yields the result.
\end{proof}

  \section{Labelling Directed Graphs and Synchronous Games Based on the Cyclic Group of Order 3}

In this section we show that symmetric synchronous games based on the cyclic group of order 3 correspond to a certain labelling problem for directed graphs.  We first recall the relationship between the cut number of undirected graphs and synchronous XOR games.

  Given an undirected simple graph $G=(V,E)$ the cut number $\Cut(G)$ is the maximal number of (undirected) edges in a two-colorable subgraph of $G$.
  We regard the edge set $E$ as a subset of $V\times V$, where $(x,y)\in E\iff (y,x)\in E$, and $(x,x) \notin E$, so that
  \[\Cut(G)=\frac12\max_{w:V\to \bb Z_2}\big|\{(x,y)\in E: w(x)\ne w(y)\}\big|.\]
  We define $f:V \times V \to \bb Z_2$ by
  \[ f(x,y) = \begin{cases} 1 & (x,y) \in E, \\ 0 & (x,y) \notin E. \end{cases}\]
  The graph can be 2-colored if and only if there is a function $w:V \to \bb Z_2$ such that $w(x) - w(y) = 1, \forall (x,y) \in E$.
  Note that generally, if $(x,y) \notin E$ then it is not necessary that $w(x) - w(y) =0$. For example the 6-cycle can be two-colored, but some pairs of non-adjacent vertices will receive opposite colors. This problem is fixed by choice of density on the input set.  If we let our density be
  \[ \pi_E(x,y) = \frac{1}{|E|} \chi_E(x,y) := \frac{1}{|E|} \begin{cases} 1 & (x,y) \in E \\ 0 & (x,y) \notin E \end{cases},\]
  and let $\cl G=(V,\bb Z_2,f)$ denote the corresponding group based game, then
  \[\omega^s_{loc} ( \cl G,\pi_E) = \frac{1}{|E|} \max_{w:V \to \bb Z_2} \big|\{(x,y)\in E |w(x)-w(y)=1\}\big|.\]
  Hence,
  \[ 2 \Cut(G) = |E|\omega^s_{loc}(\cl G,\pi_E).\]
  This was used in \cite{HMNPR} to define a $t$-quantum cut number for $t=q,qc$ as
 \[  \Cut_t(G) = \frac{|E|}2 \omega^s_t(\cl G,\pi_E).\]

  We now wish to do a similar construction for directed graphs.

  By a {\it directed graph} $G=(X,D)$ on a vertex set $X$ we mean a subset $D \subseteq X \times X$ such that $(x,x) \notin D$ and $(x,y) \in D \implies (y,x) \notin D$. Such a graph is sometimes called a simple, antisymmetric directed graph, and can be pictured as a collection of arrows from vertex $x$ to $y$ if and only if $(x,y) \in D$. We let $D^t = \{ (y,x): (x,y) \in D \}$ and set $E= D \cup D^t$,  so that $(X, E)$ is the graph on $X$ that one obtains by removing the arrows but keeping the edges.
  Let $\bb Z_3$ denote the group $\{0,1,2\}$ under addition modulo
  $3$. We can identify a directed graph $G=(X, D)$ with a function
  \[ f: X \times X \to \bb Z_3 \text{ such that } f(x,y) = \begin{cases} 0, & (x,y),(y,x) \notin D,\\ 2, & (x,y) \in D, \\ 1, & (x,y) \in D^t, \end{cases} \]
  so that $f(x,y) = f(y,x) =0$ represents no edge between $x$ and $y$.

  Conversely, given any function $f:X \times X \to\bb Z_3$ with $f(y,x) = -f(x,y)$ and $f(x,x) =0$ we can associate a directed graph by setting $D = \{ (x,y):  f(x,y) = 1 \}$.

  In this manner we see that every synchronous, symmetric group based game of the form $\cl G= (X, \bb Z_3, f)$ can be identified with a directed graph $G=(X, D)$.

  A {\bf vertex 3-labelling} of the directed graph $G$ is a function $w:X \to \bb Z_3$ such that
  \[ w(x)- w(y) = f(x,y), \forall (x,y) \in E.\]
  Equivalently, $w(y)=w(x)+1$ for all $(x,y)\in D$, or equivalently, $w(y)=w(x)-1$ for all $(x,y)\in D^t$.

  Thus, every synchronous group based game on $\bb Z_3$ is the vertex 3-coloring game of a directed graph, provided we take the density of the game to be supported on $E$.

  Given any directed graph on $X$, say $G=(X, D)$ represented by $f$, we take its {\bf directed cut number} to be
  \begin{align*}
    \DCut(G) &=\frac12\max_{w:X \to \bb Z_3} |   \{ (x,y) \in E : w(x)- w(y) =f(x,y) \}|\\&=\max_{w:X \to \bb Z_3} |   \{ (x,y) \in D : w(y)=w(x)+1 \}|,
  \end{align*}
  i.e., the largest number of directed edges of a vertex $3$-labelled directed subgraph $(X,D')$ of $G$, where $D'\subseteq D$.
  As above we have that
  \[ \DCut(G) = |D|\omega^s_{loc}(\cl G,\pi_E),\]
  for the corresponding \emph{directed graph game} $(\cl G,\pi_E)$, i.e., the group based game $\cl G=(X,\bb Z_3,f)$ with input density $\pi_E$ which is uniform on $E$.

 It is natural to set
 \[\DCut_t(X_d) = |D|\omega^s_t(\cl G,\pi_E)\]
 for $t=loc,q_1,q,qc,vect,ns$.

 Our next goal is to study these values. Let $\xi=e^{2\pi i/3}$.

   \begin{thm}\label{thm:value1} Let $(X, D)$ be a directed graph and $(\cl G,\pi)$ the corresponding directed graph game. Suppose $\{ e_{x,a} : x \in X, a\in \bb Z_3 \}$ are PVMs in a C*-algebra with trace $\tau$. Let $p(a,b|x,y)=\tau(e_{x,a}e_{y,b})$ for $a,b\in \bb Z_3$ and $x,y\in X$, and define unitaries $\{ U_x : x \in X \}$ by setting $U_x= e_{x,0} + \xi e_{x,1} + \xi^2 e_{x,2}$. Then
     \[ \bb E_{\cl G,\pi}(p)=\frac13+\frac2{3|D|}\sum_{(x,y)\in D}\re \tau(\xi^2 U_x^*U_y).\]
   \end{thm}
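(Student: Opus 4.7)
The plan is to apply Fourier inversion on $\bb Z_3$ in order to convert each sum over winning answer pairs $(a,b)$ into a single trace involving the unitaries $U_x$. First I would unpack the expected value using the uniform density $\pi_E$ on $E = D \sqcup D^t$. Since a tuple $(x,y,a,b)$ is winning precisely when $a - b = f(x,y)$, and $|E| = 2|D|$, one obtains
\[ \bb E_{\cl G,\pi}(p) = \frac{1}{2|D|} \sum_{(x,y) \in E} S_{f(x,y)}(x,y), \quad S_c(x,y) := \tau\Bigl(\sum_{a-b=c} e_{x,a} e_{y,b}\Bigr). \]

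The heart of the argument is the identity $S_c(x,y) = \tfrac{1}{3} + \tfrac{2}{3}\re\tau(\xi^c U_x^* U_y)$ for all $c \in \bb Z_3$. To prove this, I would use the Fourier indicator $[a-b \equiv c] = \tfrac{1}{3}\sum_{j=0}^2 \xi^{j(a-b-c)}$ together with the functional-calculus identities $U_x^j = \sum_a \xi^{ja}e_{x,a}$ and $(U_y^*)^j = \sum_b \xi^{-jb}e_{y,b}$. This yields
\[ \sum_{a-b=c} e_{x,a}e_{y,b} = \frac{1}{3}\sum_{j=0}^2 \xi^{-jc}\, U_x^j (U_y^*)^j = \tfrac{1}{3}\bigl(I + \xi^{-c} U_x U_y^* + \xi^c U_x^* U_y\bigr), \]
using $U_x^3 = I$ (so $U_x^2 = U_x^*$) and $\xi^{-2c} = \xi^c$. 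Taking the trace and noting that $\tau(U_x U_y^*) = \overline{\tau(U_x^* U_y)}$ by cyclicity and the $*$-relation on $\tau$, the right-hand side collapses to $\tfrac{1}{3}\bigl(1 + 2\re\tau(\xi^c U_x^* U_y)\bigr)$, as claimed.

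To conclude, I would split the sum over $E$ into contributions from $D$ and from $D^t$. For $(x,y) \in D$ we have $f(x,y)=2$, contributing $\tfrac{1}{3}+\tfrac{2}{3}\re\tau(\xi^2 U_x^* U_y)$; for the reversed edge $(y,x) \in D^t$ we have $f(y,x)=1$, contributing $\tfrac{1}{3}+\tfrac{2}{3}\re\tau(\xi U_y^* U_x)$. Since $\overline{\xi^2} = \xi$ and $\overline{\tau(A)} = \tau(A^*)$, one has $\tau(\xi U_y^* U_x) = \overline{\tau(\xi^2 U_x^* U_y)}$, so the two real parts coincide. Each underlying edge of $D$ therefore contributes the same quantity twice, and dividing by $2|D|$ produces exactly the stated formula.

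The argument is essentially a calculation, so the main thing to watch is bookkeeping: the sign flip between $\xi^{-c}$ and $\xi^c$ in the Fourier inversion (which comes from $(U_y^*)^2 = U_y$ and $\xi^{-2c}=\xi^c$ in $\bb Z_3$), and the verification that both orientations of each edge of $E$ contribute identically so that the final sum can be written over $D$ alone. Beyond this, no genuine obstacle arises.
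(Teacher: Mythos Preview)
Your proof is correct and follows essentially the same route as the paper: both reduce to the identity $\sum_{a-b=c}\tau(e_{x,a}e_{y,b})=\tfrac13+\tfrac23\re\tau(\xi^{c}U_x^*U_y)$ (your $S_c$ is the paper's $r_{-c}$), then use the symmetry $\re\tau(\xi U_y^*U_x)=\re\tau(\xi^2 U_x^*U_y)$ to collapse the sum over $E$ to a sum over $D$. The only cosmetic difference is that you invoke the Fourier indicator $[a-b\equiv c]=\tfrac13\sum_j\xi^{j(a-b-c)}$ explicitly, whereas the paper computes $\tau(U_x^*U_y)=\sum_j\xi^j r_j$ and inverts by hand.
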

   \begin{proof}
     For $x,y\in X$ and $j\in \bb Z_3$, let \[r_j(x,y)=\sum_{a\in \bb Z_3} \tau(e_{x,a}e_{y,a+j}).\] Then $r_j(y,x)=r_{-j}(x,y)$ and $\sum_{j\in \bb Z_3}r_j(x,y)=1$. Moreover,
     \[ \tau(U_x^*U_y)=\sum_{j\in \bb Z_3}\xi^{j}r_j(x,y).\]
     It follows that
     \[ 3r_1(x,y)=1+2\re \tau(\xi^2 U_x^*U_y).\]
     For $(x,y)\in E$, we have $(x,y,a,b)\in W$ if and only if $(x,y)\in D$ and $b=a+1$ or $(y,x)\in D$ and $a=b+1$, so
     \[ W\cap (E\times \bb Z_3\times \bb Z_3)=\{(x,y,a,a+1),(y,x,a+1,a)\mid (x,y)\in D,\,a\in \bb Z_3\}.\]
     Hence,
     \begin{align*}
       \bb E_{\cl G,\pi}(p)
       &=\frac1{2|D|}\sum_{(x,y)\in D,\,a\in \bb Z_3} \tau(e_{x,a}e_{y,a+1})+\tau(e_{y,a+1}e_{x,a})\\
       &=\frac1{|D|}\sum_{(x,y)\in D,\,a\in \bb Z_3} \tau(e_{x,a}e_{y,a+1})=\frac1{|D|}\sum_{(x,y)\in D} r_1(x,y)\\
       &=\frac13+\frac2{3|D|}\sum_{(x,y)\in D}\re \tau(\xi^2 U_x^*U_y).\qedhere
     \end{align*}
   \end{proof}

   The point of the above theorem is that we can replace PVMs by unitaries of order 3 and the synchronous values of the game will be given by taking the supremum over such unitaries in all tracial C*-algebras or over all unitaries in a matrix algebra.

   If $(X,D)$ is a directed graph and $x, y\in X$, and vertices $(x_1,\dots,x_m)$ form a cycle in the undirected graph $(X,D\cup D^t)$, then the \emph{net length} of this cycle is defined to be
   \[
     |\{1\le j\le m:(x_j,x_{j+1})\in D\}|-|\{1\le j\le m:(x_{j+1},x_j)\in D\}|,
   \] where $x_{m+1}:=x_1$. If the net length of this cycle is $m$ (so
   that all of its edges are oriented in the same cyclic direction), it is
   said to be a directed $m$-cycle.
   \begin{cor}\label{cor:directed}
     Let $(X,D)$ be a directed graph with corresponding game $(\cl G,\pi)$. The following are equivalent:
     \begin{enumerate}
     \item $(\cl G,\pi)$ has a perfect synchronous $qc$-density.
     \item $(\cl G,\pi)$ has a perfect synchronous deterministic strategy.
     \item The net length of every cycle in $(X,D)$ is divisible by $3$.
     \item $(\cl G,\pi)$ has a perfect deterministic strategy.
     \end{enumerate}
     In particular, if $(X,D)$ contains a directed $k$-cycle, where $3$ does not divide $k$, then there is no perfect synchronous $qc$-density for $(\cl G,\pi)$.
   \end{cor}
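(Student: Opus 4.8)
The plan is to prove the cyclic chain of implications $(1)\Rightarrow(3)\Rightarrow(2)\Rightarrow(1)$ together with $(2)\Rightarrow(4)\Rightarrow(3)$. This forces all four statements to be equivalent, and the last assertion of the corollary is then immediate: a directed $k$-cycle has net length $k$ by definition, so if $3\nmid k$ then condition $(3)$ fails and hence $(1)$ fails.

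The two implications out of $(2)$ are bookkeeping. If $h\colon X\to\bb Z_3$ is a perfect synchronous deterministic strategy then in particular it is a perfect deterministic strategy, which is $(4)$; and the local density $p(a,b|x,y)=\delta_{a,h(x)}\delta_{b,h(y)}$ is synchronous, so it lies in $C^s_{loc}\subseteq C^s_{qc}$, and it is perfect for $(\cl G,\pi_E)$ since $\pi_E$ is supported on $E$ and $V(x,y,h(x),h(y))=1$ for every $(x,y)\in E$; this gives $(1)$.

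For $(3)\Rightarrow(2)$ I would build a vertex $3$-labelling by hand: on each connected component of $(X,D\cup D^t)$ fix a spanning tree with root $r$, set $w(r)=0$, and propagate by $w(y):=w(x)-f(x,y)$ along tree edges. To check $w(x)-w(y)=f(x,y)$ on a non-tree edge $(x,y)\in E$, close up the tree path from $y$ to $x$ with that edge into a cycle and use the identity $\sum_j f(x_j,x_{j+1})\equiv-(\text{net length})\pmod 3$, which vanishes by $(3)$; then $h=w$ is the required strategy. For $(4)\Rightarrow(3)$, suppose $(h_A,h_B)$ is a perfect deterministic strategy for $(\cl G,\pi_E)$, so $h_A(x)-h_B(y)=f(x,y)$ for all $(x,y)\in E$; applying this to the pair $(x,y)$ and to $(y,x)$ shows that $\psi:=h_A-h_B$ satisfies $\psi(x)=-\psi(y)$ along every edge of $E$. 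For a cycle $(x_1,\dots,x_m,x_1)$ a short computation gives $\sum_j f(x_j,x_{j+1})=\sum_j\psi(x_j)$, and the alternation property forces $\sum_j\psi(x_j)=0$ (when $m$ is odd it gives $2\psi(x_1)=0$, hence $\psi(x_1)=0$ in $\bb Z_3$); combined with the same mod-$3$ identity this yields $(3)$.

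The main obstacle is $(1)\Rightarrow(3)$, where the key tool is Theorem~\ref{thm:value1}. Writing a perfect synchronous $qc$-density via PVMs $\{e_{x,a}\}$ with trace $\tau$ and the unitaries $U_x=e_{x,0}+\xi e_{x,1}+\xi^2e_{x,2}$, the hypothesis $\bb E_{\cl G,\pi}(p)=1$ becomes $\sum_{(x,y)\in D}\re\tau(\xi^2U_x^*U_y)=|D|$. Since each $\xi^2U_x^*U_y$ is a unitary, each summand is at most $\lvert\tau(\xi^2U_x^*U_y)\rvert\le1$, so every summand equals $1$, which forces $\tau(\xi^2U_x^*U_y)=1$. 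Passing to the GNS representation and using faithfulness of the canonical trace on the generated finite von Neumann algebra, the identity $\tau\big((1-\xi^2U_x^*U_y)^*(1-\xi^2U_x^*U_y)\big)=0$ upgrades this to the operator equation $U_y=\xi U_x$ for every $(x,y)\in D$. The final step is to propagate this relation around an arbitrary cycle $(x_1,\dots,x_m,x_1)$ in $D\cup D^t$: each edge contributes a factor $\xi^{\pm1}$ according to its orientation, and the exponents sum to the net length $\ell$, so $U_{x_1}=\xi^\ell U_{x_1}$ and hence $\xi^\ell=1$, i.e.\ $3\mid\ell$. I expect the only delicate point to be the justification that $\tau(u)=1$ forces $u=1$ at the operator level, which is the standard passage to the tracial von Neumann algebra completion.
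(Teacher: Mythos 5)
Your proposal is correct, but it routes the equivalences differently from the paper, and in two places replaces the paper's tools with more elementary ones. The paper proves (a)$\Leftrightarrow$(b) directly: from Theorem~\ref{thm:value1} it gets $U_x^*U_y=\xi 1$ for $(x,y)\in D$ and then builds the labelling $w$ straight from the unitaries via an equivalence relation on $X$ (choosing class representatives $x_0$ and setting $U_{x_0}^*U_x=\xi^{w(x)}1$); it then gets (b)$\Leftrightarrow$(c) by citing Bloom--Burr on homomorphisms into the directed $3$-cycle, and proves (d)$\Rightarrow$(c) by an even/odd case analysis with a cycle-doubling construction for odd cycles. You instead prove (1)$\Rightarrow$(3) by propagating the relation $U_y=\xi U_x$ around an arbitrary cycle to force $\xi^\ell=1$, prove (3)$\Rightarrow$(2) by a spanning-tree/potential argument (which makes the corollary self-contained, replacing the citation to \cite{BB}), and prove (4)$\Rightarrow$(3) via the function $\psi=h_A-h_B$, whose alternation along edges gives $\sum_j f(x_j,x_{j+1})=\sum_j\psi(x_j)=0$ in one stroke --- this is cleaner than the paper's doubling trick. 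The one point you flag as delicate, upgrading $\tau(\xi^2U_x^*U_y)=1$ to the operator identity, is handled correctly by your GNS remark (the paper simply asserts $U_x^*U_y=\xi1$); note also that, like the paper, you must and do read ``perfect for $(\cl G,\pi)$'' as ``value $1$ with respect to $\pi_E$,'' since the labelling gives no control over input pairs outside $E$.
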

   \begin{proof}
     Any density $p$ for $\cl G$ in $C_{qc}^s$ is of the form $p(a,b|x,y)=\tau(e_{x,a}e_{y,b})$ where $e_{x,a}$ and $\tau$ are as in Theorem~\ref{thm:value1}, so if $p$ is perfect for $(\cl G,\pi)$, then the unitaries $\{U_x:x\in X\}$ satisfy $\tau(\xi^2U_x^*U_y)=1$ and hence $U_x^*U_y=\xi 1$, for all $(x,y)\in D$.
     Define an equivalence relation on $X$ by $x\sim y$ if $U_x^*U_y=\xi^k1$ for some $k\in \bb Z_3$, and let $X_0$ be a complete set of equivalence class representatives. Define $w:X\to \bb Z_3$ by the condition $U_{x_0}^*U_x=\xi^{w(x)}1$, where $x\sim x_0\in X_0$. If $(x,y)\in D$, then $U_x^*U_y=\xi1$, so $x\sim x_0\sim  y$ for some $x_0\in X_0$, and \[\xi 1=U_x^*U_y=(U_{x_0}^*U_x)^*(U_{x_0}^*U_y)=\xi^{w(y)-w(x)}1,\] so $w(y)=w(x)+1$. Hence, $w$ defines a perfect synchronous deterministic strategy. So (a) and (b) are equivalent.

     A perfect synchronous deterministic strategy is the same thing as a directed graph homomorphism into the directed cycle on three vertices. By~\cite[Theorem~3.2]{BB}, the existence of such a homomorphism is equivalent to (c).

     Condition (b) trivially implies (d) so it only remains to show that (d) implies (c). Given (d), there is a perfect deterministic strategy, i.e., two functions $v,w:X\to \bb Z_3$ such that for all $(x,y)\in D$, we have $w(y)=v(x)+1$ and $v(y)=w(x)+1$. Suppose $C=(x_1,\dots,x_m)$ is a cycle in $D$ of net length~$\ell$. If $m$ is even, then a perfect synchronous deterministic strategy for this cycle is given by $c:\{x_1,\dots,x_m\}\to \bb Z_3$, $c(x_i)=v(x_i)$ if $i$ is even, and $c(x_i)=w(x_i)$ if $i$ is odd, so $\ell$ must be divisible by $3$. If $m$ is odd, suppose without loss that $(x_m,x_1)\in D$. Let $C'=(x_1,\dots,x_m,x_1',\dots,x_m')$ be the directed cycle with edges
     \[ \{ (x_m,x_1'),(x_m',x_1)\}\cup \{ (x_i,x_j),(x_i',x_j'): 1\le
       i,j\le m,\, |i-j|\le 1\}.\] So $C'$ is obtained by removing the edge
     $(x_m,x_1)$ from $C$, duplicating the resulting path, and forming a cycle by adding the edges $(x_m,x_1'),(x_m',x_1)$. The net length of $C'$ is then $2\ell$, and $C'$ has a perfect synchronous deterministic strategy $c:\{x_i,x_i':1\le i\le m\}\to \bb Z_3$, given by
     \[ c(x_i)=
       \begin{cases}
         v(x_i)&\text{$i$ even},\\
         w(x_i)&\text{$i$ odd},
       \end{cases}\qquad
       c(x_i')=
       \begin{cases}
         w(x_i)&\text{$i$ even},\\
         v(x_i)&\text{$i$ odd}.
       \end{cases}\]
     Hence, $3$ divides $2\ell$, so $3$ divides $\ell$.
   \end{proof}
   \begin{remark}
     The game $(\cl G,\pi)$ corresponding to a directed graph $(X,D)$ has input density which is zero for many inputs, so the last result does not follow immediately from Theorem~\ref{near1}.
   \end{remark}

\subsection{The directed $k$-cycle}
Let $k\ge 4$. Let $C_k^d=(X,D)$ be the directed $k$-cycle where $X=\bb Z_k=\{0,1,\dots,k-1\}$, equipped with addition modulo~$k$, and $D=\{(x,x+1):x\in X\}$. We write $\omega_t(C_k^d)$ and $\omega_t^s(C_k^d)$ for the $t$-value (respectively, synchronous $t$-value) of the corresponding directed graph game.

We wish to estimate some of these values. Note if $k$ is divisible by $3$ then by Corollary~\ref{cor:directed} this game has a perfect synchronous deterministic strategy, so these values are all~$1$. So we will always suppose that $k$ is not divisible by~$3$.

\begin{prop}
  For any $k\ge4$ which is not divisible by $3$, we have
  \[
    \omega_{loc}(C_k^d)=\begin{cases}1-1/k&\text{if $k$ even,}\\
      1-1/2k&\text{if $k$ is odd}
    \end{cases}
  \]
  and
  \[\omega_{loc}^s(C_k^d)=1-1/k.\]
\end{prop}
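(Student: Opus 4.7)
The plan is to unwind the definition directly, using $|E|=2k$ and $\pi_E$ uniform on $E$. Writing $a_i=h_A(i)$ and $b_i=h_B(i)$ for a deterministic strategy, and recalling that $f(i,i+1)=2$ while $f(i+1,i)=1$, the winning conditions for the two oriented edges $(i,i+1)\in D$ and $(i+1,i)\in D^t$ rewrite as the $2k$ linear constraints
\[ b_{i+1}=a_i+1 \quad\text{and}\quad a_{i+1}=b_i+1\qquad(i\in \bb Z_k) \]
in $\bb Z_3$. In the synchronous case, $a_i=b_i=h(i)$, and both collapse to the single constraint $h(i+1)=h(i)+1$.

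The key step is to form the directed \emph{constraint graph} $\Gamma$ on the $2k$ vertices $\{a_i,b_i:i\in \bb Z_k\}$ with an arc $a_i\to b_{i+1}$ for each first-type constraint and $b_i\to a_{i+1}$ for each second-type, each arc labelled with the required increment $+1$. Every vertex of $\Gamma$ has in- and out-degree $1$, so $\Gamma$ is a disjoint union of directed cycles. Tracing from $a_0$, for $k$ even $\Gamma$ splits into the two length-$k$ cycles $a_0\to b_1\to a_2\to\cdots\to b_{k-1}\to a_0$ and $a_1\to b_2\to\cdots\to b_0\to a_1$, whereas for $k$ odd the parity shift glues these chains into a single cycle of length $2k$. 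On any constraint cycle of length $\ell$, summing the required increments $+1$ around the cycle shows that all $\ell$ constraints can be satisfied only if $\ell\equiv 0\pmod 3$; otherwise at most $\ell-1$ can be satisfied, and this bound is realised by assigning an arbitrary value at one vertex and walking around the cycle, so that only the final closing arc fails.

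Since $3\nmid k$ (hence also $3\nmid 2k$), this yields at most $2(k-1)$ (for $k$ even, split across two $k$-cycles) or $2k-1$ (for $k$ odd, on the single $2k$-cycle) satisfied constraints out of $2k$, with equality attained by the walking assignment. Dividing by $|E|=2k$ gives $\omega_{loc}(C_k^d)=1-1/k$ and $1-1/(2k)$ in the two cases. For the synchronous value, the collapsed constraint graph is a single directed $k$-cycle on $\{h(0),\dots,h(k-1)\}$, so at most $k-1$ of the collapsed constraints can be satisfied. Each collapsed constraint accounts for exactly $2$ edges of $E$ (the two orientations of the corresponding pair), so at most $2(k-1)$ edges in $E$ can be satisfied, whence $\omega_{loc}^s(C_k^d)\le 1-1/k$; this is attained by $h(i)=i\bmod 3$.

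The main thing to get right is the cycle decomposition of $\Gamma$ in the two parity cases; once that is clear, everything reduces to the elementary ``sum-around-a-directed-cycle must vanish in $\bb Z_3$'' obstruction together with a one-line counting check and the normalization $1/|E|=1/(2k)$.
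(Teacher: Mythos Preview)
Your argument is correct and takes a genuinely different route from the paper. The paper obtains the upper bounds by invoking Corollary~\ref{cor:directed} (no perfect deterministic strategy when $3\nmid k$) together with the fact that the value must be an integer multiple of $1/|E|=1/(2k)$; it then exhibits explicit strategies for the lower bounds, and for $k$ even reduces the non-synchronous problem to the synchronous one via an averaging trick, observing that any deterministic pair $(w_0,w_1)$ has expected value equal to the mean of the two synchronous strategies $g(x)=w_{x\bmod 2}(x)$ and $h(x)=w_{x+1\bmod 2}(x)$. Your constraint-graph decomposition is more self-contained: it does not rely on the earlier corollary, it handles the upper and lower bounds in one stroke, and it makes the parity dichotomy completely transparent as the difference between $\Gamma$ being a single $2k$-cycle versus two $k$-cycles. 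On the other hand, the paper's averaging device is a reusable observation that applies whenever the underlying undirected graph is bipartite, not just to cycles of even length.
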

\begin{proof}
  Corollary~\ref{cor:directed} shows that none of these values
  is~$1$. Hence, the integer $\DCut(C_k^d)$ is strictly less than $k$, the number of directed edges, so we have $\omega_{loc}^s(C_k^d)\le \frac{k-1}k=1-1/k$, and this is attained by the vertex $3$-coloring \[w:X\to \bb Z_3,\; w(x)=x\bmod 3.\]
  Similarly, we have $\omega_{loc}(C_k^d)\le 1-1/2k$ and if $k$ is odd, this is attained by $v,w:X\to \bb Z_3$,
  \[ v(x)=x\bmod 3,\;w(x)=x+k\bmod 3.\]
  If $k$ is even and $w_0,w_1:X\to \bb Z_3$, then the expected value of the deterministic density corresponding to $(w_0,w_1)$ is the mean of the two synchronous deterministic densities arising from $g,h:X\to \bb Z_3$,
  \[ g(x)=w_{x\bmod 2}(x),\; h(x)=w_{x+1\bmod 2}(x).\]
  Hence, $\omega_{loc}(C_k^d)=\omega_{loc}^s(C_k^d)$ in this case.
\end{proof}

Let $\theta_k=\frac{2\pi}{3k}$ and $c_k=\cos\theta_k$.
\begin{prop}
  We have
  \[ \left(\frac{1+2c_k}3\right)^2 \le \omega_{q_1}^s(C_k^d)\le \omega_q^s(C_k^d)\le \omega_{qc}^s(C_k^d)\le \frac{1+2c_k}3\]
  and
  \[ \frac{1+2c_k^2}3\le \omega_{q_1}(C_k^d).\]
\end{prop}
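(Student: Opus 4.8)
The three middle inequalities $\omega_{q_1}^s(C_k^d)\le\omega_q^s(C_k^d)\le\omega_{qc}^s(C_k^d)$ are immediate from the inclusions $C_{q_1}^s\subseteq C_q^s\subseteq C_{qc}^s$, so the content is in the three outer estimates, each handled by a different earlier tool.

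\emph{Upper bound $\omega_{qc}^s(C_k^d)\le\frac{1+2c_k}3$.} I would use Corollary~\ref{cor:cost-matrix}, which gives $\omega_{qc}^s(\cl G,\pi)\le |X|\inf_{Z\in\cl Z(\cl G)}\lambda_{max}(C_{\cl G,\pi}+Z)$ with $|X|=k$. The first step is to compute $C_{\cl G,\pi}$: since $\pi=\pi_E$ is uniform on $E$ with $|E|=2k$, and on each pair $(x,x+1),(x+1,x)$ the winning tuples are exactly those with output difference $1$, the only nonzero entries are $c_{(x,a),(x+1,a+1)}=c_{(x+1,a+1),(x,a)}=\frac1{2k}$. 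As $\gcd(3,k)=1$, the map $(x,a)\mapsto(x+1,a+1)$ is a single $3k$-cycle on $X\times\bb Z_3$, so $C_{\cl G,\pi}=\frac1{2k}A$ where $A$ is the adjacency matrix of $C_{3k}$. Writing $S$ for the associated cyclic shift, the matrix $Z=\frac{c_k-1}{3k}(S^k+S^{-k})$ is cost-free: $S^k+S^{-k}=\sum_{x,a}\bigl(|x,a\rangle\langle x,a+1|+|x,a+1\rangle\langle x,a|\bigr)$ because $S^k$ acts within each fiber $\{x\}\times\bb Z_3$ by $a\mapsto a+k$ and $k\not\equiv0\pmod3$, so this is a sum of the generators of $\cl Z(\cl G)$. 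Now $C_{\cl G,\pi}+Z$ is circulant on $\bb Z_{3k}$, with eigenvalues $\frac1k\cos\frac{2\pi j}{3k}+\frac{2(c_k-1)}{3k}\cos\frac{2\pi j}3$; splitting into $3\mid j$ and $3\nmid j$ and using $3\nmid k$ to locate the maximizing $j$ in each case, one gets $\lambda_{max}(C_{\cl G,\pi}+Z)=\frac{1+2c_k}{3k}$, hence $\omega_{qc}^s(C_k^d)\le\frac{1+2c_k}3$.

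\emph{Lower bound $\omega_{q_1}^s(C_k^d)\ge\bigl(\frac{1+2c_k}3\bigr)^2$.} I would exhibit an explicit synchronous $q_1$-strategy and apply Corollary~\ref{cor:q1}. Work in $\bb C^3$ with the DFT basis $f_m=\frac1{\sqrt3}\sum_a\xi^{ma}e_a$, and let $Q\in U(3)$ have these as eigenvectors with eigenvalues $\mu_0=1$ and $\mu_1,\mu_2$ equal to the $k$-th roots of unity nearest $\xi$ and $\xi^2$; explicitly $\mu_1\xi^{-1},\mu_2\xi^{-2}\in\{e^{i\theta_k},e^{-i\theta_k}\}$ with opposite signs, the assignment depending on $k\bmod3$. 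Then $Q^k=I$, so $v_{x,a}:=Q^xe_a$ is an orthonormal family for each $x\in X$, with $k-d=0$ zero vectors. The only winning tuples with $\pi>0$ are $(x,x+1,a,a+1)$ and $(x+1,x,a+1,a)$, and for every $a$, $\langle v_{x,a}\mid v_{x+1,a+1}\rangle=Q_{a,a+1}=\frac13\sum_m\mu_m\xi^{-m}=\frac{1+2c_k}3$. Substituting into the formula of Corollary~\ref{cor:q1} with $d=3$ gives $\omega_{q_1}^s(C_k^d)\ge\bigl(\frac{1+2c_k}3\bigr)^2$.

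\emph{Lower bound $\omega_{q_1}(C_k^d)\ge\frac{1+2c_k^2}3$.} For this I would use a \emph{non-synchronous} $q_1$-strategy: Alice and Bob share a maximally entangled state on $\bb C^3\otimes\bb C^3$ and use rank-one PVMs obtained by rotating a fixed basis as one traverses the cycle, but with Alice's and Bob's rotations chosen to differ, so that the unavoidable holonomy around the cycle (nonzero precisely because $3\nmid k$, by Corollary~\ref{cor:directed}) is split between the two players rather than borne by one. Rewriting the win probability on each directed edge as an overlap of Alice's and Bob's measurement vectors and optimizing over the rotation parameters — the relevant rotation eigenvalues again being constrained to lie among the $k$-th roots of unity, as in the synchronous case — should yield a per-edge contribution of $c_k^2$ and hence the value $\frac{1+2c_k^2}3$. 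The main obstacle is pinning down this construction precisely and confirming that it strictly exceeds the synchronous bound $\bigl(\frac{1+2c_k}3\bigr)^2$ (the gap being $\frac29(1-c_k)^2$); a secondary nuisance throughout all three parts is the case split between $k\equiv1$ and $k\equiv2\pmod3$, which reverses the sign of the optimal root-of-unity perturbations.
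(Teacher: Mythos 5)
Your first two estimates are correct and essentially reproduce the paper's argument in different packaging. For the upper bound you invoke Corollary~\ref{cor:cost-matrix} with $Z$ a negative multiple of $S^k+S^{-k}$ (the within-fibre off-diagonal cost-free matrices); after the circulant diagonalization this is exactly the paper's computation via Theorem~\ref{thm:SOS+C}, where the term $\alpha(I_k\otimes A)$ with $\alpha=2(1-c_k)/3$ plays the role of your $Z$, and the eigenvalue optimization $\max\{1-\alpha,\alpha/2+c_k\}=(1+2c_k)/3$ is identical. Your synchronous $q_1$ construction $v_{x,a}=Q^xe_a$, with $Q$ diagonal in the Fourier basis with eigenvalues $1,\xi e^{\mp i\theta_k},\xi^2e^{\pm i\theta_k}$ (sign by $k\bmod 3$), is unitarily the same as the paper's real rotation $U$ by $\theta_k$ about $(1,1,1)^t$ with $v_{x,a}=U^{k(a-x)+x}e_1$: both give per-edge overlap $(1+2c_k)/3$ and then Corollary~\ref{cor:q1} applies. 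These parts check out, including the verification that $Q^k=I$ forces the stated choice of nearest $k$-th roots of unity.

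The third bound, $\frac{1+2c_k^2}{3}\le\omega_{q_1}(C_k^d)$, is a genuine gap: you only sketch a construction and say it "should yield" the value, which you acknowledge you have not pinned down, so nothing is proved. Moreover the route you propose — a maximally entangled state with Alice's and Bob's bases splitting the holonomy — is not the paper's and looks problematic. With the maximally entangled state the per-edge winning probabilities reduce to $\frac13\bigl(1+2\re(\xi^{-1}\tfrac13\Tr(U_x^*\tilde V_y))\bigr)$ for order-$3$ unitaries in $M_3(\bb C)$, whose eigenvalues are rigidly cube roots of unity, so the "$k$-th-root-of-unity perturbation" freedom you invoke is not available there; and for even $k$ the bipartite interaction pattern between $\{U_x\}$ and $\{\tilde V_y\}$ decouples into two $k$-cycles, each equivalent to the synchronous arrangement, so it is far from clear (and seems doubtful, e.g.\ at $k=4$, where one needs average correlator $c_4^2=3/4$ while the natural rotated-basis ansatz gives $\frac{2c_4^2+2c_4-1}{3}\approx0.744$) that maximal entanglement can reach $\frac{1+2c_k^2}{3}$ at all. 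The paper's construction instead keeps the \emph{same} rank-one PVMs $v_{x,a}$ for both players (those from the synchronous strategy) and changes the shared state: it takes $\eta=a\sum_{i}e_i\otimes e_i+b\sum_{i,j}e_i\otimes e_j$ with $a=c_k/\sqrt{1+2c_k^2}$ and $b=(1-c_k)/(3\sqrt{1+2c_k^2})$, so $\eta$ is a tilted, non-maximally entangled vector; using that $U$ fixes $(1,1,1)^t$, each winning term equals $|a\langle v_{x,a}|Uv_{x,a}\rangle+b|^2$, and the total is $\frac{(a(1+2c_k)+3b)^2}{3}=\frac{1+2c_k^2}{3}$. The missing idea in your proposal is thus to tilt the state rather than split the measurement rotations.
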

\begin{proof}
  Let $U$ be the $3\times 3$ matrix of rotation by $\theta_k=\frac{2\pi}{3k}$ about the axis $(1,1,1)^t$ in $\bb R^3$. Then $U^{3k}=I$ and $U^k$ is the matrix of rotation by $2\pi/3$ about the same axis, namely \[U^k=\begin{pmatrix}
       0 & 0 & 1 \\
       1 & 0 & 0 \\
       0 & 1 & 0 \\
     \end{pmatrix}.\]
  Let $e_1=(1,0,0)^t$. For $x\in \bb Z_k$, $a\in \bb Z_3$ define vectors $v_{x,a}\in \bb R^3$ by
  \[v_{x,a}=U^{k(a-x)+x}e_1.\]
  Then we have \[ v_{x+1,a+1}=Uv_{x,a},\;v_{x,a+1}=U^kv_{x,a}.\]
  It follows that
  $\{v_{x,a}:a\in \bb Z_3\}$ is an orthonormal basis of $\bb R^3$ for each $x\in \bb Z_k$ and
  \[ \langle v_{x+1,a+1}|v_{x,a}\rangle=\langle Ue_1|e_1\rangle=\frac{1+2c_k}3.\]
  The lower bound for $\omega_{q_1}^s(C_k^d)$ follows by Corollary~\ref{cor:q1}.

  For lower bound for $\omega_{q_1}(C_k^d)$, let \[a=\frac{c_k}{\sqrt{1+2c_k^2}},\;b=\frac{(1-c_k)}{3\sqrt{1+2c_k^2}}\] and consider the unit vector
  \[ \eta=a\sum_{i=1}^3 e_i\otimes e_i+b\sum_{i,j=1}^3 e_j\otimes e_j\in \bb C^3\otimes \bb C^3.\]
  Let $p(a,b|x,y)=|\langle \eta|v_{x,a}\otimes v_{y,b}\rangle|^2$ be the corresponding $q$-density, where $v_{x,a}$ is as above. Using the fact that $U$ fixes $(1,1,1)^t$ and that $\langle Uv_{x,a}|v_{x,a}\rangle=\langle Ue_1|e_1\rangle=\frac13(1+2c_k)$, a calculation shows that the expected winning probability for $p$ is
  \begin{align*}
    \frac1k&\sum_{x\in \bb Z_k,a\in \bb Z_3}|\langle \eta|v_{x,a}\otimes v_{x+1,a+1}\rangle|^2=\frac1k\sum_{x\in \bb Z_k,a\in \bb Z_3}|\langle \eta|(1\otimes U)v_{x,a}\otimes v_{x,a}\rangle|^2\\&=\frac1k\sum_{x\in \bb Z_k,a\in \bb Z_3}|a\langle v_{x,a}|Uv_{x,a}\rangle+b|^2=\frac{(a(1+2c_k)+3b)^2}3=\frac{1+2c_k^2}3.
  \end{align*}

  For the upper bound on $\omega_{qc}^s(C_k^d)$, recall the notation of
  Theorem~\ref{thm:SOS+C}. For this game $(\cl G,\pi)$ we have $R_{\cl G,\pi}\in \bb C(\bb F(k,3))$,
  \[ R_{\cl G,\pi}=\frac1{2k}\sum_{x\in \bb Z_k,a\in \bb Z_3} e_{x,a}e_{x+1,a+1}e_{x,a}+e_{x+1,a+1}e_{x,a}e_{x+1,a+1}\]
  so that $R_{\cl G,\pi}\in R_0+\cl C$, where
  \[ R_0=\frac1{2k}\sum_{x\in \bb Z_k,a\in \bb Z_3}e_{x,a}e_{x+1,a+1}+e_{x+1,a+1}e_{x,a}.\]
  Let
  \[E=(e_{0,0}, e_{0,1},e_{0,2},...,e_{k-1,0},e_{k-1,1},e_{k-1,2})^t\in \bb C^{3k}\otimes \bb C(\bb F(k,3)).\]
  Let $S_n$ be the cyclic forward shift $e_j\mapsto e_{j+1}$ on $\bb C^n$; for example, we saw above that $U^k=S_3$. Let $W=S_k\otimes S_3\in M_{3k}(\bb C)$. We have
  \[ R_0= \frac1{2k} E^* (W+W^*) E.\]
  Moreover, $e_{x,a}e_{x,a'}=0$ whenever $a\ne a'$, so if $A$ is any $3\times 3$ matrix with zero diagonal, then $E^*(I_k\otimes A)E=0$. Let
  $A=\frac12(S_3+S_3^*)$
  and $\alpha,t\in \bb R$. Then
  \[ t1-R_0=\frac1k E^*XE\]
  where $X\in M_{3k}(\bb C)$,
  \begin{align*}
    X&=t I_{3k} + \alpha(I_k\otimes A)-\frac12(W+W^*)
    = \re( tI_{3k}+ \alpha I_k\otimes S_3-W).
  \end{align*}
  We have $I_k\otimes S_3=W^k$, and it follows that the eigenvalues of
  $X$ are the complex numbers
  \[ \lambda_j=t+\alpha\cos(2\pi j/3)-\cos(j\theta_k),\quad j\in \bb Z_{3k}.\]
  Hence, the smallest eigenvalue of $X$ is either
  \[ \lambda_0=t+\alpha-1\text{ or }\lambda_1=t-\alpha/2-c_k.\]
  So if $t= \max\{1-\alpha,\alpha/2+c_k\}$, then $X$ is positive semidefinite. The smallest $t$ occurs for $\alpha=2(1-c_k)/3$, with $t=(1+2c_k)/3$. Factoring $X=Y^*Y$, we obtain $t1-R_0=\frac1k(YE)^*(YE)\in SOS$. The upper bound for $\omega_{qc}^s(C_k^d)$ follows from Theorem~\ref{thm:SOS+C}.
\end{proof}

For the directed $4$-cycle, we can compute the $vect$-values of the directed graph game precisely.
  \begin{prop}
    We have \[\omega_{vect}(C_4^d)=\omega_{vect}^s(C_4^d)=2(1-1/\sqrt{3}).\]
  \end{prop}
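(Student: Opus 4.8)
The plan is: prove the upper bound $\omega_{vect}(C_4^d)\le 2(1-1/\sqrt3)$ by reducing to a small explicit optimization problem, prove the matching lower bound $\omega_{vect}^s(C_4^d)\ge 2(1-1/\sqrt3)$ by exhibiting the extremal vector strategy, and conclude from the general inequality $\omega_{vect}^s\le\omega_{vect}$ that all quantities in the statement coincide. Throughout, $(\cl G,\pi)$ is the group based game $(\bb Z_4,\bb Z_3,f)$ with $f=2$ on $D=\{(x,x+1):x\in\bb Z_4\}$, $f=1$ on $D^t$, $f=0$ off $E$, and $\pi\equiv\tfrac18$ on $E=D\cup D^t$; the winning outputs are $(a,a+1)$ on edges of $D$.

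I would first normalize. By Proposition~\ref{averaging} any vect-density $p$ may be replaced by its $\bb Z_3$-average $\hat p$, of the same type and value, for which $\hat p(a,b|x,y)$ depends only on $b-a$ and the realizing vectors satisfy $\|\hat v_{x,a}\|^2=\|\hat w_{y,b}\|^2=\tfrac13$. Since $C_4^d$ has cyclic automorphism group $\bb Z_4$ and the game is symmetric (so the swap $p^{sw}(a,b|x,y):=p(b,a|y,x)$ maps $C_{vect}$ to itself and fixes $\bb E_{\cl G,\pi}$), convexity of $C_{vect}$ lets me also average over these, so that the Gram data becomes invariant under the vertex shift and the Alice/Bob swap. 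Writing $\hat v_{x,a}=\tfrac13\eta+\omega_{x,a}$ with $\omega_{x,a}\perp\eta$ and $\sum_a\omega_{x,a}=0$, the $\omega_{x,a}$ form an equilateral triangle at each vertex; passing to the $\bb Z_3$-Fourier mode $z_x:=\tfrac23\sum_a\xi^{-a}\omega_{x,a}$ with $\xi=e^{2\pi i/3}$, a short computation gives $\langle\hat v_{x,a}|\hat w_{y,b}\rangle=\tfrac19+\tfrac12\re(\xi^{b-a}h_{xy})$, where $h_{xy}$ is the relevant inner product of the $z$'s, $h_{xx}=\tfrac49$, and the obstructing $\xi^{a+b}$-term vanishes precisely because of the $\bb Z_3$-averaging. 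After symmetrization the matrix $H=(h_{xy})$ is circulant; in the synchronous case it is Hermitian positive semidefinite, and in general one has only the weaker block condition $P\pm H\succeq0$, $P$ being the circulant Gram matrix of the $z_x$ (diagonal $\tfrac49$).

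In these coordinates $\bb E_{\cl G,\pi}(p)=\tfrac13+\tfrac32\re(\xi h_1)$, where $h_1:=h_{x,x+1}$. The constraints on $h_1\in\bb C$ are: (i) from positive semidefiniteness — adding the appropriate pairs of the four eigenvalue inequalities of the $4\times4$ circulant $H$, respectively deducing feasibility of $P\pm H\succeq0$ after optimizing out the free diagonal and distance-two entries — one gets in both cases $|\re h_1|+|\im h_1|\le\tfrac49$; (ii) from nonnegativity of the probabilities, $\re(\xi^\ell h_1)\ge-\tfrac29$ for $\ell=0,1,2$. Maximizing the linear functional $\re(\xi h_1)$ over the planar polytope cut out by (i)–(ii) is elementary; the maximum $\tfrac{10-4\sqrt3}9$ is attained where the $\ell=0,2$ inequalities of (ii) and the edge of (i) are tight, i.e.\ at $h_1=\tfrac{4(\sqrt3-2)}9-i\tfrac{4(\sqrt3-1)}9$. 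Hence $\omega_{vect}(C_4^d)\le\tfrac13+\tfrac32\cdot\tfrac{10-4\sqrt3}9=2(1-1/\sqrt3)$. For the reverse inequality I would check that this optimizer is genuinely realized by a synchronous vect-density: with $h_0=\tfrac49$, $h_1$ as above, and the optimal distance-two value $h_2=\tfrac{4(3-2\sqrt3)}9$, the $12\times12$ matrix $G^\perp$ with entries $\tfrac12\re(\xi^{b-a}h_{xy})$ has $\bb Z_3$-Fourier modes $0,H,\overline H$, all positive semidefinite, so $G:=\tfrac19J_{12}+G^\perp$ is positive semidefinite; its $x$-diagonal blocks are $\tfrac13I_3$, all its entries lie in $[0,1]$, and the $v_{x,a}$ obtained from $G$ by Cholesky have common sum a unit vector $\eta$. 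Thus $G$ is the Gram matrix of a synchronous vect-density with expected value $\tfrac13+\tfrac32\re(\xi h_1)=2(1-1/\sqrt3)$, so $\omega_{vect}^s(C_4^d)\ge2(1-1/\sqrt3)$, completing the proof.

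The main obstacle is the normalization and $\bb Z_3$-Fourier reduction of the second and third paragraphs: one must verify that the averaging really annihilates the $\xi^{a+b}$ cross-terms, that positive semidefiniteness of the $12\times12$ Gram matrix reduces cleanly to that of the $4\times4$ matrix $H$ up to the harmless rank-one term $\tfrac19J_{12}$, and — the one ingredient beyond the synchronous argument — that relaxing $H\succeq0$ to the block condition $P\pm H\succeq0$ does not enlarge the optimal value of the final optimization. Once these are in place, the planar optimization and the verification of the extremal strategy are both routine.
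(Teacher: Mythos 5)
Your proposal is correct, and it reaches the value $2(1-1/\sqrt3)$ by a genuinely different route from the paper. The paper also starts with symmetry reduction to circulant blocks, but its key step for the non-synchronous value is a bipartiteness trick special to $C_4^d$: setting $v'_{x,a}=v_{x,a}$ for $x$ even and $v'_{x,a}=w_{x,a}$ for $x$ odd yields a \emph{synchronous} vect-density with the same distance-one blocks, hence the same value, so the upper bound only needs to be proved for genuine (positive semidefinite) Gram matrices, which the paper then parameterizes by $(u,v,r)$ and diagonalizes over $\bb Z_4\times\bb Z_3$; the lower bound is an explicit $\bb R^5$ vector construction. You never synchronize: after the $\bb Z_3$-, shift- and swap-averaging you reduce to the single Fourier parameter $h_1$, derive $|\re h_1|+|\operatorname{Im}h_1|\le\tfrac49$ from $H\succeq0$ in the synchronous case and from the block condition (equivalently $P\pm H\succeq0$, once swap-averaging makes $H$ Hermitian and equalizes the two Gram matrices of the $z$'s) in the general case, and this implication does hold: summing the circulant eigenvalue inequalities of $P+H$ and $P-H$, each Fourier index used once with two plus and two minus signs, cancels all unknown entries of $P$ as well as $h_0$ and $h_2$ and leaves exactly $\pm\re h_1\pm\operatorname{Im}h_1\le\tfrac49$. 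The planar LP with the entrywise nonnegativity constraints then gives the correct optimum, and your Gram-matrix certificate for the lower bound checks out (with $h_0=\tfrac49$, your $h_1$, $h_2$, the matrix $H$ has eigenvalues $0,\tfrac{16(\sqrt3-1)}9,\tfrac{16(2-\sqrt3)}9,0$, all entries of the resulting $12\times12$ matrix are nonnegative, the diagonal blocks are $\tfrac13I_3$, and the row sums force a common unit vector $\eta$). What the paper's route buys is the structural fact that every vect-strategy for $C_4^d$ synchronizes without loss (via bipartiteness); what yours buys is independence from bipartiteness, so the block-PSD argument would in principle extend to odd cycles. Two small slips, neither damaging: $h_{xx}=\tfrac49$ holds only in the synchronous case (in general only the Gram matrices of the $z$'s have diagonal $\tfrac49$, which is all your derivation of the constraint actually uses); and at the optimizer only the $\ell=2$ nonnegativity constraint and the edge of the diamond are active, not $\ell=0$, since $\re h_1=\tfrac49(\sqrt3-2)>-\tfrac29$, though the optimizer and the optimal value you state are correct.
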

  \begin{proof}
  The directed $4$-cycle game $(\cl G,\pi)$ has symmetries induced by $(x,y,a,b)\mapsto (x+1,y+1,a+1,b+1)$ and $(x,y,a,b)\mapsto (y,x,b,a)$. The first symmetry implies that $(x,y,a,b)\mapsto(x,y,a+1,b+1)$ and $(x,y,a,b)\mapsto (x+1,y+1,a,b)$ are also symmetries. Hence, there is an optimal density in $\cl C_{vect}$ which is invariant under these symmetries.

  For each $p\in\mathcal{C}_{vect}$, there exist vectors $v_{x,a}$,
  $w_{y,b}$ in some Hilbert space with
  $p(a,b|x,y)=\langle v_{x,a},w_{y,b}\rangle$,
  $\sum_a v_{x,a}=\eta=\sum_b w_{y,b}$ for some unit vector $\eta$,
  for all $x,y$, and the triples of vectors in each of these sums are
  mutually orthogonal. By symmetry, we may assume that, viewed as a $(\bb Z_4\times \bb Z_3)\times (\bb Z_4\times \bb Z_3)$ matrix, we have
  \[ p=\begin{pmatrix}
      X&Y&Z&Y^t\\
      Y^t&X&Y&Z\\
      Z&Y^t&X&Y\\
      Y&Z&Y^t&X
    \end{pmatrix}
  \]
  for some $3\times 3$ circulant matrices $X,Y,Z$ with entries in $[0,1]$ such that $X=X^t$ and $Z=Z^t$, and the sum of the entries of each of $X$, $Y$ and $Z$ is $1$. So $X,Y,Z\in \tfrac13\conv\{I_3,S_3,S_3^2\}$. From this block structure, we see that for $x,y\in \bb Z_4$ we have
  \begin{align*}
    X&=(\langle v_{x,a}|w_{x,b}\rangle)_{a,b},
    \\
    Y&=(\langle v_{x,a}|w_{x+1,b}\rangle)_{a,b},
    \\
    Y^t&=(\langle v_{x,a}|w_{x+3,b}\rangle)_{a,b},
    \\
    Z&=(\langle v_{x,a}|w_{x+2,b}\rangle)_{a,b}.
  \end{align*}
  Now let $v_{x,a}'=w_{x,a}'$ be given by
  \[ v_{x,a}'=w_{x,a}'=
    \begin{cases}
      v_{x,a}&\text{$x$ even,}\\
      w_{x,a}&\text{$x$ odd.}
    \end{cases}
  \]
  Let
  $p'=(\langle v'_{x,a}|v'_{y,b}\rangle)_{(x,a),(y,b)}\in \cl
  C_{vect}$ be the corresponding vector correlation. It follows that
  $p'$ is positive semidefinite and
  \[ p'
    =
    \begin{pmatrix}
      X_0&Y&Z_0&Y^t\\
      Y^t&X_1&Y&Z_1\\
      Z_0^t&Y^t&X_2&Y\\
      Y&Z_1^t&Y^t&X_3\\
    \end{pmatrix}
  \]
  for some diagonal matrices $X_x$, $x\in \bb Z_4$, and some
  $Z_0,Z_1\in M_3(\bb C)$. The point is that the matrix blocks $Y$ and $Y^t$ are
  unchanged from $p$, and the strategy value only depends on these blocks. We
  can average $p'$ over the game symmetries to get $q\in \cl C_{vect}$ with the
  same $Y$ blocks, and each diagonal block will be a diagonal
  circulant matrix with entries adding to $1$, so $\tfrac13 I_3$. Hence,
  \[ q=
    \begin{pmatrix}
      \tfrac13I_3&Y&W&Y^t\\
      Y^t&\tfrac13I_3&Y&W\\
      W&Y^t&\tfrac13I_3&Y\\
      Y&W&Y^t&\tfrac13I_3
    \end{pmatrix}\in  M_{12}(\bb C)^+\]
  for some symmetric circulant matrix $W$, and $\bb E_{\cl G,\pi}(q)=\bb E_{\cl G,\pi}(p)$. So
  $\omega_{vect}(\cl G)=\sup_q\bb E_{\cl G,\pi}(q)$ where the sup is over $q$ of this form (where $Y$ is a circulant matrix). %

  Let $u^*=2(1-\frac1{\sqrt3})\approx 0.845$. We first wish to show that $\omega_{vect}(C_4^d)\le u^*$.  Parameterize the matrices $Y$ and $W$ above as $Y=\tfrac13((1-u-v)I_3+uS_3+vS_3^2)$ for some non-negative $u,v$ with $u+v\le 1$, and $W=\tfrac13(rI_3+\frac{(1-r)}2(S_3+S_3^2))$, where $r\in [0,1]$ (recall that $W=W^t$). Note that $\bb E_{\cl G,\pi}(q)=u$. We will show that $q$ being positive semidefinite implies that $u\le u^*$.

  The $3\times 3$ blocks of $q$ are mutually commuting. The eigenvalues of $3Y$ are $\{1,\lambda,\overline{\lambda}\}$ where $\lambda=1-u-v+u\xi+v\xi^2=1-\frac32(u+v)+i\frac{\sqrt{3}}2(u-v)$, and the eigenvalues of $3W$ are $\{1,s,s\}$ where $s=(3r-1)/2$. Diagonalizing, we find that $3q\cong Q \oplus R\oplus \overline{R}$ where $Q$ is the $4\times 4$ matrix of ones, and $R$ is the $4\times 4$ circulant matrix with first row $(1,\lambda,s,\overline{\lambda})$, i.e., $R=I_4+sS_4^2+2Re(\lambda S_4)$. The eigenvalues of $R$ are
  \[
    1+(-1)^ks+2Re(i^k\lambda),\; k=0,1,2,3.
  \]
  These must all be non-negative for $q$ to be positive semidefinite.
  So $2|Re(\lambda)|\le 1+s$ and $2|Im(\lambda)|\le 1-s$.
  Recalling our paramaterization in terms of $u,v,r$, we obtain the necessary conditions
  \[ u+v\le \frac r2+\frac56,\quad u-v\le \frac{\sqrt{3}}2(1-r),\quad 0\le u,v,r\le 1.\]
  Let $r^*=\frac73-\frac4{\sqrt3}$. For each $r\in [0,1]$, the maximum value of $u$ turns out to be
  \[
    u_{max}(r)=\begin{cases}
      (5+3 r)/6&0\le r\le r^*\\
      (5+3\sqrt{3}-3(\sqrt{3}-1)r)/12&r^*\le r\le 1
    \end{cases}
  \]
  This clearly attains its maximum at $r=r^*$, so $\omega_{vect}(\mathcal{G})\le u(r^*)=u^*$.

  Finally, to see that $\omega_{vect}^s(C_4^d)\ge u^*$,
  let  $R_\theta$ be the $2\times 2$ matrix of rotation by $\theta$, for $(x,a)\in \bb Z_4\times \bb Z_3$ define unit vectors $s_{x,a},t_{x,a}\in \bb R^2$ by
   \[s_{0,0}=t_{0,0}=e_1,\quad s_{x+1,a+1}=R_{\pi/6}s_{x,a},\quad t_{x+1,a+1}=R_{\pi/3}t_{x,a},\]
   and consider the vectors $v_{x,a}\in \bb R^5$,
  \[ v_{x,a}=\frac13\begin{pmatrix}1\\ \alpha s_{x,a}\\\beta t_{x,a}
     \end{pmatrix}
   \]
   where $\alpha=\sqrt{2(\sqrt{3}-1)}$ and $\beta=\sqrt{2(2-\sqrt3)}$. Observe that
   \[ s_{x,a+1}=R_{\pi/6}^4s_{x,a}=R_{2\pi/3}s_{x,a},\quad
     t_{x,a+1}=R_{\pi/3}^4t_{x,a}=R_{4\pi/3}t_{x,a}.\] So
   \[\langle
    v_{x,a}|v_{x,a'}\rangle=\tfrac19(1+\alpha^2\cos(2\pi/3)+\beta^2\cos(4\pi/3))=\tfrac19(1-\tfrac12(\alpha^2+\beta^2))=0\]
   for $a\ne a'$.  Note that
   $\sum_{a}s_{x,a}=(I+R_{2\pi/3}+R_{4\pi/3})s_{x,0}=0$ and similarly
   $\sum_at_{x,a}=0$, so $\sum_a v_{x,a}=(1,0,0,0,0)^t$ for every
   $x$. And one can check by explicit calculation that
   $p(a,b|x,y):=\langle v_{x,a}|v_{y,b}\rangle$ has no negative
   entries. So $\{v_{x,a}\}$ defines a synchronous vect-density $p(a,b|x,y)$. Its
   winning probability for the cycle game is
   \begin{align*}\frac14&\sum_{x\in \bb Z_4,a\in \bb Z_3}p(a,a+1|x,x+1)=3p(0,1|0,1)\\&=3\langle
     v_{0,0}|v_{1,1}\rangle=\tfrac13(1+\alpha^2\cos(\pi/6)+\beta^2\cos(\pi/3))=u^*.\qedhere
   \end{align*}
 \end{proof}

\section{Generalizations of CHSH}

Let $X = Y = A = B = \bb Z_n$ and define $f: \bb Z_n \times \bb Z_n \to \bb Z_n$ by $f(x,y) = xy$, then we obtain a non-synchronous, symmetric group based game on the group $(\bb Z_n, +)$ with rules,
\[ V(x,y,a,b) =1 \iff a -b = xy,\]
which in the case $n=2$ is the CHSH game.  Note that in the definition of this game we are using that $\bb Z_n$ is a ring.

A non-synchronous generalization of the CHSH games, denoted $CHSH_n$, with $a+b=xy$, were introduced by Buhrman and Massey~\cite{BM} and further studied by Bavarian and Shor~\cite{BS}, who also introduced the variant with rules
\[ V(x,y,a,b) =1 \iff ab= x+y.\]

Note that if $p(a, b| x, y) \in C_t$ and we set $q(a, b |x, y) = p(a, -b | x,y)$, then $q \in C_t$. For this reason, the non-synchronous values of our version and the game $CHSH_n$ will be equal. In particular, the synchronous values of our version of the game will give lower bounds on the corresponding value of the non-synchronous game $CHSH_n$. 
 
Even in the case $n=3$ and the uniform distribution on input pairs, the quantum values of this game are not known exactly.  However using the methods developed in the section on directed graphs we obtain exactly the synchronous values of this non-synchronous game.
We begin with an extension to the non-synchronous case of  Theorem~\ref{thm:value1}.

Let $\xi = e^{2 \pi i/3}$, let $\sigma_3 = \{ 1, \xi, \xi^2 \}$, be the cyclic group of order 3 written multiplicatively, let $X$ be a set with $|X| = n$ and let
$f:X \times X \to \sigma_3$, let $\bb F(n,3)$ be the free product of $n$ copies of $\sigma_3$ and let $u_x, \, x \in X, \,\, u_x^3=1$ be the generators. Let $\cl G$ be the corresponding game and let $\pi$ be a density on $X \times X$.

\begin{thm} Let $f:X \times X \to \sigma_3$ be a function, let $\cl G=(X, \sigma_3, f)$ be the corresponding group based game and let $\pi$ be a density on inputs.  If we set
\[H = \frac{1}{3} + \frac{1}{3}  \sum_{x,y} \pi(x,y) [\xi^{f(x,y)} u_x^*u_y + \xi^{-f(x,y)}u_y^*u_x],\]
 then
\[\omega^s_{qc}(\cl G, \pi) = \sup \{ \tau(H)\mid  \tau: C^*(\bb F(n,3)) \to \bb C \text{ is a tracial state } \}.\] Consequently,
\[\omega^s_{qc}(\cl G, \pi) = \inf \{ t:  t1 -H \in SOS+ \cl C \},\]
in $\bb C(\bb F(n,3))$. 
\end{thm}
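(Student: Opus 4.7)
The plan is to mimic the proof of Theorem~\ref{thm:value1}, extended to the non-synchronous setting, and then combine this with the general SOS+C machinery of Theorem~\ref{thm:SOS+C}. By the characterization of $\omega^s_{qc}$ from \cite{KPS, HMNPR}, every synchronous $qc$-density for $\cl G$ has the form $p(a,b|x,y) = \tau(e_{x,a}e_{y,b})$, where $\{e_{x,a}\}$ are PVMs indexed by $a \in \bb Z_3$ in a tracial C*-algebra $(\cl A, \tau)$. The spectral calculus sets up a bijection between such PVMs and order-$3$ unitaries $U_x = e_{x,0} + \xi e_{x,1} + \xi^2 e_{x,2}$, and hence between synchronous $qc$-densities for $\cl G$ and tracial states on $C^*(\bb F(n,3))$. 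So the task for the first equality is to rewrite $\bb E_{\cl G,\pi}(p)$ entirely in terms of these unitaries and confirm that the rewrite is exactly $\tau(H)$.

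The key calculation proceeds as in Theorem~\ref{thm:value1}: set $r_j(x,y) = \sum_{a \in \bb Z_3}\tau(e_{x,a}e_{y,a+j})$, so that $r_j(x,y) \in [0,1]$, $\sum_j r_j(x,y) = 1$, and $\tau(U_x^*U_y) = \sum_{j \in \bb Z_3}\xi^j r_j(x,y)$. Identifying $\sigma_3$ with $\bb Z_3$, the winning condition $ab^{-1} = f(x,y)$ becomes $b = a - f(x,y)$, so
\[ \bb E_{\cl G,\pi}(p) = \sum_{x,y} \pi(x,y)\, r_{-f(x,y)}(x,y). \]
The heart of the proof is the pointwise identity
\[ \tau\bigl[\xi^{f(x,y)}U_x^*U_y + \xi^{-f(x,y)}U_y^*U_x\bigr] = 2\re\bigl(\xi^{f(x,y)}\tau(U_x^*U_y)\bigr) = 3\,r_{-f(x,y)}(x,y) - 1, \]
which follows from $\overline{\tau(U_x^*U_y)} = \tau(U_y^*U_x)$, from $\re(\xi^0) = 1$ and $\re(\xi) = \re(\xi^2) = -\tfrac12$, and from $\sum_j r_j(x,y) = 1$. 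Averaging this identity against $\pi(x,y)$ and adding $\tfrac13$ yields $\tau(H) = \bb E_{\cl G,\pi}(p)$, and taking the supremum over all tracial states on $C^*(\bb F(n,3))$ gives the first equality. This trigonometric rewrite is the main (albeit elementary) obstacle; everything else is bookkeeping.

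The second equality follows by replaying the proof of Theorem~\ref{thm:SOS+C} almost verbatim with $H$ in place of $R_{\cl G,\pi}$. Indeed $H = H^* \in \bb C(\bb F(n,3))$, and we have just established $\omega^s_{qc}(\cl G,\pi) = \sup_\tau \tau(H)$. If $t > \omega^s_{qc}(\cl G,\pi)$, then $\tau(t1-H) \ge \epsilon > 0$ uniformly in $\tau$ for some $\epsilon > 0$; the Cuntz--Pedersen theorem \cite[Theorem~2.9]{CP} then furnishes a $w \in \cl C$ with $t1 - H - w \in C^*(\bb F(n,3))^+ \cap \bb C(\bb F(n,3))$, whereupon Ozawa's theorem \cite[Theorem~1]{Oz14} places $t1 - H - w$ in $SOS$. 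Conversely, any $t < \omega^s_{qc}(\cl G,\pi)$ admits a tracial state with $\tau(t1-H) < 0$, precluding $t1-H \in SOS + \cl C$. The only subtle point is that $\cl G$ is not itself synchronous, but the synchronous $qc$-value and its tracial reformulation remain available in this framework.
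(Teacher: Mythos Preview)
Your proof is correct and follows essentially the same route as the paper's. Both arguments reduce to the identity $2\re\bigl(\xi^{l}\tau(u_x^*u_y)\bigr)=3\sum_{b}\tau(e_{x,b+l}e_{y,b})-1$; you package the right-hand side via the $r_j$ notation from Theorem~\ref{thm:value1}, whereas the paper expands $u_x^*u_y$ in the $e_{x,a}$ directly and regroups, but the computation is the same. Your treatment of the SOS$+\cl C$ consequence is slightly more explicit than the paper's, which simply records it as a corollary of the first equality together with Theorem~\ref{thm:SOS+C}.
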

\begin{proof} We have that for any trace $\tau$,
  \begin{align*}
    \tau(\xi^lu_x^*u_y+\xi^{-l}u_y^*u_x)&=2\re\tau(\xi^l u_x^*u_y)\\&=\sum_{a,b\in \bb Z_3}2\re\tau(\xi^{l-a+b}e_{x,a}e_{y,b})
    \\&=\sum_{c\in \bb Z_3}2\re(\xi^c)\sum_{b\in \bb Z_3}\tau(e_{x,b+l-c}e_{y,b})
    \\&=\sum_{b\in \bb Z_3} 2\tau(e_{x,b+l}e_{y,b})-\tau((e_{x,b+l-1}+e_{x,b+l-2})e_{y,b})
    \\&=\sum_{b\in \bb Z_3} 2\tau(e_{x,b+l}e_{y,b})-\tau((1-e_{x,b+l})e_{y,b})
    \\&=-1+3\sum_{b\in \bb Z_3}\tau(e_{x,b+l}e_{y,b}).
  \end{align*}
  Thus, if $p(a,b|x,y) = \tau(e_{x,a} e_{y,b})$ we have that
  \begin{align*}
    \bb E_{\cl G,\pi}(p)&=\sum_{x,y\in X}\pi(x,y)\sum_{b\in \bb Z_3}\tau(e_{x,b+f(x,y)}e_{y,b})
    \\&=\frac13+\frac13\sum_{x,y\in X}\pi(x,y)\tau(\xi^{f(x,y)}u_x^*u_y+\xi^{-f(x,y)}u_y^*u_x)=\tau(H).
  \end{align*}
So\[ \omega_{qc}^s(\cl G, \pi) = \sup \{ \tau(H) : \tau \text{ tracial state } \}.\qedhere\]
\end{proof}

 We now return to the generalized CHSH game on the ring $\bb Z_3$.

 \begin{thm}
 Let $f: \bb Z_3 \times \bb Z_3 \to \bb Z_3$ be given by $f(x,y) = xy$, and consider the group based game $\cl G$ with rules,
\[ V(x,y,a,b) =1 \iff a -b = xy,\]
and $\pi$ the uniform distribution.
Then
\[\omega_{det}(\cl G, \pi) = 2/3<\omega_{q_1}(\cl G,\pi),\] and
\[\omega^s_{det}(\cl G, \pi) = \omega^s_{qc}(\cl G,\pi)=5/9.\]
\end{thm}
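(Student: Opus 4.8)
The plan is to compute all four quantities separately, using the ring structure of $\bb Z_3$ and the theorems developed above. For the classical values, the key observation is that a deterministic strategy is a pair of functions $h_A,h_B:\bb Z_3\to\bb Z_3$ (with $h_A=h_B$ in the synchronous case), and winning on input $(x,y)$ means $h_A(x)-h_B(y)=xy$. The right way to package this is to note that the nine winning conditions, one for each input pair, form a linear system over $\bb Z_3$; computing $\omega_{det}$ amounts to asking how many of these nine equations can be simultaneously satisfied. First I would dispose of $\omega_{det}$: a short case analysis (or the observation that the inputs $x=0$ and $y=0$ force $h_A(0)=h_B(0)$ and $h_A(x)=h_B(0)$, $h_B(y)=h_A(0)$ for all $x,y$, so $h_A,h_B$ are each constant, which wins exactly when $xy=0$, i.e.\ on $5$ of the $9$ pairs) — wait, one must be slightly more careful since we are maximizing, not requiring all-zero. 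I would argue that any strategy satisfying five or more of the equations can be boosted, or simply exhibit a strategy winning six pairs and show seven is impossible. A clean approach: fix $h_A,h_B$ and count; parametrize by the six values $h_A(0),h_A(1),h_A(2),h_B(0),h_B(1),h_B(2)$, and observe the winning set $\{(x,y):h_A(x)-h_B(y)=xy\}$ has size equal to the number of $(x,y)$ with a fixed difference pattern; a direct enumeration shows the maximum is $6$, giving $\omega_{det}=2/3$. For the synchronous case $h_A=h_B=h$, the equation at $(x,x)$ reads $0=x^2$, which fails for $x=1,2$, so at most $5$ of the $9$ can be won, and $h\equiv 0$ (or any constant) wins exactly the five pairs with $xy=0$; hence $\omega^s_{det}=5/9$.

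To separate $\omega_{det}$ from $\omega_{q_1}$, I would invoke Corollary~\ref{cor:q1}, or more simply Theorem~\ref{thm:value1}'s philosophy: exhibit a $q_1$-strategy beating $2/3$. Note that the non-synchronous value of this game equals the non-synchronous value of the standard $CHSH_3$ game (using $q(a,b|x,y)=p(a,-b|x,y)$ as in the excerpt), and it is classical knowledge that $CHSH_3$ has a quantum strategy beating its classical value; alternatively I would just write down explicit rank-one PVMs in $\bb C^3$ using the Fourier-type vectors $v_{x,a}$ analogous to those in the directed-cycle proof and compute $\bb E_{\cl G,\pi}(p)>2/3$ directly. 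Since this is only a strict inequality, a crude estimate suffices; the cleanest route is to cite that $\omega_q(CHSH_3)>\omega_{loc}(CHSH_3)=2/3$ and note $\omega_{q_1}\ge$ the entangled value obtained from a maximally entangled state with rank-one projective measurements, which is known to beat $2/3$ for $CHSH_3$.

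The substantive computation is $\omega^s_{qc}(\cl G,\pi)=5/9$. Here I would use the trace characterization: with $f(x,y)=xy$ identified with $\xi^{xy}\in\sigma_3$, the operator $H$ from the preceding theorem becomes
\[
H=\frac13+\frac1{27}\sum_{x,y\in\bb Z_3}\big(\xi^{xy}u_x^*u_y+\xi^{-xy}u_y^*u_x\big),
\]
and $\omega^s_{qc}=\sup_\tau\tau(H)=\inf\{t:t1-H\in SOS+\cl C\}$ by Theorem~\ref{thm:SOS+C}. Separating the $x=0$ and $y=0$ terms (where $\xi^{xy}=1$) from the rest, and using that $u_0^*u_0=1$, I would organize $H-\tfrac13$ as a sum of terms $u_x^*u_y$ over $x,y\in\bb Z_3$ with coefficients determined by $\xi^{xy}$; the upper bound $5/9$ then comes from writing $\tfrac59\cdot 1-H$ as an explicit element of $SOS+\cl C$. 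The natural candidate is a single square: find scalars $c_x$ (likely $c_0,c_1,c_2$ forming a Fourier-type vector) so that, modulo commutators, $\tfrac59 - H = \tfrac1{9}\big(\sum_x c_x u_x\big)^*\big(\sum_x c_x u_x\big)$ up to a positive multiple — expanding the right side gives $\sum_{x,y}\bar c_x c_y u_x^*u_y$, and matching coefficients against $\xi^{-xy}$ (using $u_x^*u_x=1$ to absorb diagonal terms into the constant) should pin down the $c_x$ and force the constant to be exactly $5/9$. The matching lower bound $\omega^s_{qc}\ge\omega^s_{det}=5/9$ is free. The main obstacle I anticipate is the $SOS+\cl C$ certificate: one must choose the commutator correction carefully so that what remains is genuinely a sum of squares in $\bb C(\bb F(3,3))$ (not merely positive), and verify that no smaller constant works — the latter following from the already-established lower bound $5/9$, so in fact the certificate only needs to \emph{achieve} $5/9$, which makes the problem of finding it quite rigid and hence tractable by the coefficient-matching described above.
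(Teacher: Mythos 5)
Your proposal is correct in substance, and for the one genuinely nontrivial step it takes a different route from the paper. The paper handles the classical values exactly as you do (exhibit the strategies; the upper bound $2/3$ is a finite check), and it gets the strict inequality $\omega_{det}<\omega_{q_1}$ the same way you suggest, by citation: it quotes the exact value $\omega_{q_1}(\cl G,\pi)=\big(\tfrac{1+2\cos(2\pi/9)}{3}\big)^2\approx 0.7124$ from the CHSH$_3$ literature (and your remark that the relabelling $q(a,b|x,y)=p(a,-b|x,y)$ preserves $C_{q_1}$ is exactly the point needed to transfer it). The divergence is in the bound $\omega^s_{qc}(\cl G,\pi)\le 5/9$: the paper applies Corollary~\ref{cor:cost-matrix}, exhibiting the $9\times9$ cost matrix $C_{\cl G,\pi}$ and an explicit cost-free matrix $Z\in\cl Z(\cl G)$ with $3\lambda_{max}(C+Z)=5/9$, i.e.\ it works at the level of the projections $e_{x,a}$; you instead work at the level of the three order-$3$ unitaries $u_x$ via the $H$-operator theorem for $\sigma_3$-based games and Theorem~\ref{thm:SOS+C}. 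Your route reduces the certificate to a $3\times3$ positivity problem and is, if anything, cleaner; the paper's route has the expository benefit of illustrating Corollary~\ref{cor:cost-matrix} and the role of $\cl Z(\cl G)$. Both give the same bound, and the lower bound $5/9$ is free from the constant synchronous strategy in either case.

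Two concrete remarks. First, your coefficient-matching ansatz does succeed, and in fact no commutator correction is needed: with $\pi$ uniform one has
\[
\tfrac59\,1-H=\tfrac1{27}\,(2u_0-u_1-u_2)^*(2u_0-u_1-u_2)\in SOS,
\]
which one verifies by expanding, using $u_x^*u_x=1$ and noting that the coefficient of $u_x^*u_y$ in $H-\tfrac13$ is $\tfrac1{27}\big(\xi^{xy}+\xi^{-xy}\big)=\tfrac2{27}\cos(2\pi xy/3)$, so the diagonal contributions cancel ($2-1-1=0$) and the off-diagonal coefficients $2,2,-1$ are matched by the rank-one matrix $(2,-1,-1)^t(2,-1,-1)$. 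Second, a small slip: for $\omega^s_{det}$ you argue that the failure of the diagonal equations at $x=1,2$ shows at most $5$ of the $9$ synchronous equations can hold, but that reasoning only gives at most $7$; to get $5$ you need either a short case analysis on $(h(0),h(1),h(2))$ (note $(1,2)$ and $(2,1)$ cannot both be won, and winning both off-diagonal pairs through $0$ forces $h$ constant) or, as in your own plan, simply to invoke $\omega^s_{det}\le\omega^s_{qc}\le 5/9$ from the certificate above. With that repair the argument is complete.
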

\begin{proof}
It is clear that $\omega_{det}^s(\cl G,\pi)$ is attained at any constant strategies, which yields the value $5/9$.
Similarly, $\omega_{det}(\cl G,\pi)=2/3$, is attained at (for example) $g,h:\bb Z_3\to \bb Z_3$, $g:0\mapsto 0,1\mapsto 1,2\mapsto1$ and $h=1-g$.

It follows from~\cite{JLLNL} (see also~\cite{KSTBSA}) that $\omega_{q_1}(\cl G,\pi)=(\frac{1+2\cos(2\pi/9)}3)^2 %
\approx 0.7124$. %

The upper bound for $\omega_{qc}^s(\cl G,\pi)$ follows from Corollary~\ref{cor:cost-matrix}. Let
\[ C=\frac1{18}\left(
\begin{smallmatrix}
 2 & 0 & 0 && 2 & 0 & 0 && 2 & 0 & 0 \\
 0 & 2 & 0 && 0 & 2 & 0 && 0 & 2 & 0 \\
 0 & 0 & 2 && 0 & 0 & 2 && 0 & 0 & 2 \\[3pt]
 2 & 0 & 0 && 0 & 1 & 1 && 0 & 1 & 1 \\
 0 & 2 & 0 && 1 & 0 & 1 && 1 & 0 & 1 \\
 0 & 0 & 2 && 1 & 1 & 0 && 1 & 1 & 0 \\[3pt]
 2 & 0 & 0 && 0 & 1 & 1 && 0 & 1 & 1 \\
 0 & 2 & 0 && 1 & 0 & 1 && 1 & 0 & 1 \\
 0 & 0 & 2 && 1 & 1 & 0 && 1 & 1 & 0
\end{smallmatrix}\right),\quad Z=-\frac1{27}\left(
\begin{smallmatrix}
 4 & 0 & 0 && 0 & 0 & 0 && 0 & 0 & 0 \\
 0 & 4 & 0 && 0 & 0 & 0 && 0 & 0 & 0 \\
 0 & 0 & 4 && 0 & 0 & 0 && 0 & 0 & 0 \\[3pt]
 0 & 0 & 0 && -2 & 3 & 3 && 0 & 0 & 0 \\
 0 & 0 & 0 && 3 & -2 & 3 && 0 & 0 & 0 \\
 0 & 0 & 0 && 3 & 3 & -2 && 0 & 0 & 0 \\[3pt]
 0 & 0 & 0 && 0 & 0 & 0 && -2 & 3 & 3 \\
 0 & 0 & 0 && 0 & 0 & 0 && 3 & -2 & 3 \\
 0 & 0 & 0 && 0 & 0 & 0 && 3 & 3 & -2
\end{smallmatrix}\right)
.\]
Ordering $X\times A=\bb Z_3\times \bb Z_3$ lexicographically, we have $C=C_{\cl G,\pi}$ and $Z\in \cl Z(\cl G)$, so $\omega_{qc}^s(\cl G,\pi)\le |X|\lambda_{max}(C+Z)=5/9$, as required.
\end{proof}

\begin{remark}
  If we instead take $Z=0$ in the previous proof then we only obtain a trivial
  upper bound, $\omega_{qc}^s(\cl G,\pi)\le
  |X|\lambda_{max}(C)=1$. This illustrates the importance of considering the set
  $\cl Z(\cl G)$ in Corollary~\ref{cor:cost-matrix}.
\end{remark}

 \section{The Commutator Game}

 Let $\Omega$ be a finite group.  It is natural to study games that measure how far $\Omega$ is from a commutative group. To this end, let $X = Y = A = B =\Omega$ and consider the group based game obtained by setting $f(x,y) = xyx^{-1}y^{-1} = (xy)(yx)^{-1}$. This defines a synchronous, symmetric $\Omega$-based game. We have a perfect strategy for this game if and only if there is a function $w: \Omega \to \Omega$ such that
 \[ w(x) w(y)^{-1} = (xy)(yx)^{-1}.\]
 Note that $w(x)$ is a perfect strategy if and only if $w(x) w(e)^{-1}$ is a perfect strategy, so we may assume that $w(e) =e$.
 Hence
 setting $y=e$ we obtain, that there is a perfect strategy if and only if there is a function $w$ satisfying $w(x) = xex^{-1}e^{-1} w(e)=e$. Hence a perfect deterministic strategy exists if and only if
 \[ e= w(x)w(y)^{-1} = xyx^{-1} y^{-1}, \forall x,y,\]
 which is if and only if $\Omega$ is abelian.

 So when $\Omega$ is non-abelian even the synchronous deterministic value is a potentially interesting measure of the level of non-commutativity.

Let us take the uniform distribution on $\Omega \times \Omega$, again we can assume that $w(e) =e$ and for the synchronous value we are trying to maximize the number of times that
  \[ w(x)w(y)^{-1} = xyx^{-1} y^{-1}.\]

  If one uses the deterministic strategy, $w(x) = e, \forall x$, then the value of this strategy multiplied by $|\Omega|^2$ is just a count of the number of pairs of elements that commute, a classically computed number for a group. In fact, the number of elements that commute is a multiple of the number of conjugacy classes in the group, often denoted $k(\Omega)$.

  It is natural to expect that $w(x) =e$ is the optimal deterministic strategy.  But calculations show that this is false!

Consider the case that $\Omega=S_3$ the group of permutations of 3 elements.  This group has 6 elements, and of the $36$ pairs of group elements, $18$ are the identity. Thus, if we take the deterministic strategy $w(x) = e$, then the expected value is 1/2.

  For generators of this group, let us use the cyclic permutation and the flip:
  \[ g: 1 \to 2 \to 3 \to 1, \,\, f: 1 \to 1, 2\to 3 \to 2.\]

  Consider the deterministic strategy given by:
  \[w(e) = w(f) = w(fg) = w(fg^2) =e, \,\, w(g) = g^2, \,\, w(g^2) =g.\]
  Then the number of times that $w(x) w(y)^{-1} = xyx^{-1} y^{-1}$ is 24.
  An exhaustive computer search shows that this is an optimal deterministic
  strategy.

  For a general finite group, we have no idea how to find the function $w$ that maximizes the cardinality of the set
  \[ \{ (x,y) :  w(x) w(y)^{-1} = xyx^{-1} y^{-1}, \,\, x,y \in \Omega \}, \]
  or what this number means as a parameter of the group.

  A similar problem exists if instead of $f(x,y) = xy x^{-1} y^{-1}$ one considers the game defined by $f(x,y) = yx$.

 Once one is able to compute this value for groups, it would be interesting to see if a quantum advantage exists for this game and attempt to find an interpretation of this ``quantum'' count of commutators.
 
 So we end with some questions.
 
 \begin{prob} Given a finite group, $\Omega$, can one identify $|\Omega|^2 \omega_{det}^s(\cl G, \pi)$ in group-theoretic terms? Are there groups that exhibit a quantum commutativity advantage, i.e., for which $\omega^s_{qc}(\cl G, \pi) > \omega^s_{det}(\cl G, \pi)$?
 \end{prob}

\end{document}